\numberwithin{equation}{section}
\newtheorem{theorem}{Theorem}[section]
\newtheorem{prop}[theorem]{Proposition}
\newtheorem{lemma}[theorem]{Lemma}
\newtheorem{cor}[theorem]{Corollary}
\theoremstyle{definition}
\newtheorem{definition}[theorem]{Definition}
\newtheorem{example}[theorem]{Example}
\newtheorem{remark}[theorem]{Remark}
\newcommand{\R}{\mathbb{R}}
\newcommand{\C}{\mathbb{C}}
\newcommand{\D}{\mathbb{D}}
\newcommand{\cX}{\mathcal{X}}
\newcommand{\cY}{\mathcal{Y}}
\newcommand{\cH}{\mathcal{H}}
\newcommand{\cK}{\mathcal{K}}
\newcommand{\cC}{\mathcal{C}}
\newcommand{\cR}{\mathcal{R}}
\newcommand{\fS}{\mathfrak{S}}
\newcommand{\supp}{{\rm supp}\,}
\def\coh{\rm{coh}}
\DeclareMathOperator*{\esssup}{ess\,sup}
\DeclareMathOperator*{\spn}{span}
\def\@tocline#1#2#3#4#5#6#7{\relax
	\ifnum #1>\c@tocdepth 
	\else
	\par \addpenalty\@secpenalty\addvspace{#2}%
	\begingroup \hyphenpenalty\@M
	\@ifempty{#4}{%
		\@tempdima\csname r@tocindent\number#1\endcsname\relax
	}{%
		\@tempdima#4\relax
	}%
	\parindent\z@ \leftskip#3\relax \advance\leftskip\@tempdima\relax
	\rightskip\@pnumwidth plus4em \parfillskip-\@pnumwidth
	#5\leavevmode\hskip-\@tempdima
	\ifcase #1
	\or\or \hskip 1em \or \hskip 2em \else \hskip 3em \fi%
	#6\nobreak\relax
	\hfill\hbox to\@pnumwidth{\@tocpagenum{#7}}\par
	\nobreak
	\endgroup
	\fi}
\begin{document}

	\title[Operator-valued Schatten spaces and Quantum entropies]{Operator-valued Schatten spaces and Quantum entropies}
	
	\author[S. Beigi \& M. M. Goodarzi]{Salman Beigi \& Milad M. Goodarzi}

	\address[Salman Beigi]{School of Mathematics, Institute for Research in Fundamental
		Sciences (IPM), P.O. Box 19395-5746, Tehran, Iran}
	\email{salman.beigi@gmail.com}
	
	\address[Milad M. Goodarzi]{School of Mathematics, Institute for Research in Fundamental
		Sciences (IPM), P.O. Box 19395-5746, Tehran, Iran}
	\email{milad.moazami@gmail.com}

	\keywords{}
	\subjclass[2010]{}

	\begin{abstract}
		Operator-valued Schatten spaces were introduced by G.\ Pisier as a noncommutative counterpart of vector-valued $\ell_p$-spaces. This family of operator spaces forms an interpolation scale which makes it a powerful and convenient tool in a variety of applications. In particular, as the norms coming from this family naturally appear in the definition of certain entropic quantities in Quantum Information Theory (QIT), one may apply Pisier's theory to establish some features of those quantities. Nevertheless, it could be quite challenging to follow the proofs of the main results of this theory from the existing literature. In this article, we attempt to fill this gap by presenting the underlying concepts and ideas of Pisier's theory in a self-contained way which we hope to be more accessible, especially for the QIT community at large. Furthermore, we describe some applications of this theory in QIT. In particular, we prove a new uniform continuity bound for the quantum conditional R\'enyi entropy.
	\end{abstract}

	\maketitle

	\tableofcontents

	\section{Introduction}
	
	In recent years, tools from noncommutative analysis have proven fruitful in understanding the primitives of Quantum Information Theory (QIT), where--mostly due to the noncommutative nature of the theory--even finding viable quantum analogs of certain information-theoretic quantities turns out to be nontrivial, and various challenges arise to be overcome (see, e.g.,~\cite{DJKR, DB14, WWY14, BDR20, BJLRF21}). Analytical methods have been exploited in classical information theory just as well, where quantum phenomena are nonexistent (see, e.g.,~\cite{RaginskySason, LCCV19, KamathAnantharam, LvHV17}).

	R\'enyi divergence (or R\'enyi relative entropy) is a useful concept in information theory \cite{VanErfenHarremoes}.
	A quantum analog of this entropic measure is tricky to define; since arbitrary density operators $\rho, \sigma$ do not necessarily commute, there are multiple candidates for a definition of quantum R\'enyi relative entropy. One way is to define it as the so-called \emph{sandwiched R\'enyi divergence} \cite{WWY14, MDSFT13} given by
	$$
	D_\alpha(\rho\| \sigma) = \frac{\alpha}{\alpha-1}\log \Big\|  \sigma^{\frac{1-\alpha}{2\alpha}}\rho\sigma^{\frac{1-\alpha}{2\alpha}}\Big\|_\alpha,
	$$
	where $\|\cdot\|_\alpha$ denotes the $\alpha$-Schatten norm.
	For applicability and operational significance \cite{WWY14, MosonyiOgawa}, this quantity has to satisfy certain properties (e.g., data-processing inequality)~\cite{MDSFT13,  FrankLieb, Beigi13}. Since the Schatten norms form an interpolation scale, the above expression of sandwiched R\'enyi divergence allows one to utilize the complex interpolation method in establishing the aforementioned properties~\cite{Beigi13}.

	Consider a composite quantum system with bipartite Hilbert space $\cH_X\otimes \cH_Y$. In analogy with the classical case, a quantum generalization of the conditional R\'enyi entropy can be defined, for a density operator $\rho_{XY}$ on $\cH_X\otimes \cH_Y$, using the (sandwiched) R\'enyi divergence \cite{MDSFT13}:
	$$
	H_\alpha(X|Y)_\rho := - \inf D_\alpha(\rho_{XY}\| I_X\otimes \sigma_Y),
	$$
	where the infimum runs over all density operators $\sigma_Y$ acting on $\cH_Y$. By expressing this quantity in terms of the Schatten norms as above, we observe that the norms on the \emph{operator-valued Schatten spaces}\footnote{These are the discrete case of non-commutative vector valued $L_p$-spaces of \cite{Pisier2}.} appear in the guise of the divergence (see below). These spaces were first introduced by Pisier~\cite{Pisier2} as a noncommutative counterpart to the vector-valued space $\ell_p(\mathcal{X})$ of $p$-summable sequences in a Banach space $\mathcal{X}$ for $1\leq p\leq\infty$. Roughly speaking, Pisier's theory interweaves the interpolating structure of Schatten classes $\mathcal{S}_p$ with an operator space structure\footnote{This operator space structure is essential for the theory to have all the desired properties, such as duality.} on $\mathcal{X}$ to construct a whole family of operator spaces, denoted by $\mathcal{S}_p[\mathcal{X}]$.\footnote{The precise definition of $\mathcal{S}_p[\mathcal{X}]$ is given in Section \ref{sec:NCVV}.}

	For our purposes in this work, the most relevant norms are obtained when $\mathcal{X}$ is taken to be the Schatten class operator space $\mathcal{S}_q$, $1\leq q \leq \infty$. In this way we obtain a two-parameter family of spaces $\mathcal{S}_p[\mathcal{S}_q]$, $1\leq p, q\leq \infty$, that form an interpolation scale and whose elements act on a tensor product Hilbert space, say $\cH_X\otimes \cH_Y$. Assuming $p\leq q$, the norm of an operator $M_{XY}$ in $\mathcal{S}_p[\mathcal{S}_q]$ is quantitatively characterized by the variational expression~\cite{Pisier2,Junge96}
	\begin{align}\label{eq:p-q-norm-intro}
		\|M_{XY}\|_{\mathcal{S}_p[\mathcal{S}_q]}=\|M_{XY}\|_{(p, q)} := \inf \|A_X\|_{\mathcal{S}_{2r}}\|T_{XY}\|_{\mathcal{S}_q}\|B_Y\|_{\mathcal{S}_{2r}},
	\end{align}
	where the infimum is taken over all factorizations $M_{XY}= (A_X\otimes I_Y)T_{XY}(B_X\otimes I_Y)$, and $r$ is defined by
	$$
	\frac{1}{p}=\frac 1q+\frac 1r.
	$$
	The variational formula \eqref{eq:p-q-norm-intro} is the subtle bridge that links the norm on $\mathcal{S}_p[\mathcal{S}_q]$ with quantum entropies as it allows to formulate the quantum conditional R\'enyi entropy in terms of the $(1, \alpha)$-norm:
	\begin{align*}
		H_\alpha(X|Y)_\rho = \frac{\alpha}{1-\alpha} \log \big\|\rho_{YX}\big\|_{(1, \alpha)}.
	\end{align*}
	This new formulation makes it possible, as we will see, to utilize the power of methods and techniques from interpolation theory and operator space theory to investigate characteristics of these entropic measures. From a broader perspective, this connection suggests potential applications of Pisier's theory in other areas of QIT as well. 
	
	Based on the above observation, we believe that it is worth the effort to make the theory of operator-valued Schatten spaces more accessible, particularly to the QIT community at large; this is our primary goal in this paper. To this end, we describe all the necessary constituent components of the theory and provide detailed proofs of the major results in a self-contained way, whereas the unnecessary parts (with respect to QIT) are dismissed. Although for the most parts we do not confine ourselves to finite-dimensional spaces, we will not refrain from doing so when it contributes to simplicity of presentation and clarity of the main ideas. To showcase advantages of this theory, we will also illustrate how it facilitates the study of quantum R\'enyi quantities.
	
	We remark that following Pisier~\cite{Pisier2} we employ the theory of operator spaces to define operator-valued Schatten spaces and prove their properties. An alternative approach for defining these spaces would be taking the concrete formula~\eqref{eq:p-q-norm-intro} as the starting point, without appealing to operator spaces. This approach is already taken in~\cite{JungeParcet},\footnote{This approach was also raised by Marco Tomamichel and Alexander McKinlay in a discussion with the first author.} yet it would not necessarily reduce the burden since, for instance, proving from scratch that $\|\cdot\|_{(p, q)}$ satisfies the triangle inequality is not straightforward---let alone proving the fact that these norms form an interpolation scale (see \cite{JungeParcet} for details). Moreover, so as to be able to utilize the full power of operator-valued Schatten spaces in applications, we need the theory of operator spaces.

	The rest of this paper is organized as follows. In Section~\ref{sec:interpolation} we present a concise description of the complex interpolation method due to Calder\'{o}n \cite{Cald}, and point out its basic properties. Next, in Section~\ref{sec:operator-spaces} we review the theory of operator spaces and define the key concepts that will be used in the sequel. More specifically, we first introduce the dual, column and row operator spaces, and explain why the latter two are not isomorphic as operator spaces. We then introduce Pisier's interpolation in the category of operator spaces, from which we derive the natural operator space structure of Schatten classes. Furthermore, we introduce the Haagerup tensor product of operator spaces and conclude the section with the crucial result that the Haagerup tensor product ``commutes" with interpolation. In Section~\ref{sec:NCVV} we finally proceed to give the definition of operator-valued Schatten spaces as a certain Haagerup tensor product. We will show that they form an interpolation scale and prove the aforementioned variational characterization for their norms. Section~\ref{sec:applications} is devoted to applications of these spaces in QIT. Here, we provide short proofs of some existing results, and prove some new ones based on the theory developed in the previous sections. In particular, we establish a new uniform continuity bound for the quantum conditional R\'enyi entropy.
	
	We should finally mention that for the convenience of a reader who might be interested only in applications of the above-mentioned material, we have strived to include a minimal number of proofs in the main body of the paper. Nevertheless, to make it self-contained, several appendices follow the main sections to provide some of the proofs.

	\section{Interpolation theory: The complex method}\label{sec:interpolation}
		Recently, interpolation theory has proven useful in establishing properties of quantum entropic quantities and studying properties of quantum channels. 	In this section, we present a brief overview of the complex interpolation method due to Calder\'{o}n~\cite{Cald}. We refer the reader to ~\cite{BerghLofstrom, Lunardi} for more details on interpolation theory.
		
	Banach spaces in this paper are all considered to be complex.
	For Banach spaces $(\cX,\|\cdot\|_{\mathcal{X}})$ and $(\cY,\|\cdot\|_{\mathcal{Y}})$, 
	unless otherwise explicitly stated, the equality $\mathcal{X}=\mathcal{Y}$ means that $\mathcal{X},\mathcal{Y}$ are the same spaces (up to linear isomorphism) with \emph{equivalent norms} in the sense that there are absolute constants $c,C>0$ such that $c\|\cdot\|_{\mathcal{X}}\leq \|\cdot\|_{\mathcal{Y}}\leq C\|\cdot\|_{\mathcal{X}}$.
	 
	We denote by $B(\mathcal{X},\mathcal{Y})$ the Banach space of all bounded linear operators from $\mathcal{X}$ to $\mathcal{Y}$, equipped with the usual operator norm
	$$
	\|T\|_{B(\mathcal{X},\mathcal{Y})}:=\sup\big\{\|Tx\|_{\mathcal{Y}}:x\in {X},\|x\|_{\mathcal{X}}\leq 1\big\}.
	$$
	When $\cX=\cY$, let $B(\cX)=B(\cX, \cY)$. We denote the dual of $\cX$ by $\cX^*= B(\cX, \C)$.

	We write $\mathcal{X}\hookrightarrow \mathcal{Y}$ if there is a continuous embedding from $\mathcal{X}$ into $\mathcal{Y}$.
	A couple of Banach spaces $(\mathcal{X},\mathcal{Y})$ is said to be \emph{compatible} if there is a Hausdorff topological vector space $\mathcal{V}$ such that $\mathcal{X},\mathcal{Y}\hookrightarrow \mathcal{V}$. 
	For such a compatible couple we may  consider $\mathcal{X}+\mathcal{Y}$ as a linear subspace of $\mathcal{V}$, and for any $v\in \cX+\cY$ define
	$$
	\|v\|_{\mathcal{X}+\mathcal{Y}}:=\inf\{\|x\|_{\mathcal{X}}+\|y\|_{\mathcal{Y}}:x\in\mathcal{X},y\in\mathcal{Y},v=x+y\}.
	$$
	This norm turns $\mathcal{X}+\mathcal{Y}$ into a Banach space.

	In order to prepare the setup, consider the strip 
	$$\mathbb{S}:=\{z\in\mathbb{C}:0\leq\text{Re}(z)\leq1\},$$ 
	in the complex plane, and let $\mathcal{X}$ be a Banach space. A function $f:\mathbb{S}\rightarrow \mathcal{X}$ is said to be \emph{holomorphic} in the interior of $\mathbb{S}$, if for every $x^*\in \mathcal{X}^*$ the complex function $z\mapsto \langle x^*, f(z)\rangle  $ is holomorphic in the interior of $\mathbb{S}$. It is not hard to verify that the \emph{maximum principle} holds for such holomorphic functions. Namely,  for a bounded continuous function $f:\mathbb{S}\rightarrow \mathcal{X}$ that is holomorphic in the interior of $\mathbb{S}$, we have
	\begin{equation*}
		\sup_{z\in S}\|f(z)\|_{\mathcal{X}}\leq\max\big\{\sup_{t\in\mathbb{R}}\|f(it)\|_{\mathcal{X}},\sup_{t\in\mathbb{R}}\|f(1+it)\|_{\mathcal{X}}\big\}.
	\end{equation*}

	Let $(\mathcal{X},\mathcal{Y})$ be a compatible couple of complex Banach spaces. We denote by $\mathscr{F}(\mathcal{X},\mathcal{Y})$ the space of all bounded continuous functions $f:\mathbb{S}\to \mathcal{X}+\mathcal{Y}$ that are holomorphic in the interior of $\mathbb{S}$ such that
	$$
	t\mapsto f(it)\in \mathcal{X} \ \ \ \text{ and } \ \ \ t\mapsto f(1+it)\in \mathcal{Y}
	$$
	are continuous functions.  We equip $\mathscr{F}(\mathcal{X},\mathcal{Y})$ with the norm
	$$
	\|f\|_{\mathscr{F}(\mathcal{X},\mathcal{Y})}:=\max\big\{\sup_{t\in\mathbb{R}}\|f(it)\|_{\mathcal{X}},\sup_{t\in\mathbb{R}}\|f(1+it)\|_{\mathcal{Y}}\big\}.
	$$
	Using the maximum principle mentioned above, it is not difficult to verify that $\mathscr{F}(\mathcal{X},\mathcal{Y})$ 
	is a Banach space.
	
	We are now ready to define the complex interpolation spaces $[\mathcal{X},\mathcal{Y}]_{\theta}$, $0\leq \theta\leq 1$.
	
	\begin{definition}
		Let $(\cX, \cY)$ be a compatible couple of Banach spaces. 
		For every $0<\theta<1$, we define $[\mathcal{X},\mathcal{Y}]_{\theta}$ to be the linear space
		$$
		[\mathcal{X},\mathcal{Y}]_{\theta}:=\big\{f(\theta):f\in\mathscr{F}(\mathcal{X},\mathcal{Y})\big\},
		$$
		endowed with the norm
		$$
		\|v\|_{[\mathcal{X},\mathcal{Y}]_{\theta}}:=\inf\big\{\|f\|_{\mathscr{F}(\mathcal{X},\mathcal{Y})}:f(\theta)=v, f\in\mathscr{F}(\mathcal{X},\mathcal{Y})\big\}.
		$$
		It can be verified that $[\mathcal{X},\mathcal{Y}]_{\theta}$ is a Banach space. 
		
	\end{definition}

	From the definition it is clear that $[\mathcal{X},\mathcal{X}]_{\theta}=\mathcal{X}$ with identical norms. Moreover, we have (\cite[Theorem 4.2.1]{BerghLofstrom})
	$$	[\mathcal{X},\mathcal{Y}]_{\theta}=[\mathcal{Y},\mathcal{X}]_{1-\theta}.$$
	\emph{Reiteration theorem} is another interesting property of interpolation spaces. Let $(\cX_0, \cX_1)$ be a compatible couple and let $\cX_\theta=[\cX_0, \cX_1]_\theta$, $0<\theta<1$. Then, a special case of the reiteration theorem (\cite[Theorem 4.6.1]{BerghLofstrom}) says that if $\cX_0\subseteq \cX_1$ as vector spaces, for $0\leq \theta_0, \theta_1\leq 1$ we have
	\begin{equation}\label{eq:reiteration}
		[\mathcal{X}_{\theta_0},\mathcal{X}_{\theta_1}]_\theta=\mathcal{X}_{(1-\theta)\theta_0+\theta\theta_1}, \qquad 0<\theta<1.
	\end{equation}
	Interpolation spaces behave nicely under taking the Banach dual as well. Suppose that one of the spaces $\cX_0, \cX_1$ of a compatible couple $(\cX_0, \cX_1)$ are finite-dimensional (or more generally, reflexive). Then, letting $\cX_\theta=[\cX_0, \cX_1]_\theta$, $0<\theta<1$, we have
	$$\cX_\theta^* = [\cX^*_0, \cX^*_1]_\theta,$$
	where $\cX^*_\theta$ is the Banach dual of $\cX_\theta$ (see \cite[Corollary 4.5.2]{BerghLofstrom}).

	\begin{remark}\label{rem:interrpolation-scaling}
		For every $v\in [\cX, \cY]_\theta$ we have
		\begin{align*}
			\|v\|_{[\mathcal{X},\mathcal{Y}]_{\theta}}=\inf_{\stackrel{f\in \mathscr F(\cX, \cY)}{f(\theta)=v}} \Big( \sup_t \|f(it)\|_{\cX}\Big)^{1-\theta} \Big(\sup_t \|f(1+it)\|_{\cY}   \Big)^\theta.
		\end{align*}
		To see this, we note that by definition
		$$	\|v\|_{[\mathcal{X},\mathcal{Y}]_{\theta}}:=\inf_{\stackrel{f\in \mathscr F(\cX, \cY)}{f(\theta)=v}}\max \big\{  \sup_t \|f(it)\|_{\cX}, \sup_t \|f(1+it)\|_{\cY}   \big\},$$
		and 
		$$ \Big( \sup_t \|f(it)\|_{\cX}\Big)^{1-\theta} \Big(\sup_t \|f(1+it)\|_{\cY}   \Big)^\theta\leq \max \big\{  \sup_t \|f(it)\|_{\cX}, \sup_t \|f(1+it)\|_{\cY}   \big\}.$$
		On the other hand, given $f\in \mathscr F(\cX, \cY)$ with $f(\theta)=v$, the map $f_\alpha(z) = e^{\alpha(z-\theta)}f(z)$, for arbitrary $\alpha\in \mathbb R$, also belongs to $\mathscr F(\cX, \cY)$ and satisfies $f_\alpha(\theta)=v$. Moreover, for every $t\in \R$ we have $\|f_\alpha(it)\|_\cX = e^{-\alpha\theta}\|f(it)\|_{\cX}$ and $\|f_\alpha(1+it)\|_\cY = e^{\alpha(1-\theta)}\|f(1+it)\|_{\cY}$. 
		Then, optimizing over the choice of $\alpha\in \R$, we obtain
		$$\inf_{\alpha\in \R} \max \big\{  \sup_t \|f_\alpha(it)\|_{\cX}, \sup_t \|f_\alpha(1+it)\|_{\cY}   \big\}=  \Big( \sup_t \|f(it)\|_{\cX}\Big)^{1-\theta} \Big(\sup_t \|f(1+it)\|_{\cY}   \Big)^\theta.$$
		Putting these together, the desired equality follows.
		\hfill $\Box$ 
	\end{remark}

	\begin{remark}\label{rem:interpolation-ineq}
		Considering the constant map $f(z)= v$, Remark~\ref{rem:interrpolation-scaling} shows that 
		\begin{align*}\label{eq:interrpolation-ineq}	
			\|v\|_{[\mathcal{X},\mathcal{Y}]_{\theta}}\leq \|v\|_{\cX}^{1-\theta} \|v\|_{\cY}^\theta, \ \ \ \text{for} \ v\in \mathcal{X}\cap \mathcal{Y}.
		\end{align*}
	\end{remark}

	\medskip
	\begin{example}
		$L_p$-spaces are important examples of complex interpolation. Let $(\Omega,\mu)$ be a $\sigma$-finite measure space, and for $1\leq p<\infty $ let $L_p(\Omega)=L_p(\Omega, \C)$ be the space of complex measurable functions $h$ on $\Omega$ with $\int |h|^p\dd \mu<+\infty$, equipped with norm $\|h\|_p =\big(\int |h|^p\dd \mu\big)^{1/p}$. For $p=+\infty$, we define $L_{\infty}(\Omega,\C)$ with norm $\|h\|_{\infty} =\esssup |h|$.\footnote{$\esssup |h|:=\inf\{a\in\mathbb{R}:|h(z)|\leq a \ \text{for almost all} \ z\in\Omega\}$.} Then, $L_p(\Omega)$ for any $1\leq p\leq +\infty$ is a Banach space and for $1\leq p_0,p_1\leq\infty$, $0<\theta<1$ we have
		\begin{equation}\label{eq:interpolation-Lp}
			[L_{p_0}(\Omega),L_{p_1}(\Omega)]_{\theta}=L_{p_{\theta}}(\Omega),
		\end{equation}
		with identical norms, where\footnote{As a convention we set $\frac1{\infty}=0$.}
		$$\frac1{p_{\theta}}=\frac{1-\theta}{p_0}+\frac{\theta}{p_1}.$$ 
	\end{example}

	The following theorem states an important bound on the operator norm of bounded maps between interpolation spaces. 
	
	\begin{theorem}[\cite{Lunardi}]
		Let $(\mathcal{X}_0,\mathcal{X}_1)$ and $(\mathcal{Y}_0,\mathcal{Y}_1)$ be compatible couples, and $0<\theta<1$. If $T:\mathcal{X}_0+\mathcal{X}_1\rightarrow \mathcal{Y}_0+\mathcal{Y}_1$ is a linear operator such that $T|_{\mathcal{X}_j}\in B(\mathcal{X}_j,\mathcal{Y}_j)$, $j=0,1$, then $T|_{[\mathcal{X}_0,\mathcal{X}_1]_{\theta}}\in B([\mathcal{X}_0,\mathcal{X}_1]_{\theta},[\mathcal{Y}_0,\mathcal{Y}_1]_{\theta})$, and
		\begin{equation*}
			\|T\|_{B([\mathcal{X}_0,\mathcal{X}_1]_{\theta},[\mathcal{Y}_0,\mathcal{Y}_1]_{\theta})}\leq\|T\|_{B(\mathcal{X}_0,\mathcal{Y}_0)}^{1-\theta}\|T\|_{B(\mathcal{X}_1,\mathcal{Y}_1)}^{\theta}.
		\end{equation*}
		\label{thm:interpolation-bound}
	\end{theorem}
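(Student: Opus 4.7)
The plan is to prove the bound by composing $T$ with an admissible function $f \in \mathscr{F}(\mathcal{X}_0,\mathcal{X}_1)$ representing $v$ and using the rescaled variational formula of Remark~\ref{rem:interrpolation-scaling}. Concretely, fix $v \in [\mathcal{X}_0,\mathcal{X}_1]_{\theta}$ and pick any $f \in \mathscr{F}(\mathcal{X}_0,\mathcal{X}_1)$ with $f(\theta)=v$. Define $g : \mathbb{S} \to \mathcal{Y}_0+\mathcal{Y}_1$ by $g(z):=T(f(z))$. The goal is to show that $g \in \mathscr{F}(\mathcal{Y}_0,\mathcal{Y}_1)$ so that $g(\theta)=Tv$ witnesses membership of $Tv$ in $[\mathcal{Y}_0,\mathcal{Y}_1]_{\theta}$, and then to read off the desired norm estimate from the boundary bounds on $g$.

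\textbf{Admissibility of $g$.} First I would check that $T$ is bounded as a map $\mathcal{X}_0+\mathcal{X}_1 \to \mathcal{Y}_0+\mathcal{Y}_1$. Given any decomposition $v = x_0 + x_1$ in $\mathcal{X}_0+\mathcal{X}_1$, we have $Tv = Tx_0+Tx_1$ with $\|Tx_j\|_{\mathcal{Y}_j}\leq \|T\|_{B(\mathcal{X}_j,\mathcal{Y}_j)}\|x_j\|_{\mathcal{X}_j}$, and hence
\[
\|Tv\|_{\mathcal{Y}_0+\mathcal{Y}_1}\leq \max_{j=0,1}\|T\|_{B(\mathcal{X}_j,\mathcal{Y}_j)}\cdot \|v\|_{\mathcal{X}_0+\mathcal{X}_1}.
\]
This yields continuity and boundedness of $g=T\circ f$ on $\mathbb{S}$ into $\mathcal{Y}_0+\mathcal{Y}_1$. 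For holomorphicity in the interior, for any $y^* \in (\mathcal{Y}_0+\mathcal{Y}_1)^*$ the adjoint $T^*y^*$ lies in $(\mathcal{X}_0+\mathcal{X}_1)^*$, and $\langle y^*,g(z)\rangle = \langle T^*y^*,f(z)\rangle$ is holomorphic by the assumption on $f$. Finally, continuity of $t\mapsto g(it)\in \mathcal{Y}_0$ follows from the continuity of $t\mapsto f(it)\in \mathcal{X}_0$ and boundedness of $T|_{\mathcal{X}_0}\to\mathcal{Y}_0$; the corresponding statement on the line $1+i\mathbb{R}$ is analogous. Hence $g \in \mathscr{F}(\mathcal{Y}_0,\mathcal{Y}_1)$.

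\textbf{Estimating the norm.} On the boundary lines, the operator hypotheses give
\[
\|g(it)\|_{\mathcal{Y}_0}\leq \|T\|_{B(\mathcal{X}_0,\mathcal{Y}_0)}\|f(it)\|_{\mathcal{X}_0},
\qquad
\|g(1+it)\|_{\mathcal{Y}_1}\leq \|T\|_{B(\mathcal{X}_1,\mathcal{Y}_1)}\|f(1+it)\|_{\mathcal{X}_1}.
\]
Since $g(\theta)=Tv$, applying the geometric--mean version of the interpolation norm from Remark~\ref{rem:interrpolation-scaling} yields
\[
\|Tv\|_{[\mathcal{Y}_0,\mathcal{Y}_1]_{\theta}}
\leq \bigl(\sup_t\|g(it)\|_{\mathcal{Y}_0}\bigr)^{1-\theta}\bigl(\sup_t\|g(1+it)\|_{\mathcal{Y}_1}\bigr)^{\theta}
\leq \|T\|_{B(\mathcal{X}_0,\mathcal{Y}_0)}^{1-\theta}\|T\|_{B(\mathcal{X}_1,\mathcal{Y}_1)}^{\theta}\,\bigl(\sup_t\|f(it)\|_{\mathcal{X}_0}\bigr)^{1-\theta}\bigl(\sup_t\|f(1+it)\|_{\mathcal{X}_1}\bigr)^{\theta}.
\]
Taking the infimum over all admissible $f$ with $f(\theta)=v$ and invoking Remark~\ref{rem:interrpolation-scaling} once more on the right-hand side gives $\|Tv\|_{[\mathcal{Y}_0,\mathcal{Y}_1]_{\theta}} \leq \|T\|_{B(\mathcal{X}_0,\mathcal{Y}_0)}^{1-\theta}\|T\|_{B(\mathcal{X}_1,\mathcal{Y}_1)}^{\theta}\,\|v\|_{[\mathcal{X}_0,\mathcal{X}_1]_{\theta}}$, which is the claimed bound.

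\textbf{Main obstacle.} The proof is largely bookkeeping; the only mildly delicate point is the verification that $g=T\circ f$ belongs to $\mathscr{F}(\mathcal{Y}_0,\mathcal{Y}_1)$, since $T$ is only assumed bounded on each summand separately. Once the passage to $T:\mathcal{X}_0+\mathcal{X}_1\to \mathcal{Y}_0+\mathcal{Y}_1$ is justified (as above), the rescaling trick built into Remark~\ref{rem:interrpolation-scaling} does all the work of converting the maximum of boundary norms into the geometric mean; without that trick one would have to redo the $e^{\alpha(z-\theta)}$ optimization by hand inside the proof.
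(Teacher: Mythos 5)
Your proof is correct. Since the paper simply cites this theorem from Lunardi without giving a proof, there is no "paper proof" to compare against; but your argument is the standard one used in the complex interpolation literature, and every step checks out: the boundedness of $T$ on the sum space $\mathcal{X}_0+\mathcal{X}_1$ (via the infimum over decompositions), the verification that $g=T\circ f$ lies in $\mathscr{F}(\mathcal{Y}_0,\mathcal{Y}_1)$ using the adjoint $T^*$ for holomorphicity, and the two invocations of the geometric-mean reformulation from Remark~\ref{rem:interrpolation-scaling} to pass from boundary bounds on $g$ to the bound on $\|Tv\|_{[\mathcal{Y}_0,\mathcal{Y}_1]_\theta}$ and then take the infimum over $f$. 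Your use of Remark~\ref{rem:interrpolation-scaling} is a clean way to package the $e^{\alpha(z-\theta)}$ rescaling once and reuse it, which is precisely what it is there for.
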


	
	\section{Operator spaces}\label{sec:operator-spaces}
	The space of bounded operators $B(\cH)$ acting on a Hilbert space $\cH$ naturally appears in the study of a quantum system described by $\cH$. If the quantum system interacts with another quantum system described by an $n$-dimensional Hilbert space, then we need to study operators acting on the tensor product Hilbert space. Such an operator can be thought of as an $n\times n$ matrix whose entries belong to $B(\cH)$. From this point of view, not only $B(\cH)$, but also matrices with entries in $B(\cH)$ become relevant in quantum information theory.  Operator space theory is a theoretical ground for the study of such objects. 	
	In this section we give a brief overview of operator space theory and refer to~\cite{Pisier3} for more details.
	
	We denote by $\ell_2^n$ the Hilbert space of $n$-tuples of complex numbers, and let $\ell_2=\ell_2^{\infty}$. For a Hilbert space $\cH$ and $n\in \mathbb N$, we let $\cH^n  = \oplus_{i=1}^n \cH=\ell_2^n\otimes_2 \cH$ where $\otimes_2$ denotes the \emph{Hilbertian tensor product}. The space of all $m\times n$ matrices with entries in a linear space $\cX$ is denoted by $M_{m, n}(\cX)$. For simplicity we denote $M_{m, n}(\C)$ by $M_{m, n}$. The (operator) norm of a matrix $A\in M_{m, n}$---when identified with an operator from $\ell_2^n$ into $\ell_2^m$---is denoted by  $\|A\|_{m,n}$.
	When $m=n$, we let $M_n(\cX) = M_{n, n}(\cX)$. 
	
	An element $T\in B(\cH)$ is positive (positive semidefinite) if it is of the form $T=X^*X$ for some $X\in B(\cH)$, where $X^*$ is the adjoint of $X$.  We shall write $T\geq0$ to denote that $T$ is positive. 
	
	For each $n\geq 1$, the space $M_n(B(\cH))$ of $n\times n$ matrices with entries in $B(\cH)$ is equipped with a norm naturally induced by $B(\cH)$. Indeed, there is a natural way to turn $M_n(B(\cH))$ into a $C^*$-algebra by identifying it with $B(\cH^n)$. To see this, let us denote a generic element of $M_n(B(\cH))$ by $[X_{ij}]$ where $X_{ij}\in B(\cH)$. Then, for $[X_{ij}],[Y_{ij}]$ in $M_n(B(\cH))$, we put
	$$
	[X_{ij}]\cdot [Y_{ij}]=\left[\sum_{k=1}^n X_{ik}Y_{kj}\right] \ \ \ \text{and} \ \ \ [X_{ij}]^*=[X_{ji}^*].
	$$
	Next, to introduce the norm on $M_n(B(\cH))$, we think of $X=[X_{ij}]\in M_n(B(\cH))$ as an element of $B(\mathcal{H}^n)$---which is algebraically isomorphic to $M_n(B(\mathcal{H}))$---and define 
	$$
	\|X\|_n:=\|X\|_{M_n(B(\cH))}=\sup\left\{\bigg(\,\sum_{i=1}^n\Big\|\sum_{j=1}^n X_{ij}(v_j)\Big\|^2\,\bigg)^{\frac1{2}}:v_1, \dots, v_n\in \mathcal{H},\sum_{j=1}^n \|v_j\|^2\leq 1\right\}.
	$$
	We can similarly define norms on $M_{m,n}(B(\cH))$.

	The above matrix norms satisfy two crucial properties. For $A, B^*\in M_{m, n}$ and $X\in M_n(B(\cH))$, by considering matrix multiplication in the usual way, we can think of $AXB$ as an element of $M_m(B(\cH))$.\footnote{Equivalently, $AXB$ can be understood as $(A\otimes I_{\cH})X (B\otimes I_{\cH})$ where $I_{\cH}$ is the identity operator acting on $\cH$.} Also, for $X\in M_m(B(\cH))$ and $Y\in M_n(B(\cH))$, we denote by $X\oplus Y$ the element of $M_{m+n}(B(\cH))$ given by 
	$$
	X\oplus Y = \begin{bmatrix}
		X & 0\\
		0 & Y
	\end{bmatrix}.
	$$
	Then, it is readily verified that the norms on $M_n(\cX)$, where $\cX=B(\cH)$, satisfy the following properties:
	\begin{itemize}
		\item[(i)] $\|AXB\|_m\leq \|A\|_{m,n}\|X\|_n\|B\|_{n,m}$ for all $A,B^*\in M_{m,n}$, $X\in M_n(\mathcal{X})$;
		
		\item[(ii)] $ \left\|X\oplus Y\right\|_{m+n}=\max\{\|X\|_n,\|Y\|_m\} $ for all $X\in M_n(\mathcal{X})$, $Y\in M_m(\mathcal{X})$.
		
	\end{itemize}

	We note that the above two properties hold not only in $B(\cH)$, but also in all of its linear subspaces. Conversely, Ruan's theorem~\cite{Pisier3} states that any Banach space $\mathcal{X}$, with a hierarchy of norms $\|\cdot\|_n$ on $M_n(\mathcal{X})$, $n\geq 1$, satisfying properties (i) and (ii)\footnote{In view of Ruan's theorem, these spaces are sometimes termed as \emph{abstract operator spaces} as opposed to concrete operator spaces (see Definition \ref{cos}).} ``embeds" into $B(\cH)$  for some Hilbert space $\cH$.
	
	\begin{definition}\label{cos}
		A (concrete) operator space is a subspace of $B(\cH)$ for some Hilbert space $\cH$ that is closed in operator norm topology.
	\end{definition}
	
	Ruan's theorem may be precisely formulated by using the notion of completely bounded maps. Let $\phi:\mathcal{X}\rightarrow \mathcal{Y}$ be a linear map between operator spaces $\mathcal{X},\mathcal{Y}$. The $n$-th \emph{amplification} $\phi_n:M_n(\mathcal{X})\rightarrow M_n(\mathcal{Y})$ of $\phi$ is defined by $\phi_n([X_{ij}])=[\phi(X_{ij})]$ for $[X_{ij}]\in M_n(\mathcal{X})$. Then $\phi$ is said to be \emph{completely bounded} if $\|\phi\|_{cb}:=\sup_{n\geq 1} \|\phi_n\|<\infty$. 
	We say that $\phi$ is a \emph{complete isometry} if each $\phi_n$ is an isometry.

	Using this terminology, by Ruan's theorem any Banach space $\mathcal{X}$ with a sequence of norms $\|\cdot\|_n$ on $M_n(\mathcal{X})$ for $n\geq 1$ satisfying properties (i) and (ii) above, is completely isometric to an operator space, i.e., by Definition~\ref{cos} it can be identified with a closed subspace of some $B(\cH)$ .

	\subsection{Dual operator space structure}
	There is a well-behaved notion of dual space in the category of operator spaces and completely bounded maps. Let $\mathcal{X},\mathcal{Y}$ be two operator spaces, and let $(CB(\cX, \cY),\|\cdot\|_{cb})$ be the Banach space of all completely bounded maps from $\cX$ into $\cY$. Then, $CB(\cX, \cY)$ may be endowed with a natural operator space structure which we now describe. To this end, we first identify $M_n(CB(\mathcal{X},\mathcal{Y}))$ with $CB(\mathcal{X},M_n(\mathcal{Y}))$ as follows: for $\Phi=[\phi_{ij}]\in M_n(CB(\mathcal{X},\mathcal{Y}))$, let $\widetilde{\Phi}:\mathcal{X}\rightarrow M_n(\mathcal{Y})$ be given by $\widetilde{\Phi}(X):=\big[\phi_{ij}(X)\big]$. Then we can define a norm on $M_n(CB(\mathcal{X},\mathcal{Y}))$ by letting 
	$$\|\Phi\|_n:=\big\|\widetilde{\Phi}\big\|_{CB(\mathcal{X},M_n(\mathcal{Y}))}.$$ 
	It is not hard to verify that $CB(\mathcal{X},\mathcal{Y})$ with this sequence of norms satisfies Ruan's axioms (properties (i) and (ii) above), and by Ruan's theorem is an operator space.
	
	Considering the special case $\cY=\C$, we find that $CB(\cX, \C)$ is an operator space. On the other hand, one may observe that $CB(\cX, \C)$ coincides with $\cX^*=B(\cX, \C)$ with identical norms as Banach spaces. Therefore, the Banach dual $\cX^*$ of any operator space has a canonical operator space structure.

	\subsection{Column and row operator spaces}\label{subsec:C-R-spaces}
	Here we present two important examples of operator spaces which will play pivotal roles in the construction of vector-valued Schatten spaces in the next section.
	
	Let $\cH_1, \cH_2$ be two Hilbert spaces.  Then the map
	\begin{align*}\label{eq:embedding-H1-H2}
		X\mapsto \begin{bmatrix}
			0 & 0\\
			X & 0
		\end{bmatrix}
	\end{align*}
	is an isometric embedding of $B(\cH_1, \cH_2)$ into $B(\cH_1\oplus \cH_2)$. Therefore, $B(\cH_1, \cH_2)$ is naturally endowed with an operator space structure. Note that the induced norm on $M_n\big(B(\cH_1, \cH_2)\big)$ coincides with the norm of $B(\cH_1^n, \cH_2^n)$. This means that to understand the operator space structure of $B(\cH_1, \cH_2)$ we can indeed ignore the above embedding and directly work with $B(\cH_1^n, \cH_2^n)$~\cite[p.~22]{Pisier3}.

	Considering spacial cases of the above construction, we find that both $B(\C, \cH)$ and $B(\cH, \C)$ are operator spaces. Indeed, for any Hilbert space $\cH$, the Banach spaces
	$$\cH_c := B(\C, \cH)\quad  \text{ and  } \quad \cH_r:=B(\cH^*, \C)$$
	have operator space structures. $\cH_c$ is called the \emph{column operator space} while $\cH_r$ is called the \emph{row operator space}~\cite[p.~25]{Pisier2}. 
	
	We note that elements of $\cH_c$ and $\cH_r$ can be identified with those of $\cH$. To this end, for any $v\in \cH$ define $v^c\in \cH_c$ and $v^r\in \cH_r$ by
	$$v^c(\alpha) = \alpha v, \qquad v^r(w)= \langle w, v\rangle, \qquad \alpha\in \C, w\in \cH^*.$$
	Then, $v^c\mapsto v^r$ provides an isometry between $\cH^c$ and $\cH^r$ as Banach spaces. Nevertheless, the operator space structures of $\cH^c$ and $\cH^r$ do not coincide under this identification. To verify this, suppose that  $n\leq \dim \cH$, and define $v_{ij}\in \cH$ for $1\leq i,j\leq n$ as follows: let $v_{ij}=0$ if $j\neq 1$ and let $\{v_{11}, \dots, v_{n1}\}\subset \cH$ be orthonormal.  Then we have
	\begin{align*}
		\big\|\big( v_{ij}^c \big)\big\|_{M_n(\cH_c)} &= \sup_{\stackrel{\alpha_1, \dots, \alpha_n\in \C}{\sum |\alpha_k|^2=1}} \Big\|   \Big(  \sum_{j} \alpha_jv_{ij}  \Big)   \Big\|_{\cH^n}\\
		&= \sup_{\stackrel{\alpha_1, \dots, \alpha_n\in \C}{\sum |\alpha_k|^2=1}} \big\|   \big(   \alpha_1v_{i1}  \big)   \big\|_{\cH^n}\\
		&= \big\|   \big(   v_{i1}  \big)   \big\|_{\cH^n}\\
		& = \Big(\sum_i \|v_{i1}\|^2_{\cH}\Big)^{1/2}\\
		& = \sqrt n,
	\end{align*}
	while
	\begin{align*}
		\big\|\big( v_{ij}^r \big)\big\|_{M_n(\cH_r)} &= \sup_{\stackrel{w_1, \dots, w_n\in \cH^*}{\sum \|w_k\|^2=1}}  \Big\|   \Big(  \sum_{j} \langle w_j, v_{ij}\rangle  \Big)   \Big\|\\
		&= \sup_{\stackrel{w_1, \dots, w_n\in \cH^*}{\sum \|w_k\|^2=1}}  \big\|   \big(   \langle w_1, v_{i1}\rangle  \big)   \big\|\\
		&= \sup_{\stackrel{w_1, \dots, w_n\in \cH^*}{\sum \|w_k\|^2=1}}  \Big(\sum_i   |\langle w_1, v_{i1}\rangle|^2  \Big)^{1/2}   \\
		&= \sup_{\stackrel{w_1, \dots, w_n\in \cH^*}{\sum \|w_k\|^2=1}} \|w_1\|_{\cH^*}\\
		&=1,
	\end{align*}
	where in the penultimate equality we use the fact that $\{v_{11}, \dots, v_{n1}\}$ is orthonormal.

	For simplicity of notation, we denote
	$$\cC=(\ell_2)_c, \quad \cR=(\ell_2)_r.$$
	In the finite-dimensional case of $\cH=\ell_2^n$, we denote the associated spaces by $\cC^n, \cR^n$.
	Note that $\cC$ and $\cR$ can be identified with the spaces of column and row vectors, respectively. In what follows, we denote elements of $\cC, \cR$ with superscripts $v^c, w^r$ to highlight that they are column and row vectors, respectively.
	
	\begin{remark}\label{rem:ell-2-*}
		In the following, we will also need $(\ell^*_2)_c, (\ell_2^*)_r$. Nevertheless, since these spaces can again be identified by column and row vectors, we abuse the above notations and denote them by $\cC, \cR$, respectively. 
		
	\end{remark}

	\subsection{Interpolation of operator spaces}
	
	Let $(\mathcal{X}_0,\mathcal{X}_1)$ be a compatible couple of Banach spaces. If further $\mathcal{X}_0,\mathcal{X}_1$ are operator spaces, then $M_n(\mathcal{X}_0)$ and $M_n(\mathcal{X}_1)$ with their specific norms form a compatible couple of Banach spaces. Thus, we may consider the complex interpolation space $[M_n(\mathcal{X}_0),M_n(\mathcal{X}_1)]_\theta$, for every $n\geq 1$, and endow $\mathcal{X}_\theta=[\mathcal{X}_0,\mathcal{X}_1]_\theta$ with a matrix norm structure as follows:
	\begin{equation*}
		\big\|[x_{ij}]\big\|_{M_n(\mathcal{X}_\theta)}:=	\big\|[x_{ij}]\big\|_{[M_n(\mathcal{X}_0),M_n(\mathcal{X}_1)]_\theta}, \ \ \ [x_{ij}]\in M_n(\mathcal{X}_\theta).
	\end{equation*}
	It can be shown that these norms satisfy Ruan's axioms, so $\mathcal{X}_\theta$ turns into an operator space~\cite{Pisier, Pisier3}. Moreover, we have the following analogue of Theorem~\ref{thm:interpolation-bound}.

	\begin{prop}[\cite{Pisier}]\label{operatorspaceinterpolation}
		Let $(\cX_0, \cX_1)$ and $(\cY_0, \cY_1)$ be a pair of compatible operator spaces, and define the operator spaces $\cX_\theta = [\cX_0, \cX_1]$ and $\cY_\theta = [\cY_0, \cY_1]$ as above.  
		Let  $T:\mathcal{X}_0+\mathcal{X}_1\rightarrow \mathcal{Y}_0+\mathcal{Y}_1$ be a linear map such that $T|_{\mathcal{X}_j}\in CB(\mathcal{X}_j,\mathcal{Y}_j)$, $j=0,1$. Then $T|_{\mathcal{X}_\theta}\in CB(\mathcal{X}_\theta,\mathcal{Y}_\theta)$, and
		\begin{equation*}
			\|T\|_{CB(\mathcal{X}_\theta,\mathcal{Y}_\theta)}\leq\|T\|_{CB(\mathcal{X}_0,\mathcal{Y}_0)}^{1-\theta}\|T\|_{CB(\mathcal{X}_1,\mathcal{Y}_1)}^{\theta}.
		\end{equation*}
	    Furthermore, the reiteration identity \eqref{eq:reiteration} holds completely isometrically for a compatible couple of operator spaces $\cX_0, \cX_1$.
	\end{prop}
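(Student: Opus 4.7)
The plan is to reduce the completely bounded statement to the ordinary bounded version (Theorem~\ref{thm:interpolation-bound}) by passing to amplifications. The key observation is that the operator space structure on $\mathcal{X}_\theta$ was \emph{defined} so that
$$
M_n(\mathcal{X}_\theta) = \bigl[M_n(\mathcal{X}_0),\,M_n(\mathcal{X}_1)\bigr]_\theta
$$
isometrically (and likewise for $\mathcal{Y}_\theta$). So the hard analytic content is already absorbed into the Banach-space interpolation result, and what remains is essentially bookkeeping at the level of matrix norms.

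First I would fix an integer $n\geq 1$ and consider the $n$-th amplification
$T_n:M_n(\mathcal{X}_0+\mathcal{X}_1)\to M_n(\mathcal{Y}_0+\mathcal{Y}_1)$, which restricts to a bounded map $T_n:M_n(\mathcal{X}_j)\to M_n(\mathcal{Y}_j)$ for $j=0,1$, with
$$\|T_n\|_{B(M_n(\mathcal{X}_j),M_n(\mathcal{Y}_j))}\leq \|T\|_{CB(\mathcal{X}_j,\mathcal{Y}_j)}.$$
Next I would apply Theorem~\ref{thm:interpolation-bound} to the pair $(M_n(\mathcal{X}_0),M_n(\mathcal{X}_1))$ and $(M_n(\mathcal{Y}_0),M_n(\mathcal{Y}_1))$, which yields
$$
\bigl\|T_n\bigr\|_{B([M_n(\mathcal{X}_0),M_n(\mathcal{X}_1)]_\theta,\,[M_n(\mathcal{Y}_0),M_n(\mathcal{Y}_1)]_\theta)}
\leq \|T_n\|_{B(M_n(\mathcal{X}_0),M_n(\mathcal{Y}_0))}^{1-\theta}\,\|T_n\|_{B(M_n(\mathcal{X}_1),M_n(\mathcal{Y}_1))}^{\theta}.
$$

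Using the identification $M_n(\mathcal{X}_\theta)=[M_n(\mathcal{X}_0),M_n(\mathcal{X}_1)]_\theta$ (and similarly for $\mathcal{Y}_\theta$) along with the amplification norm bound noted above, the left-hand side equals $\|T_n\|_{B(M_n(\mathcal{X}_\theta),M_n(\mathcal{Y}_\theta))}$ and the right-hand side is bounded by $\|T\|_{CB(\mathcal{X}_0,\mathcal{Y}_0)}^{1-\theta}\|T\|_{CB(\mathcal{X}_1,\mathcal{Y}_1)}^{\theta}$. Since this bound is uniform in $n$, taking the supremum over $n\geq 1$ gives
$$
\|T\|_{CB(\mathcal{X}_\theta,\mathcal{Y}_\theta)}=\sup_{n\geq 1}\|T_n\|_{B(M_n(\mathcal{X}_\theta),M_n(\mathcal{Y}_\theta))}\leq \|T\|_{CB(\mathcal{X}_0,\mathcal{Y}_0)}^{1-\theta}\|T\|_{CB(\mathcal{X}_1,\mathcal{Y}_1)}^{\theta},
$$
which is the desired inequality. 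In particular, the finiteness of the right-hand side ensures $T|_{\mathcal{X}_\theta}\in CB(\mathcal{X}_\theta,\mathcal{Y}_\theta)$.

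There is no real obstacle here once one accepts the definitional fact that the matrix norms on the interpolation space $[\mathcal{X}_0,\mathcal{X}_1]_\theta$ are precisely the interpolated matrix norms; that fact is what makes the CB version nothing more than a uniform-in-$n$ application of the Banach-space result. The only thing to be careful about is that the amplification of the restriction of $T$ to $\mathcal{X}_\theta$ genuinely coincides with the restriction of $T_n$ to $M_n(\mathcal{X}_\theta)\subseteq M_n(\mathcal{X}_0+\mathcal{X}_1)$, which is immediate from the entrywise definition of $T_n$.
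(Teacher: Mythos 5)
Your proof is correct, and it is the standard argument for this result. The paper itself states the proposition without proof, citing Pisier, so there is no in-text argument to compare against; but your reduction to Theorem~\ref{thm:interpolation-bound} via amplifications is exactly the expected route: the defining relation $M_n(\mathcal{X}_\theta)=[M_n(\mathcal{X}_0),M_n(\mathcal{X}_1)]_\theta$ (and likewise for $\mathcal{Y}_\theta$) turns the CB statement into a family of Banach-space interpolation estimates, one for each $n$, each with the uniform bound $\|T\|_{CB(\mathcal{X}_0,\mathcal{Y}_0)}^{1-\theta}\|T\|_{CB(\mathcal{X}_1,\mathcal{Y}_1)}^{\theta}$, and taking the supremum over $n$ finishes the proof. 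The only point worth making explicit (which you do note at the end) is that $(T|_{\mathcal{X}_\theta})_n$ agrees with $T_n|_{M_n(\mathcal{X}_\theta)}$, which is immediate from the entrywise definition of the amplification.
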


	\begin{example}
		Let $\mathcal{H}$ be a Hilbert space, and let $\mathcal{S}_\infty(\cH)\subseteq B(\cH)$ be the space of compact operators on $\cH$ equipped with the operator norm. Note that $\mathcal{S}_\infty(\cH)$, as a closed subspace of $B(\cH)$, is an operator space.

		Let $\mathcal{S}_1(\cH)$ be the space of trace class operators acting on $\cH$ equipped with the norm
		$$\|X\|_{\mathcal{S}_1(\cH)} = \tr|X|,$$
		where $|X| = \sqrt{X^*X}$.
		It is well-known that $\mathcal{S}_1(\cH)$ is the (Banach) dual of $\mathcal{S}_\infty(\cH)$.\footnote{Note that the dual of $\mathcal{S}_1(\cH)$ is $B(\cH)$, which does not coincide with $\mathcal{S}_\infty(\cH)$ when $\cH$ is not finite-dimensional.} Therefore, since the dual of any operator space is an operator space, $\mathcal{S}_1(\cH)$ inherits an operator space structure.  
		
		The Schatten class $\mathcal{S}_p(\cH)$ is indeed defined for any $p\geq 1$; let $\mathcal{S}_p(\cH)$ be the space of operators $X$ in $B(\cH)$ with $\tr |X|^p<\infty$, equipped with the norm 
		$$\|X\|_{\mathcal{S}_p(\cH)} = \big(\tr|X|^p\big)^{\frac{1}{p}}.$$
		When $\cH=\ell_2$ (resp. $\cH=\ell_2^n$) we denote $\mathcal{S}_p(\cH)$ by $\mathcal{S}_p$ (resp. $\mathcal{S}_p^n$). The space $\mathcal{S}_p$ is indeed the noncommutative counterpart of the ordinary $L_p$-space (in the discrete case). Moreover,~\eqref{eq:interpolation-Lp} is generalized to Schatten classes with similar ideas~\cite{Pisier3}. In particular, we have
		\begin{align}\label{eq:Sp-interpolation}
			\mathcal{S}_p(\cH) =\big[\mathcal{S}_\infty(\cH), \mathcal{S}_1(\cH)\big]_{\frac{1}{p}}, \qquad 1< p<\infty,
		\end{align}
		with identical norms.
	\end{example}

	\begin{remark}\label{remark3.5}
		We argued that $\mathcal{S}_1(\cH), \mathcal{S}_\infty(\cH)$ are operator spaces, so if \eqref{eq:Sp-interpolation} is interpreted as an identity between operator spaces, then $\mathcal{S}_p(\cH)$ for any $1\leq p\leq \infty$ is endowed with an operator space structure. Moreover, by using reiteration and \eqref{eq:Sp-interpolation} we obtain that $\mathcal{S}_{p_\theta}=[\mathcal{S}_{p_0},\mathcal{S}_{p_1}]_\theta$ completely isometrically, which is the noncommutative analogue of \eqref{eq:interpolation-Lp} in the discrete case.
	\end{remark}

	\subsection{Haagerup tensor product and interpolation}
	Haagerup tensor product provides a special operator space structure for the tensor product of two operator spaces. Its construction is an important step towards understanding operator-valued Schatten spaces and is reviewed in this subsection.
	
	Let $X=[X_{ij}]\in M_{n,m}(\mathcal{X})$ and $Y=[Y_{ij}]\in M_{m, n}(\mathcal{Y})$, where $\mathcal{X},\mathcal{Y}$ are two operator spaces. Then the element $X\odot Y\in M_{n}(\mathcal{X}\otimes \mathcal{Y})$ is defined by
	$$
	X\odot Y:=\left[\sum_{k=1}^{m} X_{ik}\otimes Y_{kj}\right].
	$$
	Using this notation for $Z\in M_n(\mathcal{X}\otimes \mathcal{Y})$, we define its \emph{Haagerup matrix norm} by
	$$
	\|Z\|_{h,n}:=\inf\big\{\|X\|_{n,m}\|Y\|_{m,n}:Z=X\odot Y\big\},
	$$
	where the infimum is taken over all $m$ and all $X,Y$ as above.

	We verify that $\|\cdot\|_{h,n}$ satisfies the triangle inequality and is thus an actual norm~\cite{EffrosKishimoto}. For $Z_i\in M_n(\cX\otimes \cY)$, $i=1, 2$, and arbitrary $\epsilon>0$, let $X_i\in M_{n, m_{i}}(\cX)$ and $Y_i\in M_{m_i, n}(\cY)$ be such that $Z_i= X_i\odot Y_i$ and 
	\begin{align}\label{eq:Xi-Yi-Zi-esp}
		\|X_i\|_{n, m_i}\|Y_i\|_{m_i, n}\leq \|Z_i\|_{h, n} + \epsilon.
	\end{align}
	After scaling $X_i, Y_i$, we can assume that $\|X_i\|_{n, m_i}=\|Y_i\|_{m_i, n}$. Next, define
	$$\widetilde X = \big[X_1, X_2\big]\in M_{n, m_1+m_2}(\cX), \qquad \widetilde Y = \begin{bmatrix}
		Y_1\\
		Y_2
	\end{bmatrix}\in M_{m_1+m_2, n}(\cY).$$
	Then, we have $Z_1+Z_2= \widetilde X\odot \widetilde Y$, and
	\begin{align*}
		\|Z_1+Z_2\|_{h, n} &\leq  \big\|\widetilde X\big\|_{n, m_1+m_2} \big\|\widetilde Y\big\|_{m_1+m_2, n}\\
		&=  \big\|\widetilde X \widetilde X^* \big\|_{n}^{1/2} \big\|\widetilde Y^* \widetilde Y\big\|_{n}^{1/2}\\
		&=  \big\| X_1X_1^* +X_2X_2^* \big\|_{n}^{1/2} \big\|Y^*_1Y_1 +Y^*_2Y_2\big\|_{n}^{1/2}\\
		&\leq \frac 12\Big( \big\| X_1X_1^* +X_2X_2^* \big\|_{n} + \big\|Y^*_1Y_1 +Y^*_2Y_2\big\|_{n}\Big)\\
		&\leq \frac 12\Big( \big\| X_1X_1^*\big\|_n +\big\|X_2X_2^* \big\|_{n} + \big\|Y^*_1Y_1\big\|_n +\big\|Y^*_2Y_2\big\|_{n}\Big)\\
		&\leq \frac 12\Big( \big\| X_1\big\|^2_n +\big\|X_2\big\|^2_{n} + \big\|Y_1\big\|^2_n +\big\|Y_2\big\|^2_{n}\Big)\\
		&\leq \|Z_1\|_{h, n} + \|Z_2\|_{h, n}+2\epsilon,
	\end{align*}
	where in the last line we use~\eqref{eq:Xi-Yi-Zi-esp} and  $\|X_i\|_{n, m_i}=\|Y_i\|_{m_i, n}$. Then, $\|Z_1+Z_2\|_{h, n}\leq \|Z_1\|_{h, n} + \|Z_2\|_{h, n}$ since $\epsilon>0$ is arbitrary.

	It can also be verified that this matrix norm structure satisfies Ruan's axioms. Thus, after completion in this norm, we obtain an operator space structure on $\cX\otimes \cY$ called the  \emph{Haagerup tensor product} which we denote by $\mathcal{X}\otimes_{h} \mathcal{Y}$.

	The Haagerup tensor product is associative, that is, $(\cX\otimes_h\cY)\otimes_h \mathcal Z=\cX\otimes_h(\cY\otimes_h \mathcal Z)$ completely isometrically. Nonetheless, it is \emph{not} commutative as in general $\cX\otimes_h\cY$ and $\cY\otimes_h\cX$ carry different norms.
	
	\medskip
	In the following proposition we collect some properties of Haagerup tensor product in regard to column and row Hilbert spaces. This proposition is a key step towards understanding the construction of operator-valued Schatten spaces in the next section. Its proof is included in Appendix \ref{proofs}.
	
	\begin{prop}\label{prop:h-tensor-R-C}
		Let $\cH, \cK$ be two Hilbert spaces. Then,
		\begin{itemize}
			\item[{\rm (i)}] we have the complete isometries
			\begin{equation*}\label{eq:cc-rr-h-tensor}
				\mathcal{H}_c\otimes_h \mathcal{K}_c =   (\mathcal{H}\otimes_2\mathcal{K})_c   \quad \text{ and } \quad \mathcal{K}_r\otimes_h\mathcal{H}_r
				=(\mathcal{H}\otimes_2\mathcal{K})_r,
			\end{equation*}

			\item[{\rm (ii)}] and
			\begin{equation*}\label{eq:cr-rc-h-tensor}
				\mathcal{S}_\infty(\cH) = \cH_c\otimes_h (\cH^*)_r  \quad \text{ and } \quad \mathcal{S}_1(\cH) = \cH_r\otimes_h (\cH^*)_c.
			\end{equation*}	
			In particular, $\mathcal{S}_\infty = C\otimes_h \cR$ and $\mathcal{S}_1=\cR\otimes_h \cC$ completely isometrically.
			
			\item[{\rm (iii)}] Moreover, $M_{1, d}(\cC^d)= M_{d, 1}(\cR^d) = \mathcal{S}_\infty^d$ and $M_{1, d}(\cR^d) = M_{d, 1}(\cC^d) = \mathcal{S}_2^d$ isometrically.
			
		\end{itemize}
	\end{prop}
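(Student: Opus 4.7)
I would tackle the proposition in the order (iii), (i), (ii), since (iii) is a direct unpacking of the definitions, (i) is the main technical step, and (ii) rides on (i) plus standard identifications from Section~\ref{Schatten}.

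For (iii): For $X=(v_1^c,\ldots,v_d^c)\in M_{1,d}(\mathcal{C}^d)$, use the embedding $\mathcal{C}^d=B(\mathbb{C},\ell_2^d)$ to view $X$ as a linear map $\mathbb{C}^d\to \ell_2^d$; this is precisely the $d\times d$ matrix with columns $v_1,\ldots,v_d$, whose operator norm is the $\mathcal{S}_\infty^d$-norm. The identity $M_{d,1}(\mathcal{R}^d)=\mathcal{S}_\infty^d$ is symmetric (row-stacking). For $M_{1,d}(\mathcal{R}^d)$, each entry is a row of length $d$, so $X$ is a single row of length $d^2$, with operator norm equal to its Euclidean length; reinterpreting the $d^2$ entries as a $d\times d$ matrix, this is exactly the Hilbert--Schmidt norm $\|\cdot\|_{\mathcal{S}_2^d}$, and $M_{d,1}(\mathcal{C}^d)=\mathcal{S}_2^d$ is analogous.

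For (i): I would prove $\mathcal{H}_c\otimes_h \mathcal{K}_c=(\mathcal{H}\otimes_2\mathcal{K})_c$ completely isometrically by showing the canonical bilinear map $(v,w)\mapsto v\otimes w$ respects both families of matrix norms. First, compute the target norm: via $(\mathcal{H}\otimes_2\mathcal{K})_c\hookrightarrow B(\mathbb{C},\mathcal{H}\otimes_2\mathcal{K})$, the column norm of $Z=(z_{ij})\in M_n((\mathcal{H}\otimes_2\mathcal{K})_c)$ is the operator norm of the map $\mathbb{C}^n\to(\mathcal{H}\otimes_2\mathcal{K})^n$ it defines, i.e.\ the square root of $\|[\sum_k\langle z_{ki},z_{kj}\rangle]\|_{M_n}$. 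For the $\leq$ direction, given a factorization $Z=X\odot Y$ with $X\in M_{n,m}(\mathcal{H}_c)$ and $Y\in M_{m,n}(\mathcal{K}_c)$, interpret $X,Y$ as bounded Hilbert-space maps and bound the Gram matrix of $X\odot Y$ using Cauchy--Schwarz and the factor norms. For the $\geq$ direction, fix an orthonormal basis $(\xi_\alpha)$ of $\mathcal{K}$ and expand $z_{ij}=\sum_\alpha h_{ij}^{(\alpha)}\otimes \xi_\alpha$; the natural factorization $X=(h_{ij}^{(\alpha)})_{i,(j,\alpha)}$ with $Y$ assembled from the $\xi_\alpha$'s exhibits $Z=X\odot Y$ with product norm matching the target, after a careful rearrangement argument. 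The row identity $\mathcal{K}_r\otimes_h \mathcal{H}_r=(\mathcal{H}\otimes_2\mathcal{K})_r$ follows from the same template with the tensor order reversed (since rows compose in the opposite order), or by applying the column result to conjugate spaces. I expect the sharp $\geq$ bound at every matrix level to be the main obstacle, as it requires verifying that the chosen basis-dependent factorization is optimal uniformly in $n$.

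For (ii): Use the canonical algebraic identification of $\mathcal{H}\otimes \mathcal{H}^*$ with the finite-rank operators on $\mathcal{H}$, sending $v\otimes \varphi$ to the rank-one operator $x\mapsto \varphi(x)v$, which extends by density to an embedding of $\mathcal{H}_c\otimes_h(\mathcal{H}^*)_r$ into $\mathcal{S}_\infty(\mathcal{H})$. To see this is a complete isometry, take a Haagerup factorization $Z=X\odot Y$ with $X\in M_{n,m}(\mathcal{H}_c)$ and $Y\in M_{m,n}((\mathcal{H}^*)_r)$: interpret $X$ as $\mathbb{C}^m\to \mathcal{H}^n$ and $Y$ as $\mathcal{H}^n\to\mathbb{C}^m$ using $(\mathcal{H}^*)_r=B(\mathcal{H},\mathbb{C})$, and recognize $X\odot Y$ as an element of $M_n(\mathcal{S}_\infty(\mathcal{H}))=B(\mathcal{H}^n)$. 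The $\leq$ inequality follows by direct estimation; for $\geq$, a singular value decomposition of $Z\in B(\mathcal{H}^n)$ yields an optimal column-row factorization. The identification $\mathcal{S}_1(\mathcal{H})=\mathcal{H}_r\otimes_h(\mathcal{H}^*)_c$ is obtained by the same scheme with the roles of column/row reversed, the key point being that trace-class operators factor as (row)$\cdot$(column) with matching Schatten-$2$ factor norms (the tracial analogue, via polar decomposition: $T=U|T|^{1/2}\cdot|T|^{1/2}$). Equivalently, in the finite-dimensional case one can deduce $\mathcal{S}_1=\mathcal{R}\otimes_h\mathcal{C}$ from $\mathcal{S}_\infty=\mathcal{C}\otimes_h \mathcal{R}$ by operator-space duality, using $(\mathcal{C})^*=\mathcal{R}$, $(\mathcal{R})^*=\mathcal{C}$ and the Haagerup duality $(\mathcal{X}\otimes_h\mathcal{Y})^*=\mathcal{Y}^*\otimes_h\mathcal{X}^*$ (note the order reversal), which explains why the column/row order is swapped between $\mathcal{S}_\infty$ and $\mathcal{S}_1$.
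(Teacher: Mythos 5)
Your proposal is correct and follows essentially the same route as the paper: you unpack the Haagerup norm via the definition, identify factorizations $Z = X \odot Y$ with bounded-operator compositions, and use H\"older/polar decomposition for the trace-class case. Two small remarks: the paper treats (i) as straightforward and omits it entirely, whereas you supply a reasonable Gram-matrix sketch (and indeed the $\geq$ direction, producing an optimal factorization uniformly in $n$, is where the genuine work is); and for $\mathcal{S}_1 = \cR \otimes_h \cC$ your closing duality observation, $(\cX \otimes_h \cY)^* = \cY^* \otimes_h \cX^*$ with $\cC^* = \cR$, gives a cleaner conceptual explanation of the column/row swap than the paper's direct H\"older computation, at the cost of invoking a duality theorem the paper only proves later (and only in finite dimensions).
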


	Let $(\mathcal{X}_0,\mathcal{X}_1)$ and $(\mathcal{Y}_0,\mathcal{Y}_1)$ be two compatible couples of operator spaces. Then it can be verified that
	$(\mathcal{X}_0\otimes_h\mathcal{Y}_0,\mathcal{X}_1\otimes_h\mathcal{Y}_1)$ is also a compatible couple of operator spaces. The following theorem states that an interpolation space obtained from this couple can be written as the Haagerup tensor product of the interpolation spaces obtained from respective components. This theorem plays a pivotal role in identifying the norms of operator-valued Schatten spaces as variational expressions.

	\begin{theorem}[\cite{Pisier, Kouba}]\label{KouPis}
		Let $(\mathcal{X}_0,\mathcal{X}_1)$ and $(\mathcal{Y}_0,\mathcal{Y}_1)$ be two compatible couples of operator spaces. Then, for every $0<\theta<1$, we have a complete isometry
		\begin{equation}\label{eq:KouPis}
			[\mathcal{X}_0\otimes_h\mathcal{Y}_0,\mathcal{X}_1\otimes_h\mathcal{Y}_1]_\theta=[\mathcal{X}_0,\mathcal{X}_1]_\theta \otimes_h [\mathcal{Y}_0,\mathcal{Y}_1]_\theta.
		\end{equation}
		$($i.e., the Haagerup tensor product and interpolation are commuting operations.$)$
	\end{theorem}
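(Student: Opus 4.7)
The plan is to establish Theorem~\ref{KouPis} by proving the two norm inequalities at every matrix amplification level $n\geq 1$; since the argument will go through uniformly in $n$, this upgrades the identity of norms to a complete isometry. The two directions are of quite different character.

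The ``easy'' direction, namely
$$\|z\|_{[\mathcal{X}_0\otimes_h\mathcal{Y}_0,\mathcal{X}_1\otimes_h\mathcal{Y}_1]_\theta}\leq\|z\|_{[\mathcal{X}_0,\mathcal{X}_1]_\theta\otimes_h[\mathcal{Y}_0,\mathcal{Y}_1]_\theta},$$
will proceed via bilinear interpolation combined with the universal property of the Haagerup tensor product. The tensor map $(x,y)\mapsto x\otimes y$ is jointly completely contractive as a bilinear map from $\mathcal{X}_j\times\mathcal{Y}_j$ into $\mathcal{X}_j\otimes_h\mathcal{Y}_j$ for $j=0,1$. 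A bilinear version of Theorem~\ref{thm:interpolation-bound}---proved by multiplying two elements of $\mathscr{F}$ and invoking the three-lines lemma---then shows that this map remains completely contractive from $[\mathcal{X}_0,\mathcal{X}_1]_\theta\times[\mathcal{Y}_0,\mathcal{Y}_1]_\theta$ into $[\mathcal{X}_0\otimes_h\mathcal{Y}_0,\mathcal{X}_1\otimes_h\mathcal{Y}_1]_\theta$, and linearizing through $\otimes_h$ produces the desired complete contraction.

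The reverse inequality constitutes the substantive content of the theorem. Given $Z\in M_n([\mathcal{X}_0\otimes_h\mathcal{Y}_0,\mathcal{X}_1\otimes_h\mathcal{Y}_1]_\theta)$ of norm strictly less than $1$, I would pick $F\in\mathscr{F}$ with $F(\theta)=Z$ and $\sup_t\|F(b+it)\|<1$ for $b=0,1$. At (almost) every boundary point the Haagerup norm supplies an approximate factorization $F(b+it)=U_b(t)\odot V_b(t)$, which after symmetrization satisfies $\|U_b(t)\|=\|V_b(t)\|$. The goal is to glue these boundary factorizations into \emph{analytic} factors $U(z),V(z)$ on $\mathbb{S}$ with $F(z)=U(z)\odot V(z)$ throughout the strip, so that $U(\theta)\odot V(\theta)=Z$ becomes an admissible factorization in the interpolated Haagerup norm at level $\theta$.

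The main obstacle---and the reason the Wiener--Masani theorem (Appendix~\ref{sec:Wiener--Masani}) enters the argument---is that the boundary factorizations are highly non-unique, and no naive choice varies analytically in $z$. Following Kouba, the remedy is to transfer from $\mathbb{S}$ to the disk via the conformal map~\eqref{eq:conformal-map} and to apply Wiener--Masani to the positive operator-valued functions $U_b(t)U_b(t)^*$ and $V_b(t)^*V_b(t)$ on the two boundary components; this produces outer analytic functions realizing the prescribed boundary absolute values. An auxiliary analytic alignment by a standard Hardy-space argument then ensures $F=U\odot V$ on all of $\mathbb{S}$ rather than merely on the boundary. Finally, the subharmonicity Lemma~\ref{lem:sub-harmonicity} together with the Poisson kernel representation on $\mathbb{S}$ from Section~\ref{sec:interpolation} bounds $\|U(\theta)\|\cdot\|V(\theta)\|$ by the boundary norms of $F$, delivering the required inequality with optimal constant $1$. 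All estimates being uniform in $n$ yields the complete isometry.
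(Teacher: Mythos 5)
Your first direction mirrors the paper's: both form the product $h(z)=f(z)\odot g(z)$ of two $\mathscr{F}$-functions and estimate the Haagerup norm pointwise on the boundary, so whether one phrases this as ``bilinear interpolation'' or writes the product directly is immaterial.

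For the substantive direction your strategy is recognizably Kouba's, but three points in your outline are either missing or, as phrased, would not go through. First, before you can apply Wiener--Masani you must know that the boundary factorizations $F(b+it)=U_b(t)\odot V_b(t)$ can be chosen to be \emph{measurable} in $t$; the Haagerup norm only furnishes an optimizer pointwise, and the paper invokes the Kuratowski--Ryll-Nardzewski selection theorem (Lemma~\ref{selection lemma}) precisely to get measurable near-optimal $A^{(b)}, B^{(b)}$. Your proposal is silent on this. Second, $U_b(t)U_b(t)^*$ need not be bounded below, so Wiener--Masani's hypothesis $0<\lambda I\le T$ can fail; the paper perturbs to $A^{(b)}A^{(b)*}+\tfrac{\epsilon^2}{4\|I\|^2}I$ and absorbs the error into the final $\epsilon$. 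Third---and this is the real problem with your ``apply Wiener--Masani to both $U_bU_b^*$ and $V_b^*V_b$ and then align''---running the factorization independently on both sides produces two outer functions whose boundary absolute values are right but whose phases are unrelated, so in general $U(z)\odot V(z)\ne F(z)$, and no ``standard Hardy-space argument'' reconciles them. What actually works, and what the paper does, is to apply Wiener--Masani \emph{once} to obtain a bounded holomorphic, everywhere-invertible $F(z)$, and then set $G(z):=F(z)^{-1}h(z)$; this $G$ is automatically holomorphic and gives $h=FG$ by construction, and the contraction $S^{(b)}(b+it)=F(b+it)^{-1}A^{(b)}(b+it)$ controls $\|G\|$ on the boundary. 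The closing estimate via Lemma~\ref{lem:sub-harmonicity} is then exactly as you indicate.
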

	
	The proof that follows is essentially borrowed from~\cite{Kouba} and relies heavily on a variant of the Wiener--Masani theorem which is a factorization method for operator-valued functions tailored to our purposes (Appendix~\ref{sec:Wiener--Masani}). The other major ingredients of the proof are a \emph{subharmonicity lemma} and a \emph{selection lemma} which we state and prove in Appendix~\ref{sec:Subharmonicity}. These are mostly technical details, so the reader is advised to first read the following proof to acquaint themself with the main ideas, and then, if interested, proceed to Appendices~\ref{sec:Subharmonicity} and~\ref{sec:Wiener--Masani}. The latter appendix on the Wiener--Masani theorem contains ideas that are independent of the rest of the paper and is based on~\cite{Helson}.
	
	\begin{proof}
		To simplify notation, we denote $\mathcal{X}_\theta=[\mathcal{X}_0,\mathcal{X}_1]_\theta$, $\mathcal{Y}_\theta=[\mathcal{Y}_0,\mathcal{Y}_1]_\theta$ and $\mathcal{Z}_\theta=[\mathcal{Z}_0,\mathcal{Z}_1]_\theta$ where $\mathcal{Z}_j=\mathcal{X}_j\otimes_{h}\mathcal{Y}_j$, $j=0,1$. Then \eqref{eq:KouPis} is tantamount to
		$$
		\|W\|_{M_n(\mathcal{Z}_\theta)}=\|W\|_{M_n(\mathcal{X}_\theta\otimes_{h}\mathcal{Y}_\theta)}, \ \ \ n\geq1.
		$$
		We assume that all the spaces are finite-dimensional,\footnote{For a generalization to infinite dimensions the reader is referred to \cite{Pisier}.} and prove this identity only for $n=1$, as the generalization to $n>1$ is straightforward.
		
		Let us first show that $\|W\|_{\mathcal{Z}_\theta}\leq\|W\|_{\mathcal{X}_\theta\otimes_{h}\mathcal{Y}_\theta}$. Let $\epsilon>0$ be arbitrary. By definition we have
		\begin{equation*}
			\|W\|_{\mathcal{X}_\theta\otimes_{h}\mathcal{Y}_\theta}=\inf\big\{\|X\|_{M_{1,m}(\mathcal{X}_\theta)}\cdot\|Y\|_{M_{m,1}(\mathcal{Y}_\theta)}:W=X\odot Y\big\},
		\end{equation*}
	where $X\in M_{1,m}(\mathcal{X}_\theta)$, $Y\in M_{m,1}(\mathcal{Y}_\theta)$, and
		\begin{equation*}
			\|X\|_{M_{1,m}(\mathcal{X}_\theta)}=	\|X\|_{[M_{1,m}(\mathcal{X}_0),M_{1,m}(\mathcal{X}_1)]_\theta}.
		\end{equation*}
		Therefore, by the definition of the norm in the interpolation space, there is a holomorphic function $f(z)$ such that $f(\theta)=X$ and
		\begin{equation}\label{eqa}
			\Big( \sup_t \|f(it)\|_{M_{1,m}(\mathcal{X}_0)}\Big)^{1-\theta} \Big(\sup_t \|f(1+it)\|_{M_{1,m}(\mathcal{X}_1)}   \Big)^\theta<(1+\epsilon)\|X\|_{M_{1,m}(\mathcal{X}_\theta)}.
		\end{equation}
		Similarly, there is a holomorphic function $g(z)$ such that $g(\theta)=Y$ and
		\begin{equation}\label{eqb}
			\Big( \sup_t \|g(it)\|_{M_{m,1}(\mathcal{Y}_0)}\Big)^{1-\theta} \Big(\sup_t \|g(1+it)\|_{M_{m,1}(\mathcal{Y}_1)}   \Big)^\theta<(1+\epsilon)\|Y\|_{M_{m,1}(\mathcal{Y}_\theta)}.
		\end{equation}
		Then, the function $h(z):=f(z)\odot g(z)$ is holomorphic, $h(\theta)=W$, and
		\begin{equation}\label{eqc}
			\|W\|_{\mathcal{Z}_\theta}\leq \Big( \sup_t \|h(it)\|_{\mathcal{Z}_0}\Big)^{1-\theta} \Big(\sup_t \|h(1+it)\|_{\mathcal{Z}_1}   \Big)^\theta.
		\end{equation}
		Since $\mathcal{Z}_j=\mathcal{X}_j\otimes_{h}\mathcal{Y}_j$, $j=0,1$, by the definition of the Haagerup norm we have
		\begin{equation}\label{eqd}
			\|h(j+it)\|_{\mathcal{Z}_j}\leq \|f(j+it)\|_{M_{1,m}(\mathcal{X}_j)}\cdot \|g(j+it)\|_{M_{m,1}(\mathcal{Y}_j)}.
		\end{equation}
		Finally, combining \eqref{eqa}, \eqref{eqb}, \eqref{eqc}, \eqref{eqd} and letting $\epsilon\rightarrow0$ yields 
		$$\|W\|_{\mathcal{Z}_\theta}\leq \|X\|_{M_{1,m}(\mathcal{X}_\theta)}\cdot \|Y\|_{M_{m,1}(\mathcal{Y}_\theta)},$$ 
		which in turn gives us the desired inequality upon minimizing over all $X,Y$ as above.
		
		\medskip
		To prove the reversed inequality, we assume without loss of generality that $\mathcal{X}_0,\mathcal{X}_1$ have the same dimension, say $d$, and identify them with $M_{d,1}$ as vector spaces. Likewise, we identify $\mathcal{Y}_0,\mathcal{Y}_1$ with $M_{1,d'}$ as vector spaces with the same dimension $d'$. Then, considering the canonical identifications
		\begin{equation*}
			M_{1,m}(\mathcal{X}_b)\cong M_{1,m}\otimes M_{d,1}\cong M_{d,m},
		\end{equation*}
		and
		\begin{equation*}
			M_{m,1}(\mathcal{Y}_b)\cong M_{m,1}\otimes M_{1,d'}\cong M_{m,d'},
		\end{equation*}
		we observe that for $\widetilde{X}\in M_{1,m}(\mathcal{X}_b)$ and $\widetilde{Y}\in M_{m,1}(\mathcal{Y}_b)$, we have $\widetilde{X}\odot \widetilde{Y}=X Y$ where $X\in M_{d,m}$ and $Y\in M_{m,d'}$ correspond, respectively, to $\widetilde{X}$ and $\widetilde{Y}$ via the above-mentioned identifications. In the rest of the proof, we assume  that $d\leq d'$. Also, applying the polar decomposition we assume without loss of generality that $m=d$.
		
		Let $\epsilon>0$ be arbitrary. Then, by definition there is a holomorphic function $h(z)$ such that $h(\theta)=W$ and
		\begin{equation}\label{ff}
			\max\big\{\sup_t \|h(b+it)\|_{\mathcal{Z}_b}:b=0,1\big\}<(1+\epsilon)\|W\|_{\mathcal{Z}_\theta}.
		\end{equation}
		Note that
		\begin{equation*}
			\|h(b+it)\|_{\mathcal{Z}_b}=\inf\big\{\|A\|_{M_{1,d}(\mathcal{X}_b)}\cdot \|B\|_{M_{d,1}(\mathcal{Y}_b)}:\, h(b+it)=A B,A\in M_{d,d},B\in M_{d,d'}\big\},
		\end{equation*}
		where as mentioned above we have identified $M_{1,d}(\mathcal{X}_b)\cong M_{d, d}$ and $M_{d,1}(\mathcal{Y}_b)\cong M_{d,d'}$.
		Then, by applying an appropriate \emph{selection theorem} as in Lemma~\ref{selection lemma}, we obtain bounded measurable functions $A^{(b)}(b+it),B^{(b)}(b+it)$ such that for every $t\in \mathbb{R}$ and $b=0,1$,
		\begin{equation}\label{K1}
			h(b+it)=A^{(b)}(b+it) B^{(b)}(b+it),
		\end{equation}
		\begin{equation}\label{K2}
			\|A^{(b)}(b+it)\|_{M_{1,d}(\mathcal{X}_b)}\leq 1+\frac{\epsilon}{2},
		\end{equation}
		\begin{equation}\label{K3}
			\|B^{(b)}(b+it)\|_{M_{d,1}(\mathcal{Y}_b)}\leq \|h(b+it)\|_{\mathcal{Z}_b}.
		\end{equation}
		Next, using the Wiener--Masani theorem (see Corollary~\ref{corr:Wiener--Masani}) we can find a bounded holomorphic function $F(z)$ such that for almost every $t\in \mathbb{R}$,
		\begin{equation}\label{gg}
			F(b+it) F(b+it)^*=A^{(b)}(b+it) A^{(b)}(b+it)^*+\frac{\epsilon^2}{4\|I\|_{M_{1,d}(\mathcal{X}_b)}^2}I.
		\end{equation}
		It follows from \eqref{K2}, \eqref{gg} and the $C^*$-property of the norms that for almost every $t\in \mathbb{R}$,
		\begin{equation}\label{kk}
			\|F(b+it)\|_{M_{1,d}(\mathcal{X}_b)}\leq \left(\|A^{(b)}(b+it)\|_{M_{1,d}(\mathcal{X}_b)}^2+\frac{\epsilon^2}{4}\right)^{\frac{1}{2}}\leq 1+\epsilon.
		\end{equation}
		We note that by the Wiener--Masani theorem $F(z)$ is invertible almost everywhere. Thus, we can define measurable functions $S^{(b)}(\cdot)$, $b=0,1$, such that
		\begin{equation*}
			S^{(b)}(b+it)=F(b+it)^{-1} A^{(b)}(b+it),
		\end{equation*}
		for almost every $t$.
		Then, $S^{(b)}(b+it)S^{(b)}(b+it)^*\leq I$ for almost every $t$ since $A^{(b)}(b+it) A^{(b)}(b+it)^*\leq F(b+it) F(b+it)^*$.
		
		Now, define the holomorphic function $G(z)$ on $\mathbb S$ by
		$$G(z):=F(z)^{-1}\cdot h(z).$$  
		By~\eqref{K1} and the preceding paragraph we obtain
		\begin{align*}
			G(b+it) &=F(b+it)^{-1}\cdot A^{(b)}(b+it) B^{(b)}(b+it)\\
			&=S^{(b)}(b+it) B^{(b)}(b+it).
		\end{align*}
		This implies that $G(z)$ is bounded, and by~\eqref{K3} for almost every $t$ we have
		\begin{equation}\label{kkk}
			\|G(b+it)\|_{M_{d,1}(\mathcal{Y}_b)} \leq \|B^{(b)}(b+it)\|_{M_{d,1}(\mathcal{Y}_b)}\leq \|h(b+it)\|_{\mathcal{Z}_b}.
		\end{equation}
		Finally, using \eqref{ff}, \eqref{kk}, \eqref{kkk}, and part (ii) of Lemma \ref{lem:sub-harmonicity}, together with the identities just above the lemma, we obtain
		\begin{align*}
			\|W\|_{\mathcal{X}_\theta\otimes_{h}\mathcal{Y}_\theta} &=\|h(\theta)\|_{\mathcal{X}_\theta\otimes_{h}\mathcal{Y}_\theta}\\
			&\leq \|F(\theta)\|_{M_{1,d}(\mathcal{X}_b)} \|G(\theta)\|_{M_{d,1}(\mathcal{Y}_b)}\\
			&\leq \exp \sum_{b}\int P_b(\theta,s)\log \Big(\|F(b+is)\|_{M_{1,d}(\mathcal{X}_b)} \|G(b+is)\|_{M_{d,1}(\mathcal{Y}_b)} \Big)\dd s\\
			&\leq \exp \sum_{b}\int P_b(\theta,s)\log\Big( (1+\epsilon) \|h(b+is)\|_{\mathcal{Z}_b}\Big) \dd s\\
			&\leq \exp \sum_{b}\int P_b(\theta,s)\log\bigg( (1+\epsilon)\Big(\sup_t \|h(b+it)\|_{\mathcal{Z}_b}\Big)\bigg) \dd s\\
			&=(1+\epsilon)\Big( \sup_t \|h(it)\|_{\mathcal{Z}_0}\Big)^{1-\theta} \Big(\sup_t \|h(1+it)\|_{\mathcal{Z}_1}   \Big)^\theta\\
			&\leq (1+\epsilon)\max\big\{\sup_t\|h(b+it)\|_{\mathcal{Z}_b}:b=0,1\big\}\\
			&\leq (1+\epsilon)^2\|W\|_{\mathcal{Z}_\theta}.
		\end{align*}
		Letting $\epsilon\rightarrow0$, the desired inequality is obtained.
	\end{proof}

	\section{Operator-valued Schatten spaces}\label{sec:NCVV}
	
	In this section, building upon the previous constructions and results, we finally introduce the family $\mathcal{S}_p[\cX]$, $1\leq p\leq\infty$, for an operator space $\cX$. Indeed, we first introduce the endpoint spaces $\mathcal{S}_\infty[\mathcal{X}]$ and $\mathcal{S}_1[\mathcal{X}]$ as certain Haagerup tensor products of $\mathcal{X}$ with the column and row operator spaces, and define $\mathcal{S}_p[\cX]$, $1< p<\infty$, to be an interpolation space between them. Then, by employing Theorem \ref{KouPis}, we establish an identification of $\mathcal{S}_p[\cX]$ which allows us to view $\mathcal{S}_\infty[\mathcal{X}]$, $\mathcal{S}_1[\mathcal{X}]$ and $\mathcal{S}_p[\cX]$ in a unified way. Next, we make use of this identification to show that the family $\mathcal{S}_p[\mathcal{S}_q]$, $1\leq p,q\leq\infty$, is an interpolation scale. In the final part of this section we establish variational characterizations of the norms of $\mathcal{S}_p[\cX]$ and $\mathcal{S}_p[\mathcal{S}_q]$.

	\subsection{Definition and basic properties}
	
	For an operator space $\mathcal{X}$, we first put
	\begin{equation}\label{SinftySone}
		\mathcal{S}_\infty[\mathcal{X}]:=\mathcal{C}\otimes_h\mathcal{X}\otimes_h \mathcal{R} \ \ \ \ \text{and} \ \ \ \ \mathcal{S}_1[\mathcal{X}]:=\mathcal{R}\otimes_h\mathcal{X}\otimes_h \mathcal{C}.
	\end{equation}

	These definitions are consistent with, and indeed motivated by, the identities $\mathcal{S}_\infty = C\otimes_h \cR$ and $\mathcal{S}_1=\cR\otimes_h \cC$ (see Proposition \ref{prop:h-tensor-R-C}(ii)). In fact, by taking $\mathcal{X}=\C$ we get $\mathcal{S}_\infty[\C]=\mathcal{S}_\infty$ and $\mathcal{S}_1[\C]=\mathcal{S}_1$.
	
	Recalling the notation developed in Subsection~\ref{subsec:C-R-spaces}, if $\{e_i^c:\, i\} $ and $\{e^r_i:\, i\}$ are the orthonormal bases of $\cC^n$ and $\cR^n$, respectively, corresponding to the standard basis of $\ell_2^n$, then one can identify $\mathcal{S}_\infty^n[\mathcal{X}]$ with $M_n(\cX)$ via the natural completely isometric map
	\begin{equation}\label{iddd}
		[X_{ij}]\mapsto\sum_{i,j}e_i^c\otimes X_{ij}\otimes e_j^r.
	\end{equation}
    One can also identify $\mathcal{S}_1^n[\mathcal{X}]$ with $M_n(\cX)$ similarly. However, this latter identification will not be a complete isometry.
	
	\begin{definition}\label{vector valued Schatten}
		Let $1<p<\infty$. The {\it $\mathcal{X}$-valued $p$-Schatten space} $\mathcal{S}_p[\mathcal{X}]$ is defined to be the operator space
		\begin{equation*}
			\mathcal{S}_p[\mathcal{X}]:=\big[\mathcal{S}_\infty[\mathcal{X}],\mathcal{S}_1[\mathcal{X}]\big]_{\frac{1}{p}}.
		\end{equation*}
		$\mathcal{S}_p^n[\cX]$ for $n\geq 1$ is defined similarly.
	\end{definition}

In the following theorem, we obtain an identification of 	$\mathcal{S}_p[\mathcal{X}]$ for $1<p<\infty$ that complements the identities in \eqref{SinftySone}, and allows us to treat $\mathcal{S}_\infty[\mathcal{X}]$, $\mathcal{S}_1[\mathcal{X}]$ and $\mathcal{S}_p[\mathcal{X}]$ ($1<p<\infty$) on an equal footing. Before proceeding further, we introduce a piece of notation:
\begin{equation*}\label{eq:def-R-theta}
	\mathcal{R}(\theta):=[\mathcal{R},\mathcal{C}]_\theta=[\mathcal{C},\mathcal{R}]_{1-\theta}, \ \ \ \mathcal{R}(0):=\mathcal{R}, \ \ \ \mathcal{R}(1):=\mathcal{C}.
\end{equation*}

\begin{theorem}\label{identification}
	Let $1<p<\infty$. Then we have completely isometrically
	\begin{equation*}
		\mathcal{S}_p[\mathcal{X}]=\mathcal{R}\big(1-\tfrac1{p}\big)\otimes_h\mathcal{X}\otimes_h\mathcal{R}\big(\tfrac1{p}\big).
	\end{equation*}
\end{theorem}
\begin{proof}
	By associativity of the Haagerup tensor product and an iterated use of Theorem \ref{KouPis}, we obtain
	\begin{align*}
			\mathcal{S}_p[\mathcal{X}]
		&=[\mathcal{C}\otimes_h\mathcal{X}\otimes_h \mathcal{R},\mathcal{R}\otimes_h\mathcal{X}\otimes_h \mathcal{C}]_{\frac{1}{p}}\\
		&=[\mathcal{C},\mathcal{R}]_{\frac{1}{p}}\otimes_h[\mathcal{X}\otimes_h \mathcal{R},\mathcal{X}\otimes_h \mathcal{C}]_{\frac{1}{p}}\\
		&=[\mathcal{C},\mathcal{R}]_{\frac{1}{p}}\otimes_h\mathcal{X}\otimes_h[\mathcal{R},\mathcal{C}]_{\frac{1}{p}}\\
		&=\mathcal{R}\big(1-\tfrac1{p}\big)\otimes_h\mathcal{X}\otimes_h\mathcal{R}\big(\tfrac1{p}\big),
	\end{align*}
as desired.
\end{proof}

Note that for $\cX=\C$, this result yields the completely isometric identity
\begin{align*}
	\mathcal{S}_p=\mathcal{R}\big(1-\tfrac1{p}\big)\otimes_h\mathcal{R}\big(\tfrac1{p}\big),
\end{align*}
which is the analogue of Proposition~\ref{prop:h-tensor-R-C}(ii) for $1<p<\infty$. Note also that one can view the elements of $\mathcal{S}_p[\mathcal{X}]$ as $\mathcal{X}$-valued matrices via a similar identification as in \eqref{iddd}. However, the identification will not be a complete isometry unless $p=\infty$.

In the next result, we collect some basic properties of the operator-valued Schatten spaces which will be used in the sequel. For the sake of completeness, the proofs are given in Appendix~\ref{proofs}.

\begin{prop}\label{interpolation}
	Let $\mathcal{X},\mathcal{X}_0,\mathcal{X}_1$ be operator spaces, and $1\leq p_0,p_1,q_0,q_1\leq\infty$. Then the following statements hold true.
	
	(i) Interpolation: If $(\mathcal{X}_0,\mathcal{X}_1)$ is a compatible couple, then we have completely isometrically
	\begin{equation}\label{Pis1}
		\mathcal{S}_{p_\theta}[\mathcal{X}_\theta]=\big[\mathcal{S}_{p_0}[\mathcal{X}_0],\, \mathcal{S}_{p_1}[\mathcal{X}_1]\big]_\theta,
	\end{equation}
	where $p_\theta$ is given by $\frac1{p_{\theta}}=\frac{1-\theta}{p_0}+\frac{\theta}{p_1}$ and $\cX_\theta =[\cX_0, \cX_1]_\theta$.
	In particular, if $q_\theta$ is given by $\frac1{q_{\theta}}=\frac{1-\theta}{q_0}+\frac{\theta}{q_1}$, we have completely isometrically
	\begin{equation}\label{Pis2}
		\mathcal{S}_{p_\theta}[\mathcal{S}_{q_\theta}]=\big[\mathcal{S}_{p_0}[\mathcal{S}_{q_0}],\, \mathcal{S}_{p_1}[\mathcal{S}_{q_1}]\big]_\theta.
	\end{equation}

    (ii) Duality: Let $1<p\leq\infty$ and $p'$ be given by $1=\frac{1}{p}+\frac{1}{p'}$. Then we have completely isometrically
    \begin{equation*}
	\mathcal{S}_p[\mathcal{X}]^*=\mathcal{S}_{p'}[\mathcal{X}^*].
    \end{equation*}

    (iii) Fubini's theorem: For $1\leq p\leq\infty$, we have completely isometrically
    \begin{equation*}
    	\mathcal{S}_p[\mathcal{S}_p]=\mathcal{S}_p(\ell_2\otimes_2\ell_2).
    \end{equation*}
\end{prop}

	\subsection{Pisier's variational formula} 
	Here, we finally  prove the promised  characterization of the norm in the space $\mathcal{S}_p[\mathcal{S}_q]$. Its proof is based on the so-called Pisier's variational formula.

	\begin{theorem}[Pisier's formula \cite{Pisier2}]\label{Pisier's formula}
		Let $1\leq p<\infty$. Then for every $T\in \mathcal{S}_p[\mathcal{X}]$ we have
		\begin{equation*}
			\|T\|_{\mathcal{S}_p[\mathcal{X}]}=\inf\|A\|_{\mathcal{S}_{2p}}\|X\|_{\mathcal{S}_\infty[\mathcal{X}]}\|B\|_{\mathcal{S}_{2p}},
		\end{equation*}
		where the infimum is taken over all $A,B\in \mathcal{S}_{2p}$ and $X\in \mathcal{S}_\infty[\mathcal{X}]$ such that $T=A X B$.  As explained above, we understand the identity $T=A X B$ by thinking of $T, X$ as $\cX$-valued matrices.
	\end{theorem}
\begin{proof}
	Considering $\mathcal{S}_p^d[\cX]$ that is easier to work with,	we note that $\|X\|_{\mathcal{S}^d_\infty(\cX)}=\|X\|_{M_d(\cX)}$.
	Then, using Theorem~\ref{identification} and the definition of Haagerup tensor product, we need to evaluate $\|A\|_{M_{1,n}(\cR^d(1-\theta))}$ and $\|B\|_{M_{n, 1}(\cR^d(\theta))}$, where $\theta=\frac{1}{p}$. We first pad $A$ with zeros if necessary in order to assume that $n=d$. 
	Next, by the definition of the interpolation of operator spaces we have
	\begin{align*}
		M_{1,n}(\cR^n(1-\theta)) &= \big[ M_{1,n}(\cR^n), M_{1,n}(\cC^n)\big]_{1-\theta} \\
		& =   \big[ \mathcal{S}_2^n, \mathcal{S}_\infty^n\big]_{1-\theta} \\
		& = \mathcal{S}_{2p}^n,
	\end{align*}
	where the second equality follows from  part (iii) of Proposition~\ref{prop:h-tensor-R-C}, and the third equality follows from~\eqref{eq:Sp-interpolation} and the reiteration theorem. Therefore, $\|A\|_{M_{1,n}(\cR^n(1-\theta))} =\|A\|_{\mathcal{S}_{2p}^n}$.  Similarly, it can shown that $\|B\|_{M_{n, 1}(\cR^n(\theta))}$ equals the $2p$-norm of $B$. Then the desired result follows.
\end{proof}

	\begin{theorem}[\cite{Pisier2, Junge96}]\label{thm:Pisier-Junge-formula}
		Let $1\leq p\leq q\leq \infty$ and define $r$ by 
		$$\frac{1}{p}=\frac{1}{q}+\frac{1}{r}.$$
		Then we have the following:
		\begin{itemize}
			\item[{\rm (i)}] For every $X$ in $\mathcal{S}_p[\mathcal{S}_q]$ define
			\begin{equation}\label{eq:(p,q)-norm}
				\|X\|_{(p,q)}:=\inf\|A\|_{\mathcal{S}_{2r}}\|Y\|_{\mathcal{S}_q(\ell_2\otimes_2\ell_2)}\|B\|_{\mathcal{S}_{2r}},
			\end{equation}
			where the infimum is taken over all $A,B\in \mathcal{S}_{2r}$ and $Y\in \mathcal{S}_q(\ell_2\otimes_2\ell_2)$ such that $X=A Y B$. Then we have $\|X\|_{(p, q)}=\|X\|_{\mathcal{S}_p[\mathcal{S}_q]}$. We note that the multiplication $AYB$ is understood as before, and indeed is equal to $(A\otimes I_{\ell_2}) Y (B\otimes I_{\ell_2})$, where $I_{\ell_2}\in  B(\ell_2)$ is the identity operator.

			\item[{\rm (ii)}] For every $X$ in $\mathcal{S}_q[\mathcal{S}_p]$ define
			\begin{equation}\label{eq:(q,p)-norm}
				\|X\|_{(q,p)}:=\sup\|A X B\|_{\mathcal{S}_p(\ell_2\otimes_2\ell_2)},
			\end{equation}
			where the supremum is taken over all $A,B\in \mathcal{S}_{2r}$ such that $\|A\|_{\mathcal{S}_{2r}},\|B\|_{\mathcal{S}_{2r}}\leq 1$. Then we have $\|X\|_{(q, p)}=\|X\|_{\mathcal{S}_q[\mathcal{S}_p]}$.
		\end{itemize}
	\end{theorem}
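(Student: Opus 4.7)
The plan is to derive both parts from the general Pisier variational formula (Theorem~\ref{Pisier's formula}) combined with Hölder-type factorizations in the Schatten classes and the Fubini identity $\mathcal{S}_q[\mathcal{S}_q]=\mathcal{S}_q(\ell_2\otimes_2\ell_2)$ from Lemma~\ref{lem:Fu}. The central algebraic device is the Hölder \emph{saturation} for $\mathcal{S}_{2p}$: since $\tfrac{1}{2p}=\tfrac{1}{2r}+\tfrac{1}{2q}$, any $\alpha\in\mathcal{S}_{2p}$ with polar decomposition $\alpha=U|\alpha|$ splits as $\alpha=(U|\alpha|^{p/r})\cdot(|\alpha|^{p/q})$, yielding $A\in\mathcal{S}_{2r}$ and $A'\in\mathcal{S}_{2q}$ with $\|A\|_{\mathcal{S}_{2r}}\|A'\|_{\mathcal{S}_{2q}}=\|\alpha\|_{\mathcal{S}_{2p}}$; a symmetric splitting $\beta=B'B$ with $B'\in\mathcal{S}_{2q}$, $B\in\mathcal{S}_{2r}$ is obtained the same way.

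For part~(i), Theorem~\ref{Pisier's formula} applied with $\mathcal{X}=\mathcal{S}_q$ gives
$$\|X\|_{\mathcal{S}_p[\mathcal{S}_q]}=\inf\,\|\alpha\|_{\mathcal{S}_{2p}}\,\|Z\|_{\mathcal{S}_\infty[\mathcal{S}_q]}\,\|\beta\|_{\mathcal{S}_{2p}}$$
over factorizations $X=\alpha Z\beta$. To prove $\|X\|_{(p,q)}\leq\|X\|_{\mathcal{S}_p[\mathcal{S}_q]}$, I take a near-optimal factorization of this type, split $\alpha=AA'$ and $\beta=B'B$ as above, and set $Y:=A'ZB'$, so that $X=AYB$ with $A,B\in\mathcal{S}_{2r}$. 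A second application of Theorem~\ref{Pisier's formula} to $\mathcal{S}_q[\mathcal{S}_q]$ bounds $\|Y\|_{\mathcal{S}_q[\mathcal{S}_q]}\leq\|A'\|_{\mathcal{S}_{2q}}\|Z\|_{\mathcal{S}_\infty[\mathcal{S}_q]}\|B'\|_{\mathcal{S}_{2q}}$, and rewriting the middle space as $\mathcal{S}_q(\ell_2\otimes_2\ell_2)$ via Lemma~\ref{lem:Fu} together with the Hölder saturation closes the estimate. The reverse inequality $\|X\|_{\mathcal{S}_p[\mathcal{S}_q]}\leq\|X\|_{(p,q)}$ runs symmetrically: take a near-optimal $X=AYB$ witnessing $\|X\|_{(p,q)}$, apply Theorem~\ref{Pisier's formula} to $Y$ to obtain $Y=\alpha'Z\beta'$ with $\alpha',\beta'\in\mathcal{S}_{2q}$ and $Z\in\mathcal{S}_\infty[\mathcal{S}_q]$, assemble $X=(A\alpha')Z(\beta'B)$, use ordinary Hölder to place $A\alpha',\beta'B\in\mathcal{S}_{2p}$, and invoke Theorem~\ref{Pisier's formula} once more on $\mathcal{S}_p[\mathcal{S}_q]$.

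For part~(ii), I deduce the formula from part~(i) by Banach-space duality. Proposition~\ref{Duality} identifies $\mathcal{S}_q[\mathcal{S}_p]^*=\mathcal{S}_{q'}[\mathcal{S}_{p'}]$ (valid for $q>1$; the case $q=1$ is handled separately below), and since $p\leq q$ forces $q'\leq p'$, part~(i) applies to $\mathcal{S}_{q'}[\mathcal{S}_{p'}]$ with auxiliary exponent $s$ determined by $\tfrac{1}{q'}=\tfrac{1}{p'}+\tfrac{1}{s}$. A direct computation gives $\tfrac{1}{s}=\tfrac{1}{p}-\tfrac{1}{q}=\tfrac{1}{r}$, so $s=r$. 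Consequently, every $Z$ in the unit ball of $\mathcal{S}_{q'}[\mathcal{S}_{p'}]$ admits, up to an arbitrarily small $\epsilon$, a factorization $Z=AWB$ with $\|A\|_{\mathcal{S}_{2r}}=\|B\|_{\mathcal{S}_{2r}}=1$ and $\|W\|_{\mathcal{S}_{p'}(\ell_2\otimes_2\ell_2)}\leq 1+\epsilon$. Using cyclicity, $\mathrm{tr}(XZ)=\mathrm{tr}(BXAW)$, and Hölder's inequality in $\mathcal{S}_p(\ell_2\otimes_2\ell_2)$ yields $|\mathrm{tr}(BXAW)|\leq\|BXA\|_{\mathcal{S}_p(\ell_2\otimes_2\ell_2)}\|W\|_{\mathcal{S}_{p'}(\ell_2\otimes_2\ell_2)}$, with equality attained (up to $\epsilon$) by suitable $W$. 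Taking the Hahn–Banach supremum over $Z$ and then over $A,B$, and relabeling $A\leftrightarrow B$, produces the claimed identity. The opposite inequality is the easier one: for fixed $A,B$ of norm at most $1$, duality gives $\|AXB\|_{\mathcal{S}_p(\ell_2\otimes_2\ell_2)}=\sup_{\|W\|_{\mathcal{S}_{p'}}\leq 1}|\mathrm{tr}(X\,BWA)|$, and part~(i) applied to $BWA$ shows $BWA$ lies in the unit ball of $\mathcal{S}_{q'}[\mathcal{S}_{p'}]$, whence $\|AXB\|_{\mathcal{S}_p(\ell_2\otimes_2\ell_2)}\leq\|X\|_{\mathcal{S}_q[\mathcal{S}_p]}$.

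The main difficulty is not conceptual but lies in careful bookkeeping: ensuring each Hölder factor is inserted on the correct side, that the two-sided multiplication $AYB$ and $BXA$ is interpreted consistently (as $(A\otimes I_{\ell_2})\cdot(-)\cdot(B\otimes I_{\ell_2})$), and that the middle factor truly lands in the right Fubini--Schatten space after the splittings. Part~(i) must be established as an equality, not merely an inequality, since the duality argument for part~(ii) needs the full supremum. The sole degenerate case is $q=1$, which forces $p=1$, $r=\infty$, and makes Proposition~\ref{Duality} unavailable; but in this case both sides of~(ii) collapse, via Fubini~\eqref{Fubini2} and the choice $A=B=I$, to $\|X\|_{\mathcal{S}_1(\ell_2\otimes_2\ell_2)}$, so the statement is immediate.
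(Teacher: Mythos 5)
Your proposal is correct and follows essentially the same route as the paper: part (i) is proven by combining Theorem~\ref{Pisier's formula} with the Hölder splitting $\mathcal{S}_{2p}\ni\alpha\mapsto(A,A')\in\mathcal{S}_{2r}\times\mathcal{S}_{2q}$ (the paper uses $A^{p/r},A^{p/q}$ after absorbing unitaries into the middle factor, which is the same device) and Lemma~\ref{lem:Fu}; part (ii) is then derived from (i) via the duality $\mathcal{S}_q[\mathcal{S}_p]=\mathcal{S}_{q'}[\mathcal{S}_{p'}]^*$ and cyclicity of the trace, exactly as in the paper. Your separate treatment of the degenerate case $q=1$ is a harmless addition that the paper leaves implicit.
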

	
	We note that it is not immediate that $\|\cdot\|_{(q, p)}$ defined by~\eqref{eq:(q,p)-norm} satisfies the triangle inequality. By the above theorem, however,  $\|\cdot\|_{(q, p)}$ is really a norm and satisfies the triangle inequality.

	\begin{proof}
		(i) We first show that $\|X\|_{\mathcal{S}_p[\mathcal{S}_q]}\leq \|X\|_{(p,q)}$. Suppose that $X\in \mathcal{S}_p[\mathcal{S}_q]$, $Y\in \mathcal{S}_q(\ell_2\otimes_2\ell_2)$ and $A,B\in \mathcal{S}_{2r}$ are such that $X=A Y B$. Suppose also that $Y=C Z D$ is an arbitrary factorization of $Y$ with $C,D\in \mathcal{S}_{2q}$ and $Z\in \mathcal{S}_\infty[\mathcal{S}_q]$. Then, $X=AC\cdot Z\cdot DB$ and by H\"older's inequality we have $AC,DB\in \mathcal{S}_{2p}$. Hence,
		\begin{align*}
			\|X\|_{\mathcal{S}_p[\mathcal{S}_q]}
			&\leq \|AC\|_{\mathcal{S}_{2p}} \|Z\|_{\mathcal{S}_\infty[\mathcal{S}_q]} \|DB\|_{\mathcal{S}_{2p}} \quad\qquad\qquad\qquad \text{(Theorem \ref{Pisier's formula})}\\
			&\leq \|A\|_{\mathcal{S}_{2r}}\|C\|_{\mathcal{S}_{2q}}\|Z\|_{\mathcal{S}_\infty[\mathcal{S}_q]}\|D\|_{\mathcal{S}_{2q}}\|B\|_{\mathcal{S}_{2r}} \ \ \ \ \quad \ \text{(H\"older's inequality)}.
		\end{align*}
		Taking infimum over all $C,D,Z$ as above, and using  Theorem~\ref{Pisier's formula} gives
		\begin{align*}
			\|X\|_{\mathcal{S}_p[\mathcal{S}_q]}
			&\leq \|A\|_{\mathcal{S}_{2r}}\|Y\|_{\mathcal{S}_q[\mathcal{S}_q]}\|B\|_{\mathcal{S}_{2r}}\\
			&= \|A\|_{\mathcal{S}_{2r}}\|Y\|_{\mathcal{S}_q(\ell_2\otimes_2\ell_2)}\|B\|_{\mathcal{S}_{2r}},
		\end{align*}
		where in the second line we use Proposition~\ref{interpolation}(iii). Finally, taking infimum over the choices $A, B, Y$ as above, we obtain $\|X\|_{\mathcal{S}_p[\mathcal{S}_q]}\leq \|X\|_{(p,q)}$ as desired.
		
		To prove the reverse inequality, let $\epsilon>0$ be arbitrary. Then, there exist a $Y\in \mathcal{S}_\infty[\mathcal{S}_q]$ and  $A,B\in \mathcal{S}_{2p}$ such that $X=A Y B$ and $\|A\|_{\mathcal{S}_{2p}} \|Y\|_{\mathcal{S}_\infty[\mathcal{S}_q]} \|B\|_{\mathcal{S}_{2p}}\leq\|X\|_{\mathcal{S}_p[\mathcal{S}_q]}+\epsilon$. 
		We note by the use of polar decomposition and absorbing unitaries into $Y$ without changing its norm, we may assume that $A, B$ are positive.  
		In this case, $A^{\frac{p}{q}},B^{\frac{p}{q}}\in \mathcal{S}_{2q}$ and $A^{\frac{p}{r}},B^{\frac{p}{r}}\in \mathcal{S}_{2r}$ are well-defined. Hence,
		\begin{align*}
			\|X\|_{(p,q)} & = \|AYB\|_{(p, q)}\\
			&=\|A^{\frac{p}{r}}\cdot(A^{\frac{p}{q}}\cdot Y\cdot B^{\frac{p}{q}})\cdot B^{\frac{p}{r}}\|_{(p,q)}\\
			&\leq \|A^{\frac{p}{r}}\|_{\mathcal{S}_{2r}}\|A^{\frac{p}{q}}\cdot Y\cdot B^{\frac{p}{q}}\|_{\mathcal{S}_q(\ell_2\otimes_2\ell_2)}\|B^{\frac{p}{r}}\|_{\mathcal{S}_{2r}}\\
			&=\|A^{\frac{p}{r}}\|_{\mathcal{S}_{2r}}\|A^{\frac{p}{q}}\cdot Y\cdot B^{\frac{p}{q}}\|_{\mathcal{S}_q[\mathcal{S}_q]}\|B^{\frac{p}{r}}\|_{\mathcal{S}_{2r}} \ \ \ \ \qquad \quad\quad\text{(Proposition \ref{interpolation}(iii))}\\
			&\leq \|A^{\frac{p}{r}}\|_{\mathcal{S}_{2r}}\|A^{\frac{p}{q}}\|_{\mathcal{S}_{2q}}\|Y\|_{\mathcal{S}_\infty[\mathcal{S}_q]}\|B^{\frac{p}{q}}\|_{\mathcal{S}_{2q}}\|B^{\frac{p}{r}}\|_{\mathcal{S}_{2r}} \ \ \ \ \ \ \text{(Theorem \ref{Pisier's formula})}\\
			& = \|A\|_{\mathcal{S}_{2p}}^{\frac pr} \|A\|_{\mathcal{S}_{2p}}^{\frac pq} \|Y\|_{\mathcal{S}_\infty[\mathcal{S}_q]}\|B\|_{\mathcal{S}_{2p}}^{\frac pq}\|B\|_{\mathcal{S}_{2p}}^{\frac pr}\\
			&=\|A\|_{\mathcal{S}_{2p}}\|Y\|_{\mathcal{S}_\infty[\mathcal{S}_q]}\|B\|_{\mathcal{S}_{2p}}\\
			&\leq \|X\|_{\mathcal{S}_p[\mathcal{S}_q]}+\epsilon.
		\end{align*}
		Since $\epsilon>0$ is arbitrary, it follows that $\|X\|_{(p,q)}\leq \|X\|_{\mathcal{S}_p[\mathcal{S}_q]}$, and the proof of (i) is complete.
		
		\medskip
		\noindent
		(ii) We prove this part by exploiting the duality relation $\mathcal{S}_q[\mathcal{S}_p]=\mathcal{S}_{q'}[\mathcal{S}_{p'}]^*$ given in Proposition~\ref{interpolation}(ii). Using this fact along with part (i) for $q'\leq p'$, we compute:
		\begin{align*}
			\|X\|_{\mathcal{S}_q[\mathcal{S}_p]}
			&=\|X\|_{\mathcal{S}_{q'}[\mathcal{S}_{p'}]^*}\\
			&=\sup_{Y} \frac{|\tr(Y^* X)|}{\|Y\|_{\mathcal{S}_{q'}[\mathcal{S}_{p'}]}}\\
			&=\sup_{Y} \frac{|\tr(Y^* X)|}{\|Y\|_{(q',p')}}\\
			&=\sup_{Y} \ \sup_{\stackrel{A,B, Z}{Y=AZB}} \frac{|\tr(Y^* X)|}{\|A\|_{\mathcal{S}_{2r}}\|B\|_{\mathcal{S}_{2r}}\|Z\|_{\mathcal{S}_{p'}}}\\
			&=\sup_{A,B, Z}  \frac{\big|\tr(B^* Z^* A^*X)\big|}{\|A\|_{\mathcal{S}_{2r}}\|B\|_{\mathcal{S}_{2r}}\|Z\|_{\mathcal{S}_{p'}}}\\
			&=\sup_{A,B, Z}  \frac{\big|\tr\big(Z^*(A^* X B^*)\big)\big|}{\|A\|_{\mathcal{S}_{2r}}\|B\|_{\mathcal{S}_{2r}}\|Z\|_{\mathcal{S}_{p'}}}\\
			&=\sup_{A,B} \ \frac{\|A^* X B^*\|_{\mathcal{S}_p}}{\|A\|_{\mathcal{S}_{2r}}\|B\|_{\mathcal{S}_{2r}}}\\
			&=\sup_{A,B} \ \frac{\|A X B\|_{\mathcal{S}_p}}{\|A\|_{\mathcal{S}_{2r}}\|B\|_{\mathcal{S}_{2r}}}\\
			&=\|X\|_{(q,p)},
		\end{align*}
	as desired.
	\end{proof}

We conclude this section with the following technical remark which will come in handy later on. For the sake of completeness we include the proofs in Appendix~\ref{proofs}.

	\begin{remark}\label{lem:PSD-case}
		(i) Suppose that $X\in \mathcal{S}_p[\mathcal{S}_q]$ is self-adjoint, i.e., $X^*=X$. Then,
		\begin{equation*}\label{eq:Pisier-formula-00}
			\|X\|_{\mathcal{S}_p[\mathcal{S}_q]}=\inf\|A\|_{\mathcal{S}_{2p}}^2\|Y\|_{\mathcal{S}_\infty[\mathcal{S}_q]},
		\end{equation*}
		where the infimum is taken over all $A\in \mathcal{S}_{2p}$ and $Y\in \mathcal{S}_\infty[\mathcal{S}_q]$ such that $X=A Y A^*$.
		
		(ii) When $X$ is positive, the optimizations in~\eqref{eq:(p,q)-norm} and~\eqref{eq:(q,p)-norm} can be restricted to the case that $B=A\geq0$.
	\end{remark}

	\section{Applications in quantum information theory}\label{sec:applications}

	In this section we present some applications of the above tools in quantum information theory. Here, we denote Hilbert spaces (mostly finite-dimensional) by $\cH_X, \cH_Y$ where subscripts $X, Y$ indicate quantum subsystems. We add a subscript to an operator $T_X\in B(\cH_X)$ to emphasize that it acts on subsystem $X$, and $I_X\in B(\cH_X)$ is the identity operator. A density matrix (quantum state) is a positive operator with normalized trace. For a density matrix $\rho_{XY}\in B(\cH_X\otimes \cH_Y)$, its marginal states $\rho_X\in B(\cH_X)$ and $\rho_Y\in B(\cH_Y)$ are defined via partial trace. Quantum operations (super-operators) are completely positive trace-preserving (cptp) maps $\Phi:B(\cH_X)\to B(\cH_Y)$. We denote the identity super-operator acting on subsystem $X$ by $\mathcal I_X$.

	Recall that for a distribution $p_X$ over discrete set $\mathcal X$, its (Shannon) entropy is defined by
	$$H(X) = -\sum_x p(x)\log p(x).$$
	In the quantum case, for a density matrix $\rho_X$ acting on the finite-dimensional Hilbert space $\cH_X$, its (von Neumann) entropy is defined by
	$$H(X) = - \tr(\rho_X\log \rho_X).$$
	For certain applications it is advantageous to define other forms of entropies that somehow approximate Shannon or von Neumann entropies. A well-known such quantity is the R\'enyi entropy that for a parameter $\alpha$ is given by
	$$H_\alpha(X) = \frac{1}{1-\alpha}\log \tr(\rho_X^\alpha),$$
	which in the classical case (when $\rho_X$ is diagonal with values $p(x), x\in \mathcal X$, on the diagonal) turns into $H_\alpha(X) = \frac{1}{1-\alpha}\log \big(\sum_x p(x)^\alpha\big)$. R\'enyi entropy is an approximation of the usual entropy since 
	$$\lim_{\alpha\to 1}H_\alpha(X)= H(X).$$
	Now, $L_p$-spaces naturally appear in the study of entropic quantities since R\'enyi entropy can be stated in terms of the norm in these spaces:
	$$H_\alpha(X) = \frac{\alpha}{1-\alpha} \log \|\rho_X\|_\alpha.$$
	Therefore, the properties of $L_p$-spaces are applicable in the understanding of these entropic quantities. 
	We emphasize that although R\'enyi quantities can also be defined for $\alpha<1$, hereafter we assume that $\alpha\geq 1$ so that we can use the fact that $\|\cdot\|_\alpha$ is really a norm.

	\subsection{Quantum conditional R\'enyi entropy}
	
	Given a bipartite distribution $p_{XY}$, the condition entropy $H(X|Y)$ is defined by $H(X|Y) = \sum_y p(y)H(X|Y=y)$ as the average entropy of the conditional distribution $p_{X|Y=y}$. Generalization of this definition to the quantum case is tricky since conditional states do not have a well-defined interpretation. Nevertheless, it can be verified that 
	\begin{align}\label{eq:entropy-chain-0}
		H(X|Y) = H(XY) - H(Y),
	\end{align}
	and this formula is readily generalized to the quantum case as the definition of quantum conditional entropy.  The definition of conditional R\'enyi entropy in the quantum case is even more tricky since~\eqref{eq:entropy-chain-0}, called the \emph{chain rule}, no longer holds for $\alpha\neq 1$. Thus, we should take a different approach to define the quantum conditional R\'enyi entropies.

	Umegaki's divergence (also called quantum relative entropy) is another important quantity in information theory which can be considered as a parent quantity for entropy. For a density matrix $\rho_X$, and positive operator $\sigma_X$, that is not necessarily normalized, we define their Umegaki's divergence by
	$$
	D(\rho_X\|\sigma_X):=
	\begin{cases}
		\tr\rho_X(\log\rho_X-\log \sigma_X), \ \ \ \ \text{if} \ \ \supp\rho_X\leq \supp\sigma_X, \\[.1in]
		+\infty, ~~~~~\qquad\qquad\qquad\qquad \ \ \ \text{otherwise}.
	\end{cases}
	$$
	Considering the case where $\rho_X, \sigma_X$ are diagonal, divergence in the classical case is also derived. Moreover, letting $\sigma_X=I_X$ to be the identity operator, we find that 
	$D(\rho_X\|I_X) = -H(\rho_X)$. The conditional entropy can also be stated in terms of divergence:
	$$H(X|Y) = -D(\rho_{XY}\| I_X\otimes \rho_Y).$$
	Even more, we have
	\begin{align}\label{eq:cond-entropy-divergence}
		H(X|Y) = -\inf_{\sigma_Y}D(\rho_{XY}\| I_X\otimes \sigma_Y),
	\end{align}
	where the infimum is taken over all density matrices $\sigma_Y$. This identity can be proven using $D(\rho_{XY}\| I_X\otimes \sigma_Y)=D(\rho_{XY}\| I_X\otimes \rho_Y)+D(\rho_{Y}\| \sigma_Y)$ that is easy to verify, and the fact that $D(\rho_Y\|\sigma_Y)\geq 0$ for any two density matrices $\rho_Y, \sigma_Y$. Equation~\eqref{eq:cond-entropy-divergence} suggests that having a good definition of quantum R\'enyi divergence, we can define a conditional version of quantum R\'enyi entropy.

	R\'enyi divergence in the classical case is given by
	\begin{align*}
		D_\alpha(p_X\| q_X) &= \frac{1}{\alpha-1}\log \sum_x p(x)^{\alpha}q(x)^{1-\alpha}\\
		&= \frac{\alpha}{\alpha-1}\log \Big\|  q_X^{\frac{1-\alpha}{\alpha}}p_X\Big\|_\alpha,
	\end{align*}
	and we have $H_\alpha(p_X)= -D_\alpha(p_X\| I_{X})$ as expected. Next, inspired by~\eqref{eq:cond-entropy-divergence} we may define the conditional R\'enyi entropy by
	\begin{align}\label{eq:cond-Renyi-entropy-c}
		H_\alpha(X|Y) = - \inf_{q_Y} D_\alpha(p_{XY}\| I_A\otimes q_Y).
	\end{align}
	Trying to generalize these definitions to the quantum case a complication arises: since two density matrices $\rho_X, \sigma_X$ do not necessarily commute, $\sigma_X^{\frac{1-\alpha}{\alpha}}$ and $\rho_X$ can be ``multiplied" in different orders. This amounts to different notions of quantum R\'enyi divergence, yet here we limit ourselves to the so-called \emph{sandwiched} R\'enyi divergence~\cite{WWY14, MDSFT13} defined by  
	$$D_\alpha(\rho_X\| \sigma_X) = \frac{\alpha}{\alpha-1}\log \Big\|  \sigma_X^{\frac{1-\alpha}{2\alpha}}\rho_X\sigma_X^{\frac{1-\alpha}{2\alpha}}\Big\|_\alpha,$$
	Based on this and following~\eqref{eq:cond-Renyi-entropy-c} we define the quantum conditional R\'enyi entropy by
	\begin{align*}
		H_\alpha(X|Y) &= - \inf_{\sigma_Y} D_\alpha(\rho_{XY}\| I_X\otimes \sigma_Y)\\
		&= -\inf_{\sigma_Y} \frac{\alpha}{\alpha-1}\log \Big\|  \Big(I_X\otimes \sigma_Y^{\frac{1-\alpha}{2\alpha}}\Big)\rho_{XY} \Big(I_X\otimes \sigma_Y^{\frac{1-\alpha}{2\alpha}}\Big)\Big\|_\alpha.
	\end{align*}

	For $\alpha\geq 1$
	$$\alpha'=\frac{\alpha}{\alpha-1}$$
	is the H\"older conjugate of $\alpha$, and we have  
	\begin{align*}
		H_\alpha(X|Y) &=  -\alpha' \log \Big(\inf_{\sigma_Y}  \Big\|  \Big(I_X\otimes \sigma_Y^{-\frac{1}{2\alpha'}}\Big)\rho_{XY}\Big(I_X\otimes \sigma_Y^{-\frac{1}{2\alpha'}}\Big)\Big\|_\alpha\Big) \\
		&= -\alpha' \log \Bigg(\inf_{\stackrel{T_Y\geq 0}{\|T_Y\|_{\alpha'}=1}}  \Big\|  \Big(I_X\otimes T_Y^{-\frac12}\Big)\rho_{XY}\Big(I_X\otimes T_Y^{-\frac12}\Big)\Big\|_\alpha\Bigg).
	\end{align*}
	Then, comparing this equation with~\eqref{eq:(p,q)-norm} and using Remark~\ref{lem:PSD-case}(ii) we find that
	\begin{align}\label{eq:entropy-1-alpha-norm}
		H_\alpha(X|Y) = -\alpha' \log \big\|\rho_{YX}\big\|_{(1, \alpha)}.
	\end{align}
	This is a key identity that connects operator-valued Schatten spaces to information theoretic quantities.

	\begin{prop}\label{prop:cond-renyi-properties}
		The followings hold for any bipartite density matrix $\rho_{XY}$:
		\begin{enumerate}
			\item[{\rm (i)}] $\lim_{\alpha\rightarrow 1^+} H_\alpha(X|Y) = H(X|Y)$. 
			\item[{\rm (ii)}] $\alpha\mapsto H_\alpha(X|Y)$ is monotonic non-increasing for $\alpha\geq 1$.
			\item[{\rm (iii)}] $H_{\alpha}(X|Y)$ satisfies data processing inequality. More precisely, for any completely positive trace preserving (cptp) map $\Phi: B(\cH_Y)\to B(\cH_{Y'})$ we have $H_{\alpha}(X|Y)_{\rho} \leq H_\alpha(X|Y')_\sigma$ where $\sigma_{XY'} = \mathcal I_A\otimes \Phi(\rho_{XY})$. 
			\item[{\rm (iv)}] $|H_\alpha(X|Y)|\leq \log d_X$ where $d_X=\dim \cH_X$.
		\end{enumerate}
	\end{prop}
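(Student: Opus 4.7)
The plan is to reduce each statement, via the identity $H_\alpha(X|Y)_\rho=-\alpha'\log\|\rho_{YX}\|_{(1,\alpha)}$ from~\eqref{eq:entropy-1-alpha-norm}, to a claim about the $(1,\alpha)$-norm of $\rho_{YX}$, and then to apply the interpolation identity (Theorem~\ref{interpolation}), Pisier's formula (Theorem~\ref{Pisier's formula}), and the operator-space machinery developed above. The crucial boundary facts I will rely on are $\|\rho_{YX}\|_{(1,1)}=\|\rho_{YX}\|_{\mathcal{S}_1}=1$ and $\|\rho_{YX}\|_{(1,\infty)}\leq d_X$ (to be justified in (iv)), which pin down the relevant norm at the endpoints of the scale $\{\mathcal{S}_1[\mathcal{S}_\alpha]\}_\alpha$.

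\medskip

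\textbf{Parts (i) and (ii).} By Theorem~\ref{interpolation}, $\mathcal{S}_1[\mathcal{S}_\alpha]=[\mathcal{S}_1[\mathcal{S}_\infty],\mathcal{S}_1[\mathcal{S}_1]]_{1/\alpha}$, so $\alpha\mapsto\|\rho_{YX}\|_{(1,\alpha)}$ is continuous. For (i), $\lim_{\alpha\to 1^+}H_\alpha$ is an indeterminate $0/0$ form; I would compute it by L'H\^opital, identifying the derivative of $\log\|\rho_{YX}\|_{(1,\alpha)}$ at $\alpha=1$ with the Umegaki divergence via the envelope theorem applied to the infimum defining the $(1,\alpha)$-norm, whose optimizer $\sigma_Y^\ast$ tends to $\rho_Y$. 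For (ii), Remark~\ref{rem:interpolation-ineq} applied to the interpolation decomposition shows that $u\mapsto\log\|\rho_{YX}\|_{(1,1/u)}$ is convex in $u=1/\alpha\in(0,1]$ and vanishes at $u=1$; consequently its chord-quotient $\log\|\rho_{YX}\|_{(1,1/u)}/(u-1)$, which equals $H_\alpha(X|Y)$ under $u=1/\alpha$, is non-decreasing in $u$, i.e.\ non-increasing in $\alpha$.

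\medskip

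\textbf{Part (iii).} For CPTP $\Phi:B(\mathcal{H}_Y)\to B(\mathcal{H}_{Y'})$, the Banach adjoint $\Phi^\ast$ is CP and unital, so Stinespring's theorem gives $\|\Phi^\ast\|_{cb}=1$; predualizing yields $\|\Phi\|_{cb(\mathcal{S}_1(\mathcal{H}_Y),\mathcal{S}_1(\mathcal{H}_{Y'}))}=1$. The Haagerup-tensor construction of $\mathcal{S}_1[\mathcal{S}_\alpha]$ (Definition~\ref{vector valued Schatten}) together with the standard behavior of CB norms under amplification then implies $\|\Phi\otimes\mathcal{I}_{\mathcal{S}_\alpha}\|_{cb}\leq 1$, so $\|\sigma_{Y'X}\|_{(1,\alpha)}\leq\|\rho_{YX}\|_{(1,\alpha)}$; the $-\alpha'\log$ transform flips this to $H_\alpha(X|Y')_\sigma\geq H_\alpha(X|Y)_\rho$.

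\medskip

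\textbf{Part (iv).} The upper bound follows from $D_\alpha(\rho_{XY}\|I_X\otimes\sigma_Y)\geq-\log d_X$ for every $\sigma_Y$, a consequence of the scaling $D_\alpha(\rho\|c\sigma')=D_\alpha(\rho\|\sigma')-\log c$ (with $c=d_X$ and $\sigma'=(I_X/d_X)\otimes\sigma_Y$) together with nonnegativity of $D_\alpha$ between density matrices. For the lower bound, the interpolation inequality and $\|\rho_{YX}\|_{(1,1)}=1$ give $\|\rho_{YX}\|_{(1,\alpha)}\leq\|\rho_{YX}\|_{(1,\infty)}^{1-1/\alpha}$, and since $\alpha'(1-1/\alpha)=1$, applying $-\alpha'\log$ yields $H_\alpha(X|Y)\geq-\log\|\rho_{YX}\|_{(1,\infty)}$ uniformly in $\alpha$. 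It therefore suffices to show $\|\rho_{YX}\|_{(1,\infty)}\leq d_X$; Pisier's formula (Theorem~\ref{Pisier's formula}) with the factorization $A=B^\ast=\rho_Y^{1/2}$ reduces this in turn to the operator inequality $\rho_{XY}\leq d_X\cdot I_X\otimes\rho_Y$ on the support of $I_X\otimes\rho_Y$. This last inequality is the main obstacle; I would prove it by spectrally decomposing $\rho_{XY}=\sum_k\mu_k|v_k\rangle\langle v_k|$ and applying Cauchy--Schwarz in the Schmidt basis of each $|v_k\rangle$ to obtain $|\langle\psi|v_k\rangle|^2\leq d_X\,\tr(\rho_Y^\psi\rho_Y^{v_k})$, exploiting that each Schmidt rank is bounded by $d_X$, then summing over $k$ with eigenvalue weights.
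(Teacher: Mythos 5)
Your proposal is essentially correct, and while parts (i)--(ii) track the paper's argument (the paper also reduces (ii) to the interpolation inequality~\eqref{eq:interrpolation-ineq} with $1/\beta=(1-\theta)+\theta/\alpha$, which is exactly your convexity/chord-quotient observation in other words; for (i) it simply cites~\cite[Theorem 17]{DJKR} for $\frac{\dd}{\dd\alpha}\|\rho_{YX}\|_{(1,\alpha)}^\alpha\big|_{\alpha=1}=H(Y)-H(XY)$ rather than running the envelope argument), part (iv) takes a genuinely different and in one place slicker route. For the upper bound $H_\alpha\leq\log d_X$ the paper twirls by a Haar-random unitary on $X$ and uses convexity of the $(1,\alpha)$-norm to show $\|\rho_{YX}\|_{(1,\alpha)}\geq d_X^{-1/\alpha'}$; your scaling argument $D_\alpha(\rho_{XY}\|I_X\otimes\sigma_Y)=D_\alpha(\rho_{XY}\|(I_X/d_X)\otimes\sigma_Y)-\log d_X\geq-\log d_X$ is cleaner and more elementary. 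For the lower bound $H_\alpha\geq-\log d_X$ the paper purifies, uses monotonicity in $\alpha$ together with data-processing to reduce to $H_\infty(X|X')\geq-\log d_X$, and computes the latter directly; you instead interpolate against $\|\rho_{YX}\|_{(1,\infty)}$ and reduce to the operator inequality $\rho_{XY}\leq d_X\,(I_X\otimes\rho_Y)$, which you sketch via Schmidt decomposition and Cauchy--Schwarz. Both are valid; the paper's reduction via purification re-uses items (ii)--(iii) and avoids proving a fresh operator inequality, while yours is self-contained modulo that inequality (which is true and provable exactly as you indicate, at Schmidt rank $\leq d_X$).

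One caveat on part (iii): the step ``standard behavior of CB norms under amplification then implies $\|\Phi\otimes\mathcal{I}_{\mathcal{S}_\alpha}\|_{\mathcal{S}_1[\mathcal{S}_\alpha]\to\mathcal{S}_1[\mathcal{S}_\alpha]}\leq 1$'' is not a triviality. The cb-norm of $\Phi$ on $\mathcal{S}_1$ is defined via $M_n(\mathcal{S}_1)=\mathcal{S}_\infty^n[\mathcal{S}_1]$ amplifications, not via $\mathcal{S}_1[\mathcal{X}]$ amplifications, and passing from the former to the latter is precisely the content of the lemma the paper invokes, namely $\|\Phi\otimes\mathcal{I}_X\|_{(1,\alpha)\to(1,\alpha)}=\|\Phi\|_{1\to 1}$ from~\cite[Lemma 5]{DJKR}. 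You should either cite that result or give a proof that the Haagerup-tensor description of $\mathcal{S}_1[\mathcal{X}]$ really does imply the transfer of cb-bounds; as written this is the one genuinely glossed step. (It also matters because the hypothesis in the proposition is stated as ``positive trace preserving,'' and the paper's route through $\|\Phi\|_{1\to1}$ does not require complete positivity, whereas your Stinespring step does.)
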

	
	\begin{proof} 
		(i) A straightforward computation shows that it suffices to prove that 
		$$\frac{\dd }{\dd \alpha} \|\rho_{YX}\|_{(1, \alpha)}^{\alpha} \bigg|_{\alpha=1} =  H(Y)-H(XY),$$
		which itself is proven in~\cite[Theorem 17]{DJKR}. Interestingly, this latter fact was known even before the definition of quantum conditional R\'enyi entropy. 
		
		\medskip
		\noindent
		(ii) If $\alpha\geq \beta\geq 1$ there is $\theta\in [0,1]$ such that $1/\beta= (1-\theta) + \theta/\alpha$. Therefore, by Remark \ref{rem:interpolation-ineq} we have
		$$\|\rho_{YX}\|_{(1, \beta)}\leq \|\rho_{YX}\|_{(1, 1)}^{1-\theta} \cdot \|\rho_{YX}\|_{(1, \alpha)}^\theta= \|\rho_{YX}\|_{(1, \alpha)}^\theta,$$
		where we used $\|\rho_{YX}\|_{(1, 1)}= \|\rho_{YX}\|_{1}=1$ (see Proposition~\ref{interpolation}(iii)).
		Then, taking the logarithm of both sides and multiplying by $-\beta'$ (where $\beta'$ is the H\"older conjugate of $\beta$), we obtain $H_\beta(X|Y)\geq H_\alpha(X|Y)$.
		
		\medskip
		\noindent 
		(iii) Using~\eqref{eq:entropy-1-alpha-norm} it suffices to show that 
		$$\|\Phi\otimes \mathcal I_X\|_{(1, \alpha)\rightarrow (1, \alpha)}\leq 1.$$
		This holds since using~\cite[Lemma 5]{DJKR} we have $\|\Phi\otimes \mathcal I_X\|_{(1, \alpha)\rightarrow (1, \alpha)} = \|\Phi\|_{1\rightarrow 1}$ which equals $1$ because $\Phi$ is completely positive and trace preserving.
		
		\medskip
		\noindent 
		(iv) Let $\cH_{X'}$ be a Hilbert space with $\dim \cH_{X'}$ equal to the rank of $\rho_X$ that is at most $d_X$. Let $\sigma_{XX'}$ be a \emph{purification} of $\rho_{X}$. Then, there is a cptp map $\Phi: B(\cH_{X'})\to B(\cH_{Y})$ (composition of an isometry and a partial trace) such that $\mathcal I_X\otimes \Phi(\sigma_{XX'})=\rho_{XY}$.\footnote{Here is a construction for $\Phi$: let $\rho_{XYZ}$ be a purification of $\rho_{XY}$, which is a purification of $\rho_{X}$ as well. Then, by the equivalence of purifications~\cite[Exercise 2.81]{NC2010} we have $\dim \cH_{X'}\leq \dim \cH_{Y}\otimes \cH_{Z}$ and  there is an isometry $V_{X'\to YZ}$ such that 
$(I\otimes V) \sigma_{XX'} (I\otimes V^\dagger) = \rho_{XYZ}$. Now, define $\Phi_{X'\to Y}$ by $\Phi (\tau_{X'}) =  \tr_{Z} (V \tau_{X'} V^\dagger)$. From the definition we clearly have $\mathcal I\otimes \Phi(\sigma_{XX'}) = \rho_{XY}$.} 
		Then, by parts (ii) and (iii) we have
		$$H_\infty(X|X')\leq H_\alpha(X|X')\leq H_{\alpha}(X|Y).$$
		Next, by definition we have
		\begin{align*}
			H_\infty(X|X') & = - \log \Bigg(\inf_{\stackrel{T_{X'}\geq 0}{\|T_{X'}\|_{1}=1}}  \Big\|  \Big(I_X\otimes T_{X'}^{-\frac12}\Big)\sigma_{XX'}\Big(I_X\otimes T_{X'}^{-\frac12}\Big)\Big\|_\infty\Bigg)\\
			& = - \log   \Big\|  \Big(I_X\otimes \sigma_{X'}^{-\frac12}\Big)\sigma_{XX'}\Big(I_X\otimes \sigma_{X'}^{-\frac12}\Big)\Big\|_\infty\\
			&\geq - \log  \tr  \Big(\Big(I_X\otimes \sigma_{X'}^{-\frac{1}{2}}\Big)\sigma_{XX'}\Big(I_X\otimes \sigma_{X'}^{-\frac{1}{2}}\Big)\Big)\\
			&\geq - \log d_{X}.
		\end{align*}
		Therefore, $H_{\alpha}(X|Y)\geq -\log d_X$. 
		
		We note that for any unitary $U_X$ we have $\|\rho_{YX}\|_{(1, \alpha)} = \|(I_Y\otimes U_X)\rho_{YX}(I_Y\otimes U_X^*)\|_{(1, \alpha)}$. Therefore, choosing $U_X$ randomly according to the Haar measure, and using the convexity of the $(1, \alpha)$-norm, we find that 
		\begin{align*}
			\|\rho_{YX}\|_{(1, \alpha)} & = \mathbb E \|(I_Y\otimes U_X)\rho_{YX}(I_Y\otimes U_X^*)\|_{(1, \alpha)}\\
			&\geq \|\mathbb E (I_Y\otimes U_X)\rho_{YX}(I_Y\otimes U_X^*)\|_{(1, \alpha)}\\
			&= \Big\|\rho_{Y}\otimes \frac{I_X}{d_X}\Big\|_{(1, \alpha)} \\
			& = \|\rho_Y\|_1\cdot \Big\| \frac{I_X}{d_X}\Big\|_{\alpha}\\
			&= d_X^{-\frac{1}{\alpha'}}.
		\end{align*}
		Using this in the definition of $H_\alpha(X|Y)$, we arrive at $H_\alpha(X|Y)\leq \log d_X$.
		
	\end{proof}
	
	It is well-known that the conditional  von Neumann entropy satisfies a \emph{uniform} continuity bound~\cite{AlickiFannes, Winter16}, i.e., for two quantum states $\rho_{XY}, \sigma_{XY}$ satisfying $\frac 12\|\rho_{XY}-\sigma_{XY}\|_1\leq \epsilon$ we have
	$$\big|H(X|Y)_\rho - H(X|Y)_\sigma\big|\leq 2\epsilon \log d_X +(1+\epsilon)h\Big(\frac{\epsilon}{1+\epsilon}\Big),$$
	where $h(x) = -x\log x-(1-x)\log (1-x)$ is the binary entropy function. The crucial point in this continuity bound is that it is independent of the dimension of $\cH_Y$. Such a continuity bound for the quantum conditional R\'enyi entropy is first found in~\cite{MarwahDupuis} (see also~\cite{JabbourDatta, LWD16} for related results). Here, we prove another such continuity bound that is stronger than the result of~\cite{MarwahDupuis} in some regimes of parameters.

	\begin{theorem}\label{thm:unif-contin-bound}
		Let $\rho_{XY}, \sigma_{XY}$ be two quantum states satisfying $\frac 12\|\rho_{XY}-\sigma_{XY}\|_1\leq \epsilon$. Then we have
		$$\big|H_\alpha(X|Y)_\rho - H_\alpha(X|Y)_\sigma\big|\leq \alpha'\log(1+2\epsilon d_X^{2/\alpha'}) .$$
	\end{theorem}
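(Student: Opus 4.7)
The plan is to reduce everything to controlling a single ratio of $(1,\alpha)$-norms via the identity \eqref{eq:entropy-1-alpha-norm}, namely $H_\alpha(X|Y)_\rho = -\alpha' \log \|\rho_{YX}\|_{(1,\alpha)}$. Writing $\Delta := \rho_{XY}-\sigma_{XY}$, the operator $\Delta$ is self-adjoint with $\tr \Delta=0$ and $\|\Delta\|_1\leq 2\epsilon$, so by the triangle inequality for the $(1,\alpha)$-norm,
\[
\|\rho_{YX}\|_{(1,\alpha)} \leq \|\sigma_{YX}\|_{(1,\alpha)} + \|\Delta_{YX}\|_{(1,\alpha)}.
\]
If one can show that $\|\Delta_{YX}\|_{(1,\alpha)}/\|\sigma_{YX}\|_{(1,\alpha)} \leq 2\epsilon\, d_X^{2/\alpha'}$, then taking logarithms and multiplying by $-\alpha'$ yields the claimed bound in one direction, and interchanging the roles of $\rho$ and $\sigma$ gives the other.

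For the denominator I would invoke the lower bound on the $(1,\alpha)$-norm of a density matrix that was proved in part (iv) of Proposition~\ref{prop:cond-renyi-properties} via the Haar-averaging argument: any density $\tau_{YX}$ satisfies $\|\tau_{YX}\|_{(1,\alpha)}\geq d_X^{-1/\alpha'}$. For the numerator I would Jordan-decompose $\Delta=\Delta^+-\Delta^-$ into its positive and negative parts. Since $\tr\Delta=0$, we have $\tr\Delta^+=\tr\Delta^-=\tfrac12\|\Delta\|_1\leq\epsilon$, so each of $\Delta^\pm/\tr\Delta^\pm$ is a bona fide density matrix on $\cH_Y\otimes \cH_X$. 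The upper bound $\|\tau_{YX}\|_{(1,\alpha)}\leq d_X^{1/\alpha'}$ — equivalent via \eqref{eq:entropy-1-alpha-norm} to the estimate $H_\alpha(X|Y)\geq -\log d_X$ established in the same part (iv) of Proposition~\ref{prop:cond-renyi-properties} — together with homogeneity and another use of the triangle inequality then gives
\[
\|\Delta_{YX}\|_{(1,\alpha)} \leq \|\Delta^+_{YX}\|_{(1,\alpha)}+\|\Delta^-_{YX}\|_{(1,\alpha)}\leq 2\epsilon\, d_X^{1/\alpha'}.
\]
Combining these two bounds yields the desired ratio estimate with exponent $2/\alpha'$.

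There is essentially no technical obstacle: the whole argument is a two-line consequence of the triangle inequality once one notices that both the uniform lower and upper bounds on $\|\tau_{YX}\|_{(1,\alpha)}$ for a density $\tau$ are available from Proposition~\ref{prop:cond-renyi-properties}(iv), and that they combine multiplicatively to produce the $d_X^{2/\alpha'}$ factor. In particular, no dependence on $d_Y$ enters because the $(1,\alpha)$-norm automatically respects the $Y$-subsystem: both the Haar-averaging lower bound and the dimension upper bound are stated purely in terms of $d_X$. The conceptual point — and the reason the proof is considerably shorter than earlier treatments — is that the operator-valued Schatten viewpoint collapses the continuity question to a norm-perturbation question on $\mathcal{S}_1[\mathcal{S}_\alpha]$, where elementary facts suffice.
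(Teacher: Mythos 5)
Your proof is correct and follows essentially the same route as the paper's: both arguments apply the triangle inequality for $\|\cdot\|_{(1,\alpha)}$ to the difference $\rho-\sigma$, decompose that difference into (rescaled) density matrices — your $\Delta^\pm/\tr\Delta^\pm$ are exactly the paper's $\mu,\nu$ — bound each piece above by $d_X^{1/\alpha'}$ and the denominator below by $d_X^{-1/\alpha'}$ using Proposition~\ref{prop:cond-renyi-properties}(iv), and then take logarithms. The only cosmetic difference is that you phrase the decomposition via the Jordan decomposition and argue directly with $\|\Delta\|_1\leq 2\epsilon$ rather than normalizing to equality first.
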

	
	We note that our bound diverges as $\alpha\to 1$, yet it is stronger than the bound of~\cite{MarwahDupuis} for large values of $\alpha$.

	\begin{proof}
		Let us assume with no loss of generality that  $\|\rho_{XY}-\sigma_{XY}\|_1= 2\epsilon$. Then, there are density matrices $\mu_{XY}, \nu_{XY}$ such that 
		$$\rho_{XY}-\sigma_{XY} = \epsilon(\mu_{XY}- \nu_{XY}).$$
		Therefore,
		\begin{align*}
			\Big|\|\rho_{YX}\|_{(1,\alpha)}-\|\sigma_{YX}\|_{(1,\alpha)}\Big| &\leq \|\rho_{YX}-\sigma_{YX}\|_{(1, \alpha)}\\
			&=\epsilon\|\mu_{YX}- \nu_{YX}\|_{(1, \alpha)}\\
			&\leq \epsilon\big(\|\mu_{YX}\|_{(1, \alpha)} +\| \nu_{YX}\|_{(1, \alpha)} \big).
		\end{align*}
		On the other hand, as a consequence of part (iv) of Proposition~\ref{prop:cond-renyi-properties} we have 
		$$\|\mu_{YX}\|_{(1, \alpha)}, \| \nu_{YX}\|_{(1, \alpha)}\leq d_X^{1/\alpha'}.$$
		Hence, 
		$$\Big|\|\rho_{YX}\|_{(1,\alpha)}-\|\sigma_{YX}\|_{(1,\alpha)}\Big|\leq 2\epsilon d_X^{1/\alpha'}.$$
		As a result,
		\begin{align*}
			\frac{\|\rho_{YX}\|_{(1, \alpha)}}{   \|\sigma_{YX}\|_{(1, \alpha)}  } &\leq \frac{\|\sigma_{YX}\|_{(1, \alpha)} + 2\epsilon d_X^{1/\alpha'} }{   \|\sigma_{YX}\|_{(1, \alpha)}  } \\
			&= 1+ \frac{ 2\epsilon d_X^{1/\alpha'} }{   \|\sigma_{YX}\|_{(1, \alpha)}  }\\
			&\leq 1+2\epsilon d_X^{2/\alpha'},
		\end{align*}
		where in the last inequality we once again use part (iv) of Proposition~\ref{prop:cond-renyi-properties}. Taking the logarithm of both sides and multiplying by $\alpha'$ yields
		$$H_\alpha(X|Y)_\sigma- H_\alpha(X|Y)_\rho\leq \alpha'\log(1+2\epsilon d_X^{2/\alpha'}).$$
		
	\end{proof}

	The following theorem is a generalization of the chain rule for the von Neumann entropy.

	\begin{theorem}
		Let $\alpha, \beta, \gamma\geq 1$ such that
		$$
		\frac{\alpha}{\alpha-1}=\frac{\beta}{\beta-1}+\frac{\gamma}{\gamma-1}.
		$$
		Then for any tripartite density matrix $\rho_{XYZ}$ we have
		\begin{equation*}\label{eq:chain-rule-Renyi}
			H_\alpha(XY|Z) \geq H_\beta(X|YZ)+H_\gamma(Y|Z).
		\end{equation*}
	\end{theorem}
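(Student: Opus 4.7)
By equation~\eqref{eq:entropy-1-alpha-norm} and the hypothesis $\alpha'=\beta'+\gamma'$, the chain rule is equivalent, after exponentiating, to the log-convex bound
\begin{equation*}
\|\rho_{Z,XY}\|_{(1,\alpha)} \;\leq\; \|\rho_{YZ,X}\|_{(1,\beta)}^{1-\theta}\,\|\rho_{Z,Y}\|_{(1,\gamma)}^{\theta},
\qquad \theta:=\gamma'/\alpha'\in(0,1),
\end{equation*}
so the natural plan is to realize it as a Hadamard three-lines inequality via Lemma~\ref{lem:sub-harmonicity} applied to a bounded holomorphic operator-valued function on the strip~$\mathbb{S}$ assembled from the optimizers of the two right-hand side factors. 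Fix $\varepsilon>0$ and use Theorem~\ref{thm:Pisier-Junge-formula} together with Lemma~\ref{lem:PSD-case} to pick positive $\tau_{YZ}$ and $\nu_Z$ with $\|\tau_{YZ}\|_{\beta'}=\|\nu_Z\|_{\gamma'}=1$ that attain the two right-hand side norms up to~$\varepsilon$.

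Next I would introduce on $\mathbb{S}$ the bounded holomorphic family
\begin{equation*}
W(z) \;:=\; \bigl(\nu_Z^{-z/2}\otimes I_{XY}\bigr)\bigl(\tau_{YZ}^{-(1-z)/2}\otimes I_X\bigr)\rho_{XYZ}^{1/2},
\end{equation*}
together with the scalar function $g(z):=\|W(z)\|_{2p(z)}^{2}$ for an affine $1/p(z)$ that takes the value $1/\beta$ on $\operatorname{Re}(z)=0$ and $1/\gamma$ on $\operatorname{Re}(z)=1$. On the left boundary, $\nu_Z^{-it/2}$ is unitary and the remaining complex powers of $\tau_{YZ}$ differ from $\tau_{YZ}^{-1/2}$ by a unitary that is a function of $\tau_{YZ}$; using the $C^*$-identity $\|T\|_{2\beta}^{2}=\|TT^*\|_\beta$ and unitary invariance of the Schatten norm, $g(it)$ collapses to $\|(\tau_{YZ}^{-1/2}\otimes I_X)\rho(\tau_{YZ}^{-1/2}\otimes I_X)\|_\beta=\|\rho_{YZ,X}\|_{(1,\beta)}+O(\varepsilon)$. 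An entirely analogous computation on the right boundary will reproduce $\|\rho_{Z,Y}\|_{(1,\gamma)}$. Lemma~\ref{lem:sub-harmonicity}(ii) applied to $\log g$ then gives $g(\theta)\leq(\sup_tg(it))^{1-\theta}(\sup_tg(1+it))^{\theta}$, and it remains to verify that the interior value $g(\theta)$ dominates $\|\rho_{Z,XY}\|_{(1,\alpha)}$, which is the content of identifying the combined sandwich $\tau_{YZ}^{-(1-\theta)/2}\nu_Z^{-\theta/2}\otimes I_X$ at $z=\theta$ as a feasible decomposition in Pisier's variational formula (Theorem~\ref{Pisier's formula}) for the $(1,\alpha)$-norm with $p(\theta)=\alpha$.

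The hard part will be these two identifications. At $\operatorname{Re}(z)=1$ the sandwich lives on $\cH_{XYZ}$ rather than on $\cH_{YZ}$, so one is truly controlling $\|(\nu_Z^{-1/2}\otimes I_{XY})\rho_{XYZ}(\nu_Z^{-1/2}\otimes I_{XY})\|_\gamma$ rather than the marginal quantity $\|\rho_{Z,Y}\|_{(1,\gamma)}$; closing this gap will require either modifying $W$ so that at $z=1$ the factor $\rho^{1/2}$ is replaced by an object whose reduced description on $\cH_{YZ}$ agrees with $\rho_{YZ}^{1/2}$, or passing to a purification $|\psi\rangle_{XYZA}$ and exploiting a duality of Schatten norms on complementary factors of a pure state. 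At $z=\theta$, since $\nu_Z\otimes I_Y$ and $\tau_{YZ}$ do not commute, the product $\tau_{YZ}^{-(1-\theta)/2}\nu_Z^{-\theta/2}$ is not self-adjoint, so it is not immediately of the form $\pi_Z^{1/2}$ demanded by Lemma~\ref{lem:PSD-case}; I expect this to be handled, exactly as in the proof of Theorem~\ref{KouPis}, by a Wiener--Masani-type factorization that replaces the non-Hermitian analytic family by a positive holomorphic one with the same boundary behaviour. Letting $\varepsilon\to 0$ would then deliver~\eqref{eq:chain-rule-Renyi}.
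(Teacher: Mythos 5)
Your high-level plan---exponentiate to the log-convex bound $\|\rho_{Z,XY}\|_{(1,\alpha)}\leq\|\rho_{YZ,X}\|_{(1,\beta)}^{1-\theta}\|\rho_{Z,Y}\|_{(1,\gamma)}^{\theta}$ with $\theta=\gamma'/\alpha'$ and realize it via interpolation---is the right intuition, but the proposal is a sketch with genuine gaps that you yourself flag, and those gaps are where the real work lies. The most serious is the right boundary $\operatorname{Re}(z)=1$: your $W(z)$ at $z=1+it$ produces the $\gamma$-norm of $(\nu_Z^{-1/2}\otimes I_{XY})\rho_{XYZ}(\nu_Z^{-1/2}\otimes I_{XY})$ on the \emph{full} tripartite Hilbert space, whereas the target $\|\rho_{ZY}\|_{(1,\gamma)}$ involves only the marginal $\rho_{YZ}$. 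These are not the same quantity, and neither of your proposed fixes (modifying $W$, passing to a purification) is carried out; in fact, $\|(\nu_Z^{-1/2}\otimes I_{XY})\rho_{XYZ}(\nu_Z^{-1/2}\otimes I_{XY})\|_\gamma$ can strictly exceed $\|(\nu_Z^{-1/2}\otimes I_Y)\rho_{YZ}(\nu_Z^{-1/2}\otimes I_Y)\|_\gamma$ in general, so one really needs to put the $X$ system in $\mathcal{S}_1$ throughout the family, not just at $z=0$. Similarly, the non-self-adjointness of $\tau_{YZ}^{-(1-\theta)/2}\nu_Z^{-\theta/2}$ at $z=\theta$ means the resulting decomposition is not directly feasible in the positive form of Pisier's formula, and invoking Wiener--Masani here is not a small step; you are essentially re-proving (a special case of) Theorem~\ref{KouPis} by hand.

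The paper's proof sidesteps all of this by never constructing an explicit analytic operator family. It works with the iterated operator-valued Schatten spaces and observes that $\mathcal{S}_1(\cH_Z)[\mathcal{S}_\alpha(\cH_Y)[\mathcal{S}_\alpha(\cH_X)]]$ is the $\theta$-interpolant of $\mathcal{S}_1(\cH_Z)[\mathcal{S}_1(\cH_Y)[\mathcal{S}_\beta(\cH_X)]]$ and $\mathcal{S}_1(\cH_Z)[\mathcal{S}_\gamma(\cH_Y)[\mathcal{S}_1(\cH_X)]]$ (Theorem~\ref{interpolation}), so the interpolation inequality~\eqref{eq:interrpolation-ineq} immediately yields $\|\rho_{ZYX}\|_{(1,\alpha,\alpha)}\leq\|\rho_{ZYX}\|_{(1,1,\beta)}^{1-\theta}\|\rho_{ZYX}\|_{(1,\gamma,1)}^{\theta}$. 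The three norms then collapse via Lemma~\ref{lem:Fu} (Fubini: $(1,\alpha,\alpha)\to(1,\alpha)$ and $(1,1,\beta)\to(1,\beta)$) and, crucially, the identity $\|\rho_{ZYX}\|_{(1,\gamma,1)}=\|\rho_{ZY}\|_{(1,\gamma)}$ from~\cite{DJKR}, which is exactly the trace-out step that resolves your $z=1$ obstruction. To complete your argument you would need to independently reprove that identity and a Wiener--Masani-type factorization, which is far more work than the abstract route.
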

	
	This theorem is first proven in~\cite{Dupuis}. The following proof, that is much simpler than the original proof in~\cite{Dupuis}, is found by the first author of this paper and independently by McKinlay and Tomamichel (private communications). See also~\cite{McKinlayTomamichel} for other results in this direction. 
	
	\begin{proof}
		Note first that the condition on $\alpha, \beta, \gamma$ may be rewritten as ${\alpha'}= {\beta'} + {\gamma'}$. Then, for $\theta:=\gamma'/\alpha'=1-\beta'/\alpha'$ we have 
		$\theta\in [0,1]$ and 
		$$\frac{1}{\alpha} = \frac{1-\theta}{1} + \frac{\theta}{\gamma},  \qquad \frac{1}{\alpha} = \frac{1-\theta}{\beta} + \frac{\theta}{1}.$$
		Now by invoking Proposition \ref{interpolation}(i) we have
		$$
		 \mathcal{S}_1 \big[\mathcal{S}_\alpha [\mathcal{S}_\alpha]\big]=\Big[\mathcal{S}_1\big[ \mathcal{S}_1[ \mathcal{S}_\beta]\big],\mathcal{S}_1\big[ \mathcal{S}_\gamma[\mathcal{S}_1]\big]\Big]_\theta.
		$$
		That is, the $(1, \alpha, \alpha)$-norm is obtained via interpolation between the $(1, 1, \beta)$- and $(1, \gamma, 1)$-norms. 
		Therefore, by Remark \ref{rem:interpolation-ineq} we have
		$$\|\rho_{ZYX}\|_{(1, \alpha, \alpha)} \leq \|\rho_{ZYX}\|_{(1, 1, \beta)}^{1-\theta} \cdot \|\rho_{ZYX}\|_{(1, \gamma, 1)}^\theta.$$
		On the other hand, by Proposition~\ref{interpolation}(iii) we have
		$$\|\rho_{ZYX}\|_{(1, \alpha, \alpha)} = \|\rho_{Z,YX}\|_{(1, \alpha)},$$
		and
		$$\|\rho_{ZYX}\|_{(1, 1, \beta)}=\|\rho_{ZY, X}\|_{(1, \beta)}.$$
		Furthermore, as shown in~\cite{DJKR} we have
		$$\|\rho_{ZYX}\|_{(1, \gamma, 1)} = \|\rho_{ZY}\|_{(1, \gamma)}.$$
		Putting all these together we arrive at
		$$\|\rho_{Z, YX}\|_{(1, \alpha)} \leq \|\rho_{ZY, X}\|_{(1, \beta)}^{1-\theta} \cdot \|\rho_{ZY}\|_{(1, \gamma)}^\theta.$$
		Now taking the logarithm of both sides and multiplying by $-\alpha'$, the desired inequality is obtained.
	\end{proof}

	\subsection{R\'enyi coherent information}
	
	The coherent information $I(X\rangle Y)$ of a state $\rho_{XY}$ is defined by 
	$$I(X\rangle Y) = H(Y) - H(XY) = -H(X|Y).$$
	Then, given a cptp map $\Phi:B(\cH_X)\to B(\cH_Y)$, its coherent information is defined by
	\begin{align}\label{eq:coy-channel}
		I^{\coh}(\Phi) = \max_{\rho_{RX}} I(R\rangle Y)_\sigma,
	\end{align}
	where $I(R\rangle Y)_\sigma$ is computed with respect to 
	$$\sigma_{RY}= \mathcal I_R\otimes \Phi(\rho_{RX}),$$
	and $\cH_R$ is an auxiliary reference Hilbert space. 
	It is well-known that $Q(\Phi)$, the quantum capacity of $\Phi$, i.e., its optimal rate for sending quantum information with asymptotically vanishing error is equal to~\cite{WildeBook} 
	$$Q(\Phi) = \lim_{n\rightarrow\infty} \frac{1}{n} I^{\coh}(\Phi^{\otimes n}),$$
	where $\Phi^{\otimes n}$ is the $n$-fold tensor product of $\Phi$. 
	
	Since coherent information is expressed in terms of conditional entropy, we may define R\'enyi coherent information based on the R\'enyi conditional entropy.
	Indeed, based on~\eqref{eq:coy-channel} we may define the $\alpha$-R\'enyi coherent information of a cptp map by~\cite{GaoJungeLaRacuente}
	\begin{align*}
		I^{\coh}_{\alpha}(\Phi) &= \max_{\rho_{RX}} I_\alpha(R\rangle Y)_\sigma\\
		&= \alpha' \log\Big( \max_{\rho_{RX}} \| \Phi\otimes \mathcal I_R (\rho_{XR})  \|_{(1, \alpha)}    \Big).
	\end{align*}
	$$$$
	
	\begin{prop}\label{prop:coherent-inf}
		$I^{{\rm coh}}_{\alpha}(\Phi) = \alpha'\log \|\Phi\otimes \mathcal I_R\|_{(1, 1)\rightarrow (1, \alpha)}$.
	\end{prop}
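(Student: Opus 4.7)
The plan is to unfold definitions. From the expression of $I^{\coh}_\alpha(\Phi)$ given just above the proposition,
\[
I^{\coh}_\alpha(\Phi) \;=\; \alpha'\log\!\Bigl(\max_{\rho_{RX}} \|\Psi(\rho_{RX})\|_{(1,\alpha)}\Bigr),\qquad \Psi := \Phi\otimes \mathcal I_R,
\]
so the proposition reduces to the identity
\[
\max_{\rho_{RX}} \|\Psi(\rho_{RX})\|_{(1,\alpha)} \;=\; \|\Psi\|_{(1,1)\to(1,\alpha)}.
\]
The direction ``$\leq$'' is immediate: by Lemma~\ref{lem:Fu}, every density matrix satisfies $\|\rho\|_{(1,1)}=\|\rho\|_1=1$, so $\|\Psi(\rho)\|_{(1,\alpha)}\leq \|\Psi\|_{(1,1)\to(1,\alpha)}$ by the very definition of the operator norm.

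For the reverse inequality I would pass to the Banach dual. By Proposition~\ref{Duality}, $\mathcal{S}_1[\mathcal{S}_\alpha]^*=\mathcal{S}_\infty[\mathcal{S}_{\alpha'}]$, which gives
\[
\|\Psi\|_{(1,1)\to(1,\alpha)} \;=\; \|\Psi^*\|_{(\infty,\alpha')\to\infty} \;=\; \sup_{\|M\|_{(\infty,\alpha')}\leq 1}\|\Psi^*(M)\|_\infty,
\]
where $\Psi^*=\Phi^*\otimes\mathcal I_R$ and $\Phi^*$ is the unital CP (Heisenberg-picture) dual of $\Phi$. The key structural input is that $\Psi^*$ is positive, which I will use to argue that the supremum on the right can be restricted to $M\geq 0$. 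For such $M$ we have $\Psi^*(M)\geq 0$, so $\|\Psi^*(M)\|_\infty = \sup_\rho \tr\bigl(\rho\,\Psi^*(M)\bigr) = \sup_\rho \tr\bigl(\Psi(\rho)\,M\bigr)$, with $\rho$ ranging over density matrices. Interchanging the two suprema and invoking the positive-element dual characterization $\|\Psi(\rho)\|_{(1,\alpha)} = \sup_{M\geq 0,\,\|M\|_{(\infty,\alpha')}\leq 1}\tr\bigl(\Psi(\rho)\,M\bigr)$, valid since $\Psi(\rho)\geq 0$, then yields $\|\Psi\|_{(1,1)\to(1,\alpha)} = \sup_\rho \|\Psi(\rho)\|_{(1,\alpha)}$ as required.

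The hard part will be the positivity reduction in the dual picture, i.e.\ showing that $\sup_M\|\Psi^*(M)\|_\infty$ over the $(\infty,\alpha')$-unit ball coincides with its restriction to $M\geq 0$. The natural route is a Hermitian decomposition $M=M_+-M_-$ combined with the elementary bound $\|A-B\|_\infty\leq\max(\|A\|_\infty,\|B\|_\infty)$ for $A,B\geq 0$ (applied to $\Psi^*(M_+)$, $\Psi^*(M_-)$, both nonnegative by positivity of $\Psi^*$); but this approach hinges on the non-obvious compatibility $\|M_\pm\|_{(\infty,\alpha')}\leq\|M\|_{(\infty,\alpha')}$ for Hermitian $M$. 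Establishing this last estimate is the real technical kernel of the argument and is where the interpolation/operator-space structure of $\mathcal{S}_\infty[\mathcal{S}_{\alpha'}]$ is genuinely used; the analogous positivity restriction in the dual formula for $\|\Psi(\rho)\|_{(1,\alpha)}$ should follow from the same structural input by a parallel argument.
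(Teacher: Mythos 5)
Your dual-picture route is genuinely different from the paper's, but it has a gap at exactly the step you flag, and that gap is not easily closed. The Jordan-decomposition estimate $\|M_\pm\|_{(\infty,\alpha')}\leq\|M\|_{(\infty,\alpha')}$ does not follow from the familiar unitary-invariance arguments: $\|\cdot\|_{(\infty,\alpha')}$ is invariant only under unitaries of tensor-product form $U_R\otimes V_X$, whereas the spectral projections defining $M_\pm$ act on the joint space $\cH_R\otimes\cH_X$ and do not factor. Writing $|M|=VM$ with $V$ the sign unitary expresses $A\,M_\pm\,B$ in terms of $A V M B$, but $AV$ is no longer supported on the $R$-factor, so the Pisier--Junge supremum does not absorb $V$. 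Nothing in the paper establishes this estimate, and your companion reduction---that the dual pairing computing $\|\Psi(\rho)\|_{(1,\alpha)}$ for $\Psi(\rho)\geq 0$ can be restricted to positive test operators $M$---faces the identical obstruction. Both are precisely the reductions your dual route was meant to make painless; as written the proposal is incomplete at its crux.

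The paper stays entirely in the primal picture and achieves the positivity reduction by a different device. Given $F$ with $\|F\|_1=1$, polar-decompose $F=U\rho$ and form the positive $2\times 2$ block operator with diagonal blocks $U\rho U^*$, $\rho$ and off-diagonal block $U\rho$. Complete positivity of $\Phi$ makes the blockwise image under $\Phi\otimes\mathcal I_R$ positive, and~\cite[Proposition~1.3.2]{Bhatia-P} then produces a contraction $K$ with $\Phi\otimes\mathcal I_R(F)=\big(\Phi\otimes\mathcal I_R(U\rho U^*)\big)^{1/2}K\big(\Phi\otimes\mathcal I_R(\rho)\big)^{1/2}$. Applying~\cite[Lemma~9]{DJKR} bounds $\|\Phi\otimes\mathcal I_R(F)\|_{(1,\alpha)}$ by the geometric mean, hence by the maximum, of the $(1,\alpha)$-norms at the two positive inputs $U\rho U^*$ and $\rho$, both of which are density matrices. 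The positivity structure is thus exploited through complete positivity of $\Phi$ rather than through the Jordan decomposition of dual elements, which is why the paper's argument closes cleanly where yours stalls.
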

	
	\begin{proof}
		We need to show that 
		$$\|\Phi\otimes \mathcal I_R\|_{(1, 1)\rightarrow (1, \alpha)} = \max_{\rho_{RA}} \| \Phi\otimes \mathcal I_R (\rho_{XR})  \|_{(1, \alpha)},$$
		where the maximum is taken over density matrices $\rho_{XR}$. Since $\|\rho_{XR}\|_{(1, 1)} = \|\rho_{XR}\|_1=1$ for any density matrix $\rho_{XR}$, it suffices to show that the maximum in
		$$\max_{F_{XR}: \|F\|_1=1} \| \Phi\otimes \mathcal I_R (F_{XR})  \|_{(1, \alpha)},$$
		is achieved at a positive $F$. To prove this we follow similar steps as in the proof of~\cite[Theorem 12]{DJKR}.
		
		Let $F_{XR}$ be such that $\|F\|_1=1$. Then, by the polar decomposition there exists a unitary $U_{XR}$ and density matrix $\rho_{XR}$ such that $F = U\rho$. Now, the block matrix 
		$$\begin{bmatrix}
			U\rho^{1/2}\\ \\\rho^{1/2}
		\end{bmatrix}  \begin{bmatrix}
			\rho^{1/2} U^* &  \rho^{1/2}
		\end{bmatrix} = \begin{bmatrix}
			U\rho U^*  & U \rho\\
			\rho U^* & \rho
		\end{bmatrix},$$
		is positive. Then, since $\Phi$ is completely positive, 
		$$\begin{bmatrix}
			\Phi\otimes \mathcal I_R(U\rho U^*)  & \Phi\otimes \mathcal I_R(U \rho)\\
			\Phi\otimes \mathcal I_R(\rho U^*) & \Phi\otimes \mathcal I_R(\rho)
		\end{bmatrix},$$
		is positive. 
		Therefore, using~\cite[Proposition 1.3.2]{Bhatia-P} there exists a contraction  $K_{YR}$ such that
		$$\Phi\otimes \mathcal I_R(U\rho) = \big(\Phi\otimes \mathcal I_R(U\rho U^*)\big)^{1/2}K_{YR}\big(\Phi\otimes \mathcal I_R(\rho)    \big)^{1/2}.$$
		Hence, according to~\cite[Lemma 9]{DJKR} we have 
		\begin{align*}
			\|\Phi\otimes \mathcal I_R(F)\|_{(1, \alpha)} &\leq \|\Phi\otimes \mathcal I_R(U\rho U^*)\|_{(1, \alpha)}^{1/2}\cdot \|\Phi\otimes \mathcal I_R(\rho)\|_{(1, \alpha)}^{1/2}\\
			& \leq \max\Big\{  \|\Phi\otimes \mathcal I_R(U\rho U^*)\|_{(1, \alpha)}, \|\Phi\otimes \mathcal I_R(\rho)\|_{(1, \alpha)}\Big\}.
		\end{align*}
		The proof concludes once we note that both $\rho$ and $U\rho U^*$ are density matrices.
	\end{proof}
	
	Now that we represent the $\alpha$-R\'enyi coherent information of a channel in terms of an operator norm, we may use all the tools for studying operators norms, such as convexity, interpolation techniques and log-Sobolev inequalities to investigate the problem of quantum channel capacity. In particular, it is tempting to use the $\alpha$-R\'enyi coherent information to prove strong converse bounds on quantum channel capacity. A natural open question in this direction is whether $I_\alpha^{\coh}(\cdot)$ is \emph{additive} for certain (e.g., degradable) cptp maps or not, i.e., does  $I_\alpha^{\coh}(\Phi_1\otimes\Phi_2)$ equal to  $I_\alpha^{\coh}(\Phi_1)+ I_\alpha^{\coh}(\Phi_2)$?

	We finish this subsection by mentioning some related results to the above additivity problem. Motivated by the problem of the additivity of the \emph{minimum output entropy} of cptp maps, it is shown in~\cite{DJKR} that the completely bounded $(p\to q)$-norm of cptp maps is multiplicative. The proof of this result is heavily based on the theory of operator-valued Schatten norms. It is also shown that the \emph{entanglement-assisted} capacity of cptp maps can be expressed in terms of the so-called completely $p$-summing norms~\cite{JungePalazuelos} that is again related to the theory of  operator-valued Schatten norms. Inspired by~\cite{DJKR}, the theory of \emph{logarithmic-Sobolev} inequalities of completely bounded norms is developed in~\cite{BeigiKing}. Also, extending~\cite{BeigiKing}, estimations of the decoherence rate of \emph{non-primitive} quantum Markov semigroups is derived in~\cite{BardetRouze}. See also~\cite{BJLRF21} for some related results.

	\subsection{A correlation measure} 
	In this subsection we define yet another information theoretic invariant based on operator-valued Schatten spaces. Here, we briefly explain the main properties of this invariant that is first introduced in~\cite{MBGYA} and refer to the original paper for more details. 
	
	For a bipartite density matrix $\rho_{XY}$ define
	$$W_\alpha(X|Y)=\Big\| \rho_{YX} - \rho_Y\otimes \frac{I_X}{d_X}  \Big\|_{(1, \alpha)},$$
	where $d_X=\dim \cH_X$ is the dimension of subsystem $X$. Similar to $H_\alpha(X|Y)$, the definition of $W_\alpha(X|Y)$ is also in terms of the $(1, \alpha)$-norm, so it is expected that these two quantities are related. Indeed, it is shown in~\cite[Proposition 9]{MBGYA} that
	$$ \Big|W_\alpha(X|Y) -e^{-\frac{1}{\alpha'}H_\alpha(X|Y)}\Big|\leq  d_X^{-\frac{1}{\alpha'}}.$$

	We note that if $\rho_{XY}=\rho_X\otimes \rho_Y$ is a product density matrix and $\rho_X= I_X/d_X$, then $W_\alpha(X|Y)=0$. Thus, assuming that marginal density matrix $\rho_X$ is maximally mixed, then $W_\alpha(X|Y)$ can be thought of as a correlation measure. As a correlation measure it is expected that $W_\alpha(X|Y)$ satisfies the data processing inequality: if $\Phi: B(\cH_Y)\to B(\cH_Y')$ is a cptp map, then
	$$W_\alpha(X|Y')\leq W_\alpha(X|Y).$$
	This inequality is proven in~\cite[Theorem 8]{MBGYA} based on similar ideas as in the proof of part (iii) of Proposition~\ref{prop:cond-renyi-properties}. 
	
	\begin{theorem}\cite[Corollary 14]{MBGYA} \label{thm:decoupling}
		Let $\rho_{XY}$ be a bipartite density matrix and let $P:\cH_X\to \cH_{X_0}$ be an orthogonal projection where $\cH_{X_0}\subseteq \cH_X$ is a subspace. Then for any $1\leq \alpha\leq 2$ we have
		$$\mathbb E_{U}\Big[  \Big\|  \frac{d_X}{d_{X_0}} (PU_X\otimes I_Y)\rho_{XY} (U^*_{X}P\otimes I_Y) - \frac{I_{X_0}}{d_{X_0}}\otimes \rho_Y    \Big\|_1  \Big]\leq 2^{\frac{2}{\alpha}-1}d_{X_0}^{\frac{1}{\alpha'}}W_\alpha(X|Y).$$
	\end{theorem}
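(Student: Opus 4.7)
The plan is to reduce the LHS to an averaged $(1,\alpha)$-norm inequality and then interpolate between endpoint bounds at $\alpha=1$ and $\alpha=2$. Introduce the linear map $\Lambda_U(M):=\tfrac{d_X}{d_{X_0}}(PU_X\otimes I_Y)M(U_X^*P\otimes I_Y)$. Because $\Lambda_U(I_X/d_X\otimes\rho_Y)=I_{X_0}/d_{X_0}\otimes\rho_Y$ (independent of $U$), the argument of the trace norm on the left is $\Lambda_U(\sigma)$, where $\sigma:=\rho_{XY}-I_X/d_X\otimes\rho_Y$ is hermitian and satisfies the crucial condition $\tr_X\sigma=0$. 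Thus the LHS equals $\mathbb E_U\|\Lambda_U(\sigma)\|_1$, while $W_\alpha(X|Y)=\|\sigma_{YX}\|_{(1,\alpha)}$.

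\textbf{Endpoint $\alpha=1$.} Jordan-decompose $\sigma=\sigma_+-\sigma_-$ with $\sigma_\pm\geq 0$. Since $\Lambda_U$ is completely positive, $\Lambda_U\sigma_\pm\geq 0$ and $\|\Lambda_U\sigma_\pm\|_1=\tr\Lambda_U\sigma_\pm$. Using the cyclic property of the trace together with the Haar identity $\mathbb E_U U_X^*P^*PU_X=(d_{X_0}/d_X)I_X$, one finds $\mathbb E_U\tr\Lambda_U\sigma_\pm=\|\sigma_\pm\|_1$, so $\mathbb E_U\|\Lambda_U\sigma\|_1\leq\|\sigma\|_1\leq 2\|\sigma\|_{(1,1)}$.

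\textbf{Endpoint $\alpha=2$.} Inserting a Pisier factorization $N=(A\otimes I_{X_0})Z(B\otimes I_{X_0})$ and applying H\"older with exponents $(4,2,4)$, together with $\|A\otimes I_{X_0}\|_4=d_{X_0}^{1/4}\|A\|_4$, yields the dimensional bound $\|N_{YX_0}\|_1\leq d_{X_0}^{1/2}\|N\|_{(1,2)}$. So it suffices to prove $\mathbb E_U\|\Lambda_U\sigma\|_{(1,2)}\leq\|\sigma\|_{(1,2)}$. By Lemma \ref{lem:Pisier-formula-hermitian}, write $\sigma=(A\otimes I_X)Y(A^*\otimes I_X)$ with $\|A\|_4^2\|Y\|_2$ arbitrarily close to $\|\sigma\|_{(1,2)}$. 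The key trick is the substitution $Y\mapsto Y':=Y-(\tr_X Y)\otimes I_X/d_X$: the factorization is preserved (the correction contributes $A(\tr_X Y)A^*\otimes I_X/d_X=(\tr_X\sigma)\otimes I_X/d_X=0$), while $\tr_X Y'=0$ and $\|Y'\|_2\leq\|Y\|_2$. Setting $\tilde Y_U:=\tfrac{d_X}{d_{X_0}}(I_Y\otimes PU_X)Y'(I_Y\otimes U_X^*P^*)$, so $\Lambda_U\sigma=(A\otimes I_{X_0})\tilde Y_U(A^*\otimes I_{X_0})$, the swap trick plus the Schur--Weyl identity $\mathbb E_U(U_X^*P^*PU_X)^{\otimes 2}=aI+bF_X$ (with $a=d_{X_0}(d_{X_0}d_X-1)/(d_X(d_X^2-1))$ and $b=d_{X_0}(d_X-d_{X_0})/(d_X(d_X^2-1))$), combined with $\tr[(Y\otimes Y^*)(I_{X^{\otimes 2}}\otimes F_{Y^{\otimes 2}})]=\|\tr_X Y\|_2^2$, give
$$\mathbb E_U\|\tilde Y_U\|_2^2=(d_X/d_{X_0})^2\bigl(a\|Y'\|_2^2+b\|\tr_X Y'\|_2^2\bigr)=(d_X/d_{X_0})^2a\,\|Y'\|_2^2\leq\|Y\|_2^2,$$
since $\tr_X Y'=0$ annihilates the second term and $(d_X/d_{X_0})^2 a\leq 1$ reduces to $d_{X_0}\leq d_X$. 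Jensen then gives $\mathbb E_U\|\tilde Y_U\|_2\leq\|Y\|_2$, so $\mathbb E_U\|\Lambda_U\sigma\|_{(1,2)}\leq\|A\|_4^2\|Y\|_2$, and the infimum yields $\mathbb E_U\|\Lambda_U\sigma\|_1\leq\sqrt{d_{X_0}}\|\sigma\|_{(1,2)}$.

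\textbf{Interpolation.} View $T:\sigma\mapsto(\Lambda_U\sigma)_U$ as a linear operator from $S_1(\cH_Y)[S_\alpha(\cH_X)]$ to $L^1(U(\cH_X);S_1(\cH_{X_0}\otimes\cH_Y))$. By Theorem \ref{interpolation}, $[S_1[S_1],S_1[S_2]]_\theta=S_1[S_\alpha]$ with $\theta=2/\alpha'$. Applying Theorem \ref{thm:interpolation-bound} on the complemented closed subspace $\{\sigma:\tr_X\sigma=0\}$ (on which both endpoint bounds were established), the endpoint norms $C_0=2$ and $C_1=\sqrt{d_{X_0}}$ combine to
$$\mathbb E_U\|\Lambda_U\sigma\|_1\leq C_0^{1-\theta}C_1^{\theta}\|\sigma\|_{(1,\alpha)}=2^{1-\theta}d_{X_0}^{\theta/2}\|\sigma\|_{(1,\alpha)}=2^{2/\alpha-1}d_{X_0}^{1/\alpha'}W_\alpha(X|Y),$$
completing the proof. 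The principal obstacle is the $\alpha=2$ endpoint: one must engineer $\tr_X Y=0$ in the Pisier factorization (using $\tr_X\sigma=0$) and then evaluate $\mathbb E_U(U_X^*P^*PU_X)^{\otimes 2}$ via Schur--Weyl, relying on the identity above to isolate the cancelling term.
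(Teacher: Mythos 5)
Your proposal follows exactly the strategy the paper outlines for this result: establish the bound at the endpoints $\alpha=1$ and $\alpha=2$, and then interpolate via Theorem~\ref{thm:interpolation-bound}. The paper itself only gives a two-sentence sketch ("$\alpha=1$ easy, $\alpha=2$ via ideas from \cite{DBWR14}, then interpolate; see \cite{MBGYA} for details"), so your write-up is best read as filling in those details. The endpoint arguments are correct and carefully done: the $\alpha=1$ case via complete positivity and the Haar average, and the $\alpha=2$ case via the tracelessness substitution $Y\mapsto Y'=Y-(\tr_X Y)\otimes I_X/d_X$ in the Pisier factorization combined with the Schur--Weyl identity. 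I checked the algebra in the $\alpha=2$ step (the values of $a,b$, the vanishing of the $F_Y$-term via $\tr_X Y'=0$, and the reduction $(d_X/d_{X_0})^2 a\leq 1\iff d_{X_0}\leq d_X$) and it holds up.

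One point in the interpolation step is handled a bit loosely. You apply Theorem~\ref{thm:interpolation-bound} ``on the complemented closed subspace $\{\sigma:\tr_X\sigma=0\}$'' with $C_0=2$, $C_1=\sqrt{d_{X_0}}$ and implicitly take the interpolation norm on the subspace to equal the restriction of $\|\cdot\|_{(1,\alpha)}$. But your $\alpha=1$ argument actually gives $C_0=1$ ($\mathbb E_U\|\Lambda_U\sigma\|_1\le\|\sigma\|_1=\|\sigma\|_{(1,1)}$; the $\le 2\|\sigma\|_{(1,1)}$ is a gratuitous doubling), and the complex interpolation of the subspace $\{\tr_X\sigma=0\}$ does \emph{not} come for free: an optimal holomorphic family $\sigma_0(z)$ for $\|\sigma\|_{(1,\alpha)}$ will leave the subspace and must be pushed back by the projection $\pi(\rho)=\rho-(I_X/d_X)\otimes\tr_X\rho$, whose norm is $2$ in $S_1[S_1]$ but $1$ in $S_1[S_2]$ (the latter because $\pi$ acts orthogonally on the $Y$-factor of a Pisier factorization). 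Running the three-lines argument on $\pi\sigma_0(z)$ then produces exactly the factor $2^{1-\theta}=2^{2/\alpha-1}$. So your final bound is correct, but the $2^{2/\alpha-1}$ is really the cost of staying in the traceless subspace, not a genuine $\alpha=1$ endpoint constant; the way you split $C_0=2$, $K=1$ happens to give the right product but misattributes its origin. This is the same technicality the paper defers to \cite{MBGYA}; with the projection-norm bookkeeping made explicit your argument is complete.
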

	
	Proof of this theorem for $\alpha=1$ is easy. For $\alpha=2$, it is proven essentially based on ideas from~\cite{DBWR14}. Then, the key observation is that, by Theorem~\ref{thm:interpolation-bound}, from the boundary cases $\alpha\in \{1,2\}$ we can infer the theorem for all values of $1\leq \alpha\leq 2$.  See~\cite{MBGYA} for more details.

	Theorem~\ref{thm:decoupling} is called a decoupling theorem as it says that by applying a random unitary and a projection on a bipartite density matrix, the two parts become decoupled. Such decoupling results have several applications in information theory, e.g., in channel coding and privacy amplification~\cite{DBWR14, MBGYA, Sharma}.

	\section*{Acknowledgements} MMG is supported by a fund from MSRT. SB is thankful to Marco Tomamichel and Alexander McKinlay for discussions that motivated this work. The authors are also grateful to Paul Gondolf for notifying us that our bound in Theorem~\ref{thm:unif-contin-bound} diverges as $\alpha\to 1$. Finally, the authors are thankful to anonymous referees whose comments greatly improved the presentation of the paper.

	\appendix

	\section{Missing proofs from Sections~\ref{sec:operator-spaces} and~\ref{sec:NCVV}}\label{proofs}

	\begin{proof}[Proof of Proposition \ref{prop:h-tensor-R-C}]
		Part (i) is straightforward. 
		We prove (ii) in the special case of $\cH=\ell_2^d$; generalization to arbitrary $\cH$ is similar.  	
		
		We first show that $\mathcal{S}_\infty^d = \cC^d\otimes_h \cR^d$. Recall that $\cC^d$ is the space of column vectors, and $\cR^d$ is the space of row vectors. Then, for $v^c\in \cC^d$ and $w^r\in \cR^d$, their multiplication $v^cw^r$ makes sense and forms a $d\times d$ matrix, i.e., an element of $\mathcal{S}_\infty^d$. Define the linear map $\phi:  \cC^d\otimes_h \cR^d\to \mathcal{S}_\infty^d$ by 
		$$\phi(v^c\otimes w^r) := v^cw^r.$$
		We claim that $\phi$ is a complete isometry between $ \cC^d\otimes_h \cR^d$ and $\mathcal{S}_\infty^d$. To this end, we prove that $\phi=\phi_1$ is an isometry; extending the proof for $\phi_n$, the $n$-th amplification of $\phi$, is immediate. 
		
		Let $Z\in \cC^d\otimes_h\cR^d$. Then,
		\begin{align*}
			\|Z\|_{ \cC^d\otimes_h\cR^d} & = \inf\big\{   \|X\|_{M_{1, m}(\cC^d)} \|Y\|_{M_{m, 1}(\cR^d)}  :\, X\in M_{1, m}(\cC^d) , Y\in M_{m, 1}(\cR^d), Z=X\odot Y \big\}.
		\end{align*}
		Put 
		$$X = [v_1^c \dots v_m^c]\quad \text{ and }\quad 
		Y=\begin{bmatrix}
			w_1^r\\
			\vdots\\
			w_m^r
		\end{bmatrix}.$$
		Then,  
		$$\|X\|_{M_{1, m}(\cC^d)} = \big\|[v_1^c \dots v_m^c]\big\|_{d, m}\quad \text{ and }\quad 
		\|Y\|_{M_{m, 1}(\cR^d)}=\left\|\begin{bmatrix}
			w_1^r\\
			\vdots\\
			w_m^r
		\end{bmatrix}\right\|_{m, d}.$$
		On the other hand, $Z=X\odot Y$ corresponds to 
		$\phi(Z) = \sum_i v^c_i w^r_i $.
		Therefore, $\|Z\|_{ \cC^d\otimes_h\cR^d}$ equals the infimum of 
		$$\big\|[v_1^c \dots v_m^c]\big\|_{d, m} 
		\left\|\begin{bmatrix}
			w^r_1\\
			\vdots\\
			w_m^r
		\end{bmatrix}\right\|_{m, d},$$
		subject to $\phi(Z) = \sum_i v^c_i w^r_i $. Note that this is nothing but the operator norm of $\phi(Z)$, i.e., $\|\phi(Z)\|_{\mathcal{S}_\infty^d}$.

		Next, we show that $\mathcal{S}_1^d = \cR^d\otimes_h \cC^d$. As before, thinking of $\cR^d$ and $\cC^d$ as row and column vectors, we can define the linear map $\psi:  \cR^d\otimes_h \cC^d\to \mathcal{S}_1^d$ by 
		$$\psi(v^r\otimes w^c):= w^cv^r.$$ 
		Again we only show that $\psi$ is an isometry. To see this, let $Z\in \cR^d\otimes_h \cC^d$. Then,
		$$
		\|Z\|_{\cR^d\otimes_h \cC^d} = \inf \{ \|X\|_{M_{1, m}(\cR^d)} \|Y\|_{M_{m, 1}(\cC^d)}  :\, X\in M_{1, m}(\cR^d) , Y\in M_{m, 1}(\cC^d), Z=X\odot Y\}.
		$$
		Put
		$$X = [v_1^r \dots v_m^r]\quad \text{ and }\quad 
		Y=\begin{bmatrix}
			w_1^c\\
			\vdots\\
			w_m^c
		\end{bmatrix}.$$
		Then,  
		\begin{align}\label{eq:M-R-d}
			\|X\|_{M_{1, m}(\cR^d)} = \big\|[v_1^r \dots v_m^r]\big\|_{1, md} = \Big(\sum_i \|v^r_i\|^2\Big)^{1/2},
		\end{align}
		and
		\begin{align}\label{eq:M-C-d} 
			\|Y\|_{M_{m, 1}(\cC^d)}=\left\|\begin{bmatrix}
				w^c_1\\
				\vdots\\
				w^c_m
			\end{bmatrix}\right\|_{md, 1}=\Big(\sum_i \|w^c_i\|^2\Big)^{1/2}.
		\end{align}
		On the other hand, $Z=X\odot Y$ corresponds to 
		$\psi(Z) = \sum_i w^c_iv^r_i$.
		Therefore, $\|Z\|_{ \cR^d\otimes_h\cC^d}$ equals the infimum of 
		$$\Big(\sum_i \|v^r_i\|^2\Big)^{1/2}\Big(\sum_i \|w^c_i\|^2\Big)^{1/2},$$
		subject to $\psi(Z) = \sum_i w^c_iv^r_i$. An application of H\"older's inequality shows that this is equal to  $\|\psi(Z)\|_{\mathcal{S}_1^d}$.  
		
		Part (iii) follows readily once we note that $M_{1, d}(\cC^d)$ and $M_{d, 1}(\cR^d)$ are naturally identified with the space of $d\times d$ matrices,  and by definition, their norms are the same as the norm of $\mathcal{S}_\infty^d$. The isometries $M_{1, d}(\cR^d) = M_{d, 1}(\cC^d) = \mathcal{S}_2^d$ are proven essentially in~\eqref{eq:M-R-d} and~\eqref{eq:M-C-d}.
\end{proof}
	
	\begin{proof}[Proof of Proposition \ref{interpolation}]
		(i) First note that by using the reiteration property (Proposition \ref{operatorspaceinterpolation}), we observe that $\mathcal{R}\big(1-\tfrac1{p_\theta}\big)=\Big[\mathcal{R}\big(1-\tfrac1{p_0}\big),\mathcal{R}\big(1-\tfrac1{p_1}\big)\Big]_\theta$ and $\mathcal{R}\big(\tfrac1{p_\theta}\big)=\Big[\mathcal{R}\big(\tfrac1{p_0}\big),\mathcal{R}\big(\tfrac1{p_1}\big)\Big]_\theta$. Then,
		\begin{align*}
			\mathcal{S}_{p_\theta}[\mathcal{X}_\theta] &=\mathcal{R}\big(1-\tfrac1{p_\theta}\big)\otimes_h \Big(\mathcal{X}_\theta\otimes_h\mathcal{R}\big(\tfrac1{p_\theta}\big)\Big) \qquad   \text{(Theorem~\ref{identification})}\\
			&=\mathcal{R}\big(1-\tfrac1{p_\theta}\big)\otimes_h\Big[\mathcal{X}_0\otimes_h\mathcal{R}\big(\tfrac1{p_0}\big),\mathcal{X}_1\otimes_h\mathcal{R}\big(\tfrac1{p_1}\big)\Big]_\theta \qquad   \text{(Theorem~\ref{KouPis})}\\
			&=\Big[\mathcal{R}\big(1-\tfrac1{p_0}\big)\otimes_h\mathcal{X}_0\otimes_h\mathcal{R}\big(\tfrac1{p_0}\big),\mathcal{R}\big(1-\tfrac1{p_1}\big)\otimes_h\mathcal{X}_1\otimes_h\mathcal{R}\big(\tfrac1{p_1}\big)\Big]_\theta \qquad   \text{(Theorem~\ref{KouPis})}\\
			&=\big[\mathcal{S}_{p_0}[\mathcal{X}_0], \, \mathcal{S}_{p_1}[\mathcal{X}_1]\big]_\theta \qquad   \text{(Theorem~\ref{identification})}.
		\end{align*}
		With Remark \ref{remark3.5} in mind, \eqref{Pis2} is obtained from \eqref{Pis1} by taking $\mathcal{X}_i=\mathcal{S}_{p_i}$ for $i=0,1$.
		
		(ii) Again, we prove the theorem in the finite-dimensional case, i.e., $\mathcal{S}_p^d[\mathcal{X}]^*=\mathcal{S}^d_{p'}[\mathcal{X}^*]$. We know that for operator spaces $\cY_1, \cY_2$ which \emph{one} of them is finite-dimensional we have $(\cY_1\otimes_h\cY_2)^* = \cY_1^*\otimes_h \cY_2^*$ completely isometrically~\cite[Corollary 5.8]{Pisier3}. Using this result twice we obtain
		\begin{align*}
			\mathcal{S}^d_p[\mathcal{X}]^* & = \Big(\cR^d(1-1/p)\otimes_h \cX \otimes_h \cR^d(1/p)\Big)^*\\
			& = \cR^d(1-1/p)^*\otimes_h \Big(\cX \otimes_h \cR^d(1/p)\Big)^*\\
			& = \cR^d(1-1/p)^*\otimes_h \cX^* \otimes_h \cR^d(1/p)^*,
		\end{align*}
		completely isometrically. Next, we have
		$$\cR^d(1/p)^* = [\cR^d, \cC^d]_{1/p}^* = \big[\big(\cR^d\big)^*, \big(\cC^d\big)^*\big]_{1/p} = \big[\cC^d, \cR^d\big]_{1/p} =\big[\cR^d, \cC^d\big]_{1/p'}= \cR^d(1/p'),$$
		completely isometrically. Here, in the second equality we use the fact that for a compatible couple $(\cY_1, \cY_2)$ of finite-dimensional operator spaces we have $[\cY_1, \cY_2]_\theta^* = [\cY_1^*, \cY_2^*]_\theta$ completely isometrically~\cite[Theorem 2.7.4]{Pisier3}. We also for the third equality use $(\cC^d)^*=\cR^d$ and $(\cR^d)^*=\cC^d$ that are easy to prove~\cite[p.~41]{Pisier3}. Then, we have
		\begin{align*}
			\mathcal{S}^d_p[\mathcal{X}]^* = \cR^d(1/p)\otimes_h \cX^* \otimes_h \cR^d(1/p') = S^d_{p'}[\cX^*],
		\end{align*}
		completely isometrically.
		
		(iii) For $p=\infty$ we have
		\begin{align*}
			\mathcal{S}_\infty[\mathcal{S}_\infty]
			&=\mathcal{C}\otimes_h \mathcal{S}_\infty\otimes_h \mathcal{R}\\
			&=(\mathcal{C}\otimes_h \mathcal{C})\otimes_h(\mathcal{R}\otimes_h \mathcal{R})\\
			&=(\ell_2\otimes_2\ell_2)_c\otimes_h(\ell_2\otimes_2\ell_2)_r \qquad \qquad   \text{(Proposition~\ref{prop:h-tensor-R-C}(i))}\\
			&=\mathcal{S}_\infty(\ell_2\otimes_2\ell_2) \qquad \qquad   \text{(Proposition~\ref{prop:h-tensor-R-C}(ii))}.
		\end{align*}
		The case $p=1$ is similar. The general case is now obtained by interpolation:
		\begin{align*}
			\mathcal{S}_p[\mathcal{S}_p]
			&=\Big[\mathcal{S}_\infty[\mathcal{S}_\infty],\mathcal{S}_1[\mathcal{S}_1]\Big]_{\frac{1}{p}} \qquad \qquad  \text{\eqref{Pis2}}\\
			&=\Big[\mathcal{S}_\infty(\ell_2\otimes_2\ell_2),\mathcal{S}_1(\ell_2\otimes_2\ell_2)\Big]_{\frac{1}{p}}\\
			&=\mathcal{S}_p(\ell_2\otimes_2\ell_2) \qquad \qquad  \text{\eqref{eq:Sp-interpolation}}.
		\end{align*}
		The proof is complete.
	\end{proof}

\begin{proof}[Proof of Remark \ref{lem:PSD-case}]
	(i) 	We need to show that for any $\epsilon>0$ there is a factorization $X=\hat A\hat Y\hat A^*$ 
	such that
	$$\|X\|_{\mathcal{S}_p[\mathcal{S}_q]} +\epsilon \geq \|\hat A\|_{\mathcal{S}_{2p}}\|\hat Y\|_{\mathcal{S}_\infty[\mathcal{S}_q]}\|\hat A^*\|_{\mathcal{S}_{2p}}.$$
	Following Theorem~\ref{Pisier's formula}, assume that $X=AYB$ is such that  $\|A\|_{\mathcal{S}_{2p}}= \|B\|_{\mathcal{S}_{2p}}$ and
	$$\|X\|_{\mathcal{S}_p[\mathcal{S}_q]}+\epsilon\geq \|A\|_{\mathcal{S}_{2p}}\|Y\|_{\mathcal{S}_\infty[\mathcal{S}_q]}\|B\|_{\mathcal{S}_{2p}}.$$
	Define $\hat A :=[A ~ B^*]$ and 
	\begin{equation*}
		\hat Y= \frac{1}{2}\begin{bmatrix}
			0 & Y\\
			Y^* & 0
		\end{bmatrix}.
	\end{equation*}

	Then we have $X=\hat A\hat Y\hat A^*$, and 
	\begin{align*}
		\|\hat A\|_{\mathcal{S}_{2p}}\|\hat Y\|_{\mathcal{S}_\infty[\mathcal{S}_q]}\|\hat A^*\|_{\mathcal{S}_{2p}}& = \frac{1}{2} \|AA^*+B^*B\|_{\mathcal{S}_{p}} \|Y\|_{\mathcal{S}_\infty[\mathcal{S}_q]}\\
		& \leq  \frac{1}{2}\big( \|AA^*\|_{\mathcal{S}_{p}}+\|B^*B\|_{\mathcal{S}_{p}}\big) \|Y\|_{\mathcal{S}_\infty[\mathcal{S}_q]}\\
		& = \|A\|_{\mathcal{S}_{2p}}\|B\|_{\mathcal{S}_{2p}}\|Y\|_{\mathcal{S}_\infty[\mathcal{S}_q]}\\
		&\leq \|X\|_{\mathcal{S}_p[\mathcal{S}_q]}+\epsilon,
	\end{align*}
	where in the third line we used $\|A\|_{\mathcal{S}_{2p}}= \|B\|_{\mathcal{S}_{2p}}$. 
	
	(ii) Following~\cite{DJKR}, when $X$ is positive, by H\"older's inequality we have
	\begin{align*}
		\|A X B\|_{\mathcal{S}_p}& \leq \|AX^{\frac 12}\|_{\mathcal{S}_{2p}}\|X^{\frac 12}B\|_{\mathcal{S}_{2p}}  \\
		&=  \|AXA^*\|_{\mathcal{S}_{p}}^{\frac 12}\|B^*XB\|_{\mathcal{S}_{p}}^{\frac 12}\\
		&\leq \max\big\{\|AXA^*\|_{\mathcal{S}_{p}},\, \|B^*XB\|_{\mathcal{S}_{p}}\big\}.
	\end{align*}
	This shows that the supremum in~\eqref{eq:(q,p)-norm} can be restricted to the case of $B=A^*$.  Moreover, as argued in the proof of Theorem~\ref{thm:Pisier-Junge-formula} we can apply polar decomposition to assume that $A$ is positive.
	
	To prove the assertion for~\eqref{eq:(p,q)-norm} we follow similar steps as in the proof of Theorem~\ref{thm:Pisier-Junge-formula}.  First, by part (i) of the remark, for any $\epsilon>0$ there exists a factorization $X=AYA^*$ such that 
	$$\|A\|_{\mathcal{S}_{2p}}^2\|Y\|_{\mathcal{S}_\infty[\mathcal{S}_q]}\leq \|X\|_{\mathcal{S}_p[\mathcal{S}_q]}+\epsilon.$$
	Moreover, once again applying the polar decomposition we can assume that $A>0$.
	Let $\hat A = A^{\frac pr}$ and $\hat Y = A^{\frac pq}Y A^{\frac pq}$. Then, we have $X= \hat A \hat Y\hat A^*$ and $\|\hat A\|_{\mathcal{S}_{2r}}= \|\hat A^*\|_{\mathcal{S}_{2r}}= \|A\|_{\mathcal{S}_{2p}}^{\frac pr}$.  Moreover, by Theorem \ref{Pisier's formula} we have
	$$
	\|Y\|_{\mathcal{S}_{\infty}[\mathcal{S}_q]}\geq \frac{\|A^{\frac pq}YA^{\frac pq}\|_{\mathcal{S}_q[\mathcal{S}_q]}}{\|A^{\frac pq}\|^2_{\mathcal{S}_{2q}}} = \frac{\|\hat Y\|_{\mathcal{S}_q[\mathcal{S}_q]}}{\|A\|_{\mathcal{S}_{2p}}^{\frac {2p}q}}.
	$$
	Therefore, 
	\begin{align*}
		\|\hat A\|_{\mathcal{S}_{2r}} \|\hat Y\|_{\mathcal{S}_q[\mathcal{S}_q]}\|\hat A^*\|_{\mathcal{S}_{2r}}& \leq \|A\|_{\mathcal{S}_{2p}}^{\frac {2p}r}\cdot \|A\|_{\mathcal{S}_{2p}}^{\frac {2p}q}\cdot \|Y\|_{\mathcal{S}_{\infty}[\mathcal{S}_q]} \\
		&= \|A\|_{\mathcal{S}_{2p}}^{2}\cdot \|Y\|_{\mathcal{S}_{\infty}[\mathcal{S}_q]} \\
		&\leq \|X\|_{\mathcal{S}_p[\mathcal{S}_q]}+\epsilon.
	\end{align*}
	The proof is complete. 
\end{proof}

	\section{Auxiliary lemmas in the proof of Theorem~\ref{KouPis}}\label{sec:Subharmonicity}
	
	\subsection{A selection lemma}
	
	Suppose that $\alpha$, $\beta$ and $\gamma$ are norms, respectively, on $M_{d,n}$, $M_{n,d'}$ and $M_{d,d'}$ such that
	\begin{equation*}
		\gamma(X)=\inf\big\{\alpha(A)\cdot \beta(B):X=A B, A\in M_{d,n}, B\in M_{n,d'}\big\}.
	\end{equation*}
	
	Our goal here is to show that \emph{almost optimizing} pairs $(A,B)$ can be chosen to be measurable functions.
	
	\begin{lemma}\label{selection lemma}
		Let $X(s)$ be a continuous function on $\mathbb{R}$ with values in $M_{d,d'}$. 
		Then for every $\epsilon>0$ there exist measurable functions $A(s),B(s)$ such that for every $s\in \mathbb{R}$,
		\begin{align*}
			&X(s)=A(s) B(s),\\
			&\alpha(A(s))\leq 1+\epsilon,\\
			&\beta(B(s))\leq \gamma(X(s)).
		\end{align*}
	\end{lemma}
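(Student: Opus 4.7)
The plan is to apply the Kuratowski--Ryll-Nardzewski selection theorem to the set-valued map
\[
\Phi(s) := \big\{(A,B) \in M_{d,n} \times M_{n,d'} :\ X(s) = AB,\ \alpha(A) \leq 1+\epsilon,\ \beta(B) \leq \gamma(X(s))\big\},
\]
once I have verified that $\Phi$ has nonempty compact values and closed graph.

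First I would show $\Phi(s) \neq \emptyset$ for each $s$. The infimum defining $\gamma(X(s))$ is attained: by homogeneity we may restrict to $\alpha(A)=1$, which is a compact constraint in the finite-dimensional space $M_{d,n}$, and continuity of matrix multiplication together with continuity of $\beta$ on $M_{n,d'}$ yield a minimizer $(A_0,B_0)$ with $X(s)=A_0 B_0$ and $\alpha(A_0)\beta(B_0)=\gamma(X(s))$. If $X(s) \neq 0$, rescale via $(A,B) = \bigl(\tfrac{1+\epsilon}{\alpha(A_0)} A_0,\, \tfrac{\alpha(A_0)}{1+\epsilon} B_0\bigr)$ to obtain $\alpha(A)=1+\epsilon$ and $\beta(B)=\gamma(X(s))/(1+\epsilon) \leq \gamma(X(s))$; if $X(s)=0$ then $(0,0)\in\Phi(s)$ trivially.

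Next I would verify that the graph
\[
\mathrm{Gr}(\Phi) = \big\{(s, A, B) \in \mathbb{R}\times M_{d,n}\times M_{n,d'} :\ X(s) = AB,\ \alpha(A) \leq 1+\epsilon,\ \beta(B) \leq \gamma(X(s))\big\}
\]
is closed. Each defining condition is preserved under limits: $X(s)$ is continuous by hypothesis, matrix multiplication is continuous, $\alpha$ and $\beta$ are continuous (being norms), and $s\mapsto \gamma(X(s))$ is continuous because $\gamma$ is a norm on a finite-dimensional space. Since $\Phi(s)$ is contained in the compact set $\{(A,B): \alpha(A)\leq 1+\epsilon,\ \beta(B)\leq \gamma(X(s))\}$, the values of $\Phi$ are compact.

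Combining compact values with closed graph yields upper semicontinuity of $\Phi$, so in particular $\{s:\Phi(s)\cap U \neq \emptyset\}$ is open (hence Borel) for every open $U \subseteq M_{d,n}\times M_{n,d'}$. The Kuratowski--Ryll-Nardzewski theorem then provides a Borel measurable selector $s \mapsto (A(s), B(s)) \in \Phi(s)$, which is precisely the pair required by the lemma. I expect the only nontrivial step to be the attainment of the infimum in the definition of $\gamma(X(s))$, which genuinely relies on the finite-dimensional (compact unit sphere) setting; once that is in hand, closedness of the graph and the selection theorem deliver measurability essentially for free.
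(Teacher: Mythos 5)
Your overall strategy matches the paper's: apply Kuratowski--Ryll-Nardzewski to the set-valued map of almost-optimizers after verifying nonemptiness, closedness of the graph, and the requisite measurability. Your explicit nonemptiness argument (rescaling an exact or near-minimizer) is a useful addition, since the paper simply asserts that $\psi(s)\neq\emptyset$. However, you should be a bit more careful there: for a fixed $A$ with $\alpha(A)=1$, the fiber $\{B: X(s)=AB\}$ is an affine subspace (possibly unbounded), so attainment of the infimum $\inf_{X(s)=AB}\alpha(A)\beta(B)$ requires a lower-semicontinuity argument in $A$ rather than bare compactness; alternatively, work with near-minimizers and rescale so that $\beta(B)<\gamma(X(s))$ strictly, which sidesteps attainment entirely.

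The genuine error is in the last step, where you write that upper semicontinuity implies ``$\{s:\Phi(s)\cap U\neq\emptyset\}$ is open for every open $U$.'' That is the defining property of \emph{lower} hemicontinuity, which does not follow from a closed graph: closed graph plus local compactness gives \emph{upper} semicontinuity, whose correct consequence is that $\{s:\Phi(s)\cap C\neq\emptyset\}$ is \emph{closed} for every \emph{closed} $C$. Fortunately this is exactly the hypothesis of the selection theorem as stated in the paper (measurability for closed test sets), so your argument is repaired simply by replacing ``open'' with ``closed'' in that sentence. You should also make explicit that compact values alone do not give USC --- you need the local uniform bound coming from the continuity of $s\mapsto\gamma(X(s))$, which ensures $\Phi$ is locally contained in a fixed compact set. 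Once these two points are corrected, your proof is sound and slightly sharper than the paper's, which settles for $F_\sigma$ rather than closed via a $\sigma$-compactness projection argument.
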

	
	To prove this lemma we make use of a special case of the Kuratowski--Ryll-Nardzewski selection theorem \cite{RS}:
	
	\begin{theorem} (Kuratowski--Ryll-Nardzewski selection theorem)
		Let $\emph{cl}(\mathbb{R}^n)$ be the set of all nonempty closed subsets of $\mathbb{R}^n$. Let $\psi:\mathbb{R}\rightarrow \emph{cl}(\mathbb{R}^n)$ be a multi-function such that the set $\{s\in \mathbb{R}:\psi(s)\cap U\neq \emptyset\}$ is measurable for every closed subset $U$ of $\mathbb{R}^n$. Then there is a measurable function $f:\mathbb{R}\rightarrow \mathbb{R}^n$ such that $f(s)\in \psi(s)$ for all $s\in \mathbb{R}$.
	\end{theorem}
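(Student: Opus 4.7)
The plan is to apply the Kuratowski--Ryll-Nardzewski selection theorem to the multifunction
\begin{equation*}
\psi(s) := \big\{(A, B) \in M_{d, n} \times M_{n, d'} : AB = X(s),\ \alpha(A) \leq 1+\epsilon,\ \beta(B) \leq \gamma(X(s))\big\},
\end{equation*}
after identifying $M_{d,n} \times M_{n,d'}$ with a real Euclidean space. If I can verify that $\psi$ has nonempty closed values and satisfies the theorem's measurability hypothesis, the resulting measurable selector $s \mapsto (A(s), B(s)) \in \psi(s)$ is exactly the factorization sought.

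Closedness of $\psi(s)$ is immediate because all three defining conditions are closed, using continuity of matrix multiplication and of the norms $\alpha, \beta$. For nonemptiness, I would first invoke finite-dimensional compactness to see that the infimum defining $\gamma(X(s))$ is attained: after a scaling that normalizes $\alpha(A)=\beta(B)$, the feasible set becomes bounded, hence compact, and the continuous objective $\alpha(A)\beta(B)$ attains its minimum. This yields a pair $(A_0, B_0)$ with $A_0 B_0 = X(s)$ and $\alpha(A_0)\beta(B_0) = \gamma(X(s))$. When $X(s) \neq 0$, the rescaling $(A_0, B_0) \mapsto \bigl(\lambda A_0,\, \lambda^{-1}B_0\bigr)$ with $\lambda = (1+\epsilon)/\alpha(A_0)$ produces an element of $\psi(s)$ since $\beta(\lambda^{-1}B_0) = \gamma(X(s))/(1+\epsilon) \leq \gamma(X(s))$; when $X(s) = 0$ the pair $(0,0)$ works since $\gamma$ is a norm and $\gamma(0)=0$.

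For the measurability hypothesis, I would prove the stronger statement that $\{s : \psi(s) \cap U \neq \emptyset\}$ is closed whenever $U$ is \emph{compact}; then exhausting an arbitrary closed $U$ by the compacts $U \cap \overline{B}_N(0)$ yields an $F_\sigma$ (hence Borel) set, as needed. For compact $U$, let $U' := U \cap \{\alpha(A)\leq 1+\epsilon\}$ (still compact) and consider the closed set
\begin{equation*}
E := \big\{(A, B, Y, r) \in U' \times M_{d,d'} \times [0, \infty) : AB = Y,\ \beta(B) \leq r\big\}.
\end{equation*}
Since the $(A,B)$-fiber is compact, the projection $P$ of $E$ onto the $(Y, r)$ coordinates is closed. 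Finally,
\begin{equation*}
\{s : \psi(s) \cap U \neq \emptyset\} = \big\{s : (X(s), \gamma(X(s))) \in P\big\},
\end{equation*}
which is closed as the preimage of $P$ under the continuous map $s \mapsto (X(s), \gamma(X(s)))$ (continuity of $\gamma$ being automatic for a norm on a finite-dimensional space).

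The main technical point, I expect, is this measurability step, since in general the projection of a closed set need not be closed. The reason it works here is exactly the boundedness forced by $\alpha(A) \leq 1+\epsilon$ together with the $\sigma$-compactness of $U$, which between them supply enough compactness in the variable projected out to preserve closedness. Once these hypotheses are in place, invoking the Kuratowski--Ryll-Nardzewski theorem concludes the proof.
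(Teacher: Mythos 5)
The statement you were asked to prove is the Kuratowski--Ryll-Nardzewski selection theorem itself, which the paper quotes from \cite{RS} without proof. Your proposal does not prove it: it takes that theorem as given and applies it to the multifunction $\psi(s)$ of almost-optimizer pairs, thereby reproving the paper's Lemma~\ref{selection lemma}. As a proof of the stated theorem this is circular, and the actual content of the selection theorem is entirely missing --- namely, the construction of a measurable selector from the measurability hypothesis alone. That construction typically fixes a countable dense set $\{x_k\}\subseteq\mathbb{R}^n$ and builds measurable approximate selectors $f_m$ with $\operatorname{dist}(f_m(s),\psi(s))<2^{-m}$ and $|f_{m+1}(s)-f_m(s)|<2^{-m+1}$, the measurability of the sets $\{s:\psi(s)\cap \overline{B}(x_k,2^{-m})\neq\emptyset\}$ being exactly what the hypothesis supplies, and then passes to the uniform limit. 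None of this appears in your writeup.

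That said, judged as a proof of Lemma~\ref{selection lemma}, your argument is correct and runs parallel to the paper's: same $\psi(s)$, same closedness check, and a measurability argument that, like the paper's, reduces to the fact that projections of $\sigma$-compact sets are $F_\sigma$. You differ only in the order of operations: you first restrict to compact $U$, exploit compactness of $U'$ to make the projection onto the $(Y,r)$-coordinates closed, and then exhaust a general closed $U$ by compacts, whereas the paper projects the closed set $\mathcal{E}\cap(\mathbb{R}\times U)$ onto the $s$-axis and decomposes it into compacts afterwards. One small inaccuracy in your nonemptiness step: the set $\{(A,B):AB=X(s),\ \alpha(A)=\beta(B)\}$ is not bounded; what is bounded is its intersection with a sublevel set of $\alpha(A)\beta(B)$, which is what you actually need to extract a convergent minimizing subsequence --- and in fact attainment of the infimum is not needed at all, since any factorization with $\alpha(A)\beta(B)\le(1+\epsilon)\gamma(X(s))$ already rescales into an element of $\psi(s)$.
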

	
	\begin{proof}[Proof of Lemma \ref{selection lemma}]
		Based on the above selection theorem we define $\psi(s)$ to be the set of all almost optimizer pairs $(A,B)$ satisfying
		\begin{align*}\label{equations}
			&X(s)=A B,\\
			&\alpha(A)\leq 1+\epsilon,\\
			&\beta(B)\leq \gamma(X(s)).
		\end{align*}
		We need to show that $\psi(\cdot)$ satisfies the hypotheses of the above selection theorem. It is straightforward to show that $\psi(s)$ is non-empty and closed for every $s$. Thus, we need to verify that $\{s\in \mathbb{R}:\psi(s)\cap U\neq \emptyset\}$ is measurable. To this end, consider the map
		\begin{equation*}
			\Psi:(s,A,B)\mapsto \Big(\gamma(X(s)-A B),\,\alpha(A),\,\gamma(X(s))-\beta(B)\Big).
		\end{equation*}
		Since $\Psi$ is continuous, the set
		\begin{equation*}
			\mathcal E:=\Psi^{-1}\Big(\{0\}\cross [0,1+\epsilon]\cross [0,+\infty)\Big)
		\end{equation*}
		is closed.
		Now, if $U$ is a closed set of pairs $(A,B)$, then $\mathcal{E}\cap (\mathbb{R}\cross U)$ is closed. Next, observe that 
		\begin{equation*}
			\{s\in \mathbb{R}:\psi(s)\cap U\neq \emptyset\}=\pi_1(\mathcal{E}\cap (\mathbb{R}\cross U)),
		\end{equation*}
		where $\pi_1$ denotes the projection map onto the first component. Therefore, to see the measurability of $\{s\in \mathbb{R}:\psi(s)\cap U\neq \emptyset\}$, it suffices to prove that $\pi_1(\mathcal{E}\cap (\mathbb{R}\cross U))$ is measurable.
		
		Note that any Euclidean space can be written as a countable union of compact sets. Then, since the intersection of a closed and a compact set is compact, the closed set $\mathcal{E}\cap (\mathbb{R}\cross U)$ can be written as
		\begin{equation*}
			\mathcal{E}\cap (\mathbb{R}\cross U)=\bigcup_{i=1}^\infty K_i
		\end{equation*}
		where $K_i$'s are compact. On the other hand, since a continuous image of a compact set is compact, $\pi(K_i)$'s are compact too. This implies that
		\begin{equation*}
			\pi_1(\mathcal{E}\cap (\mathbb{R}\cross U))=\bigcup_{i=1}^\infty \pi_1(K_i)
		\end{equation*}
		is a countable union of compact sets, and is therefore measurable.
	\end{proof}
	
	\subsection{A subharmonicity lemma}
	
	Let $\mathbb{D}$ be the unit disk in the complex plane and $\partial \mathbb{D}$ be its boundary.   It is well-known that having a sufficiently nice function $f$ on $\partial \mathbb{D}$, there is a unique \emph{harmonic} function on $\mathbb D$ that matches $f$ on the boundary. This harmonic function can be represented in terms of  the \emph{Poisson kernel} for $\mathbb{D}$, that is given by
	$$
	P(z,e^{it}):=\Re\left(\frac{e^{it}+z}{e^{it}-z}\right)=\frac{1-|z|^2}{|e^{it}-z|^2},
	$$
	for $z=re^{i\theta}\in \mathbb{D},e^{it}\in \partial \mathbb{D}$.
	In particular, if $f\in L_1(\partial \mathbb{D})$ and
	\begin{equation}\label{eq:Poisson-D}
		F(z):=\frac{1}{2\pi}\int_{-\pi}^{\pi} f(e^{it}) P(z,e^{it})\dd t,
	\end{equation}
	then,
	$$
	\lim_{r\rightarrow 1} F(re^{i\theta})=f(e^{i\theta}), \qquad \text{a.e. on} \ \partial \mathbb{D}.
	$$
	Recall that $\mathbb{S}:=\{z\in\mathbb{C}:0\leq\text{Re}(z)\leq1\}$. The Poisson kernel for $\mathbb{S}$ may be obtained by carrying $P$ over to $\mathbb{S}$ using the conformal map $\phi:\mathbb{D}\rightarrow \mathbb{S}$ 
	defined by\footnote{See, e.g., \cite{Rudin,SS} for more information on the Poisson kernel as well as conformal maps.}
	\begin{equation}\label{eq:conformal-map}
		\phi(z)=\frac{1}{i\pi}\log(i\frac{1+z}{1-z}).
	\end{equation}
	Indeed, for any  holomorphic function $f:\mathbb{S}\rightarrow \mathbb{C}$, the function $f\circ \phi:\mathbb{D}\to \C$ is also holomorphic. Thus, using~\eqref{eq:Poisson-D} we have  
	$$
	f(z)=\frac{1}{2\pi}\int_{-\pi}^{\pi} f\circ \phi(e^{it}) \cdot P\big(\phi^{-1}(z),e^{it}\big)\dd t.
	$$
	We note that $\phi$ maps the boundary of $\mathbb{D}$ onto the boundary of $\mathbb{S}$. Then, letting $\phi(e^{it})=b+is$ with $b\in \{0,1\}$ and $s\in \R$, and rewriting the integral in terms of $s$, we find that 
	\begin{equation*}
		f(z)=\sum_{b=0}^1 \int_{-\infty}^{+\infty} f(b+is)\cdot\frac{P\big(\phi^{-1}(z),\phi^{-1}(b+is)\big)}{2\cosh(\pi s)}\dd s.
	\end{equation*}
	Consequently, the Poisson kernel of $\mathbb{S}$ is given by the functions $P_b(x+iy,s)$ for $b=0,1$, where
	\begin{align*}
		P_b(x+iy,s) &:=\frac{P\big(\phi^{-1}(z),\phi^{-1}(b+is)\big)}{2\cosh(\pi s)}\\
		&= \frac{\sin(\pi x)}{2\big(\cosh(\pi(y-s))-\cos(\pi(x-b))\big)}.
	\end{align*}
	We note that $P_b(x+iy,s)\geq0$ and it can be verified that
	\begin{equation*}
		\int_{-\infty}^{+\infty} P_0(x+iy,s)\dd s=1-x,
		\qquad \quad
		\int_{-\infty}^{+\infty} P_1(x+iy,s)\dd s=x.
	\end{equation*}

	\begin{lemma}\label{lem:sub-harmonicity}
		Let $(\cX_0, \cX_1)$ be a compatible couple and let $\cX_\theta= [\cX_0, \cX_1]_\theta$ for $0<\theta<1$. 
		\begin{itemize}
			\item[{\rm (i)}] Let $f(z)\in \mathscr{F}(\mathcal{X}_0,\mathcal{X}_1)$. Then we have
			\begin{equation*}
				\log\|f(\theta)\|_{\mathcal{X}_\theta}\leq \sum_{b=0}^1 \int_{-\infty}^{+\infty} P_b(\theta, s)\log\|f(b+is)\|_{\mathcal{X}_b}\dd s.
			\end{equation*}
			
			\item[{\rm (ii)}] If $\cX_0, \cX_1$ are finite-dimensional (or more generally, reflexive), then the same inequality holds whenever $f(z)$ is bounded on $\mathbb{S}$ and holomorphic in its interior (though not necessarily continuous up to the boundary).
		\end{itemize}
	\end{lemma}
	
	The first part of this lemma is proven in~\cite[Lemma 4.3.2]{BerghLofstrom}. Here, we prove the second part for the sake of completeness. 
	
	\begin{proof}[Proof of {\rm (ii)}]
		Let $\cX^*_\theta$ be the Banach dual of $\cX_\theta$. Then, there is $\ell\in \cX^*_\theta$ such that 
		$$\|\ell\|_{\cX^*_\theta}=1, \qquad \|f(\theta)\|_{\cX_\theta} = |\langle \ell, f(\theta)\rangle|.$$
		Since $\cX^*_\theta =[\cX^*_0, \cX^*_1]_\theta$, for any $\epsilon>0$ there is $g(z)$ such that $g(\theta)=\ell$ and 
		$$\max \big\{\sup_t \|g(it)\|_{\cX^*_0},  \sup_t \|g(1+it)\|_{\cX^*_1}\big\}\leq \|\ell\|_{\cX^*_\theta}+\epsilon=1+\epsilon.$$
		Now define $\psi(z):= \langle g(z), f(z)\rangle$. We note that $\psi(z)$ is holomorphic and bounded. Then, by the subharmonicity of holomorphic functions we have
		\begin{align*}
			\log\|f(\theta)\|_{\cX_\theta}& = \log|\psi(\theta)|\\
			& \leq \sum_{b=0}^1 \int_{-\infty}^{+\infty} P_b(\theta, s)\log|\psi(b+is)|\dd s\\
			& \leq \sum_{b=0}^1 \int_{-\infty}^{+\infty} P_b(\theta, s)\log \big(\|g(b+is)\|_{\cX_b^*}\cdot \|f(b+is)\|_{\cX_b}\big)\dd s\\
			& \leq \sum_{b=0}^1 \int_{-\infty}^{+\infty} P_b(\theta, s)\log \big((1+\epsilon) \|f(b+is)\|_{\cX_b}\big)\dd s\\
			& \leq \log(1+\epsilon)+ \sum_{b=0}^1 \int_{-\infty}^{+\infty} P_b(\theta, s)\log  \|f(b+is)\|_{\cX_b}\dd s.
		\end{align*}
		The desired inequality follows once we note that $\epsilon>0$ is arbitrary.
	\end{proof}


	\section{Factorization of operator-valued functions}\label{sec:Wiener--Masani}
	
	This section is dedicated to the Wiener--Masani theorem on factorization of operator-valued functions \cite{Wiener,WienerMasani,Devinatz}. This theorem is used in the proof of Theorem~\ref{KouPis}, and is one of the main technical tools in the development of the theory of operator-valued Schatten spaces. Thus, we devote this section to the statement and proof of this theorem. 
	We follow the approach of~\cite{Helson} for the proof, that is based on characterization of \emph{invariant subspaces} of the shift operator. To this end, we first need a number of ingredients.

	\subsection{Hardy spaces}
	Following the notation of Section~\ref{sec:interpolation}, let $\D$ be the unit disk in the complex plane and let $\partial \D$ be its boundary, equipped with the normalized Lebesgue measure. Then, $L_p= L_p(\partial \D)$ denotes the Banach space of all measurable functions $f:\partial{\D}\rightarrow \mathbb{C}$ such that the norm
	$$
	\|f\|_p:=
	\begin{cases}
		\left(\frac{1}{2\pi}\int_{-\pi}^{\pi} |f(e^{it})|^p \dd t\right)^\frac1{p},  ~~~~~\qquad\qquad 1 \leq p<\infty, \\[.1in]
		\esssup |f|, ~~~~~\qquad\qquad\qquad\qquad \ \ \ p=\infty,
	\end{cases}
	$$
	is finite. We note that by H\"older's inequality, we have $L_q\subseteq L_p$ for $p<q$.

	\begin{definition}
		The \emph{Hardy space} $H_p$ is the closed subspace of $L_p$ consisting of all functions $f\in L_p$ for which there is a function $F$ holomorphic in the interior of $\D$ whose \emph{radial limit} coincides with $f$ almost everywhere.\footnote{The radial limit of $F$ at $e^{it}\in \partial \D$ equals $\lim_{r\to 1^-} F(re^{it})$.}
	\end{definition}
	There is another description of $H_p$ in terms Fourier coefficients and Fourier series that goes as follows. For each function $f\in L_1$, let
	\begin{equation*}
		a_n=\frac{1}{2\pi}\int_{-\pi}^{\pi} f(e^{it})e^{-nit}\dd t,
	\end{equation*}
	and
	\begin{equation*}
		f(e^{it})\sim\sum_{n=-\infty}^{\infty}a_n e^{nit},
	\end{equation*}
	be the Fourier coefficients and the Fourier series of $f$, respectively. Then, $H_p$ is the subset of $L_p$ consisting of all functions $f$ whose Fourier coefficients $a_n$ vanish for $n<0$.
	
	Define the operator $\fS$ by $(\fS f)(e^{it})=e^{it}f(e^{it})$. Consequently, for $n\geq 1$ we have 
	$(\fS^n f)(e^{it}) = e^{nit}f(e^{it})$. 
	The operator $\fS$ is called the \emph{shift operator} since it shifts the Fourier coefficients.

	\begin{definition}\label{def:outer-scalar}
		A function $f\in H_2$ is said to be \emph{outer} if the linear span of the set $\{\fS^n f:n\geq0\}$ is dense in $H_2$. 
	\end{definition}
	
	It is well-known that every outer function $f$ can be written in the form
	\begin{equation*}
		f(z)=\gamma\exp\left(\frac{1}{2\pi}\int_{-\pi}^{\pi}\frac{e^{it}+z}{e^{it}-z}k(t)\dd t\right),
	\end{equation*}
	where $\gamma$ is a complex number of modulus $1$, and $k(t)=\log|f(e^{it})|$ almost everywhere; see, e.g.,~\cite{Markus,Rudin} for more information on outer functions.

	\subsection{Vector-valued Hardy spaces}
	
	Let $\mathcal{H}$ be a separable (usually finite-dimensional) Hilbert space. An $\mathcal{H}$-valued function $F:\partial{\D}\to \cH$ is said to be measurable if for every $v\in\mathcal{H}$ the complex function $e^{it}\mapsto \langle v,F(e^{it})\rangle  $ is measurable.
	Abusing the notation, let $L_p(\mathcal{H})=L_p(\partial \D, \mathcal{H})$ be the Banach space of all measurable functions $F:\partial{\D}\rightarrow \mathcal{H}$ such that the norm
	$$
	\|F\|_p:=
	\begin{cases}
		\left(\frac{1}{2\pi}\int_{-\pi}^{\pi} \|F(e^{it})\|^p \dd t\right)^\frac1{p},  ~~~~~\qquad\qquad 1 \leq p<\infty, \\[.1in]
		\esssup \|F(e^{it})\|, ~~~~~\qquad\qquad\qquad\qquad \ \ \ p=\infty
	\end{cases}
	$$
	is finite.\footnote{$\|F(e^{it})\|$ is the norm of vector $F(e^{it})\in \cH$.} It is worth noting that the above integral makes sense since the function $e^{it}\mapsto\|F(e^{it})\|$, as the supremum of measurable functions, is itself measurable. The case $p=2$ is of particular interest since $L_2(\mathcal{H})$ is a Hilbert space under the inner product
	\begin{equation*}
		\langle F,G\rangle  =\frac{1}{2\pi}\int_{-\pi}^{\pi}\big\langle F(e^{it}),G(e^{it})\big\rangle  \dd t.
	\end{equation*}
	
	Let $\{e_n:n=1,2,\cdots\}$ is an orthonormal basis for $\mathcal{H}$, and for $F\in L_2(\cH)$ define the \emph{coordinate functions} by
	\begin{equation}\label{eq:coordinate-functions}
		f_n(e^{it}):=\big\langle e_n,F(e^{it})\big\rangle.
	\end{equation}
	Then, we have
	\begin{equation*}
		\|F\|_2^2=\sum_n \|f_n\|_2^2,
	\end{equation*}
	where $\|f_n\|_2$ is the norm of $f_n$ in $L_2=L_2(\partial \D, \C)$.

	For any $F\in L_p(\mathcal{H})$, one may define the \emph{Fourier expansion} of $F$ as 
	\begin{equation}\label{eq:F-Fourier-expansion}
		F(e^{it})\sim \sum_{n=-\infty}^{\infty} \hat{a}_n\cdot e^{nit},
	\end{equation}
	where $\hat{a}_n\in\mathcal{H}$ is given by
	$$\hat a_n=\frac{1}{2\pi}\int_{-\pi}^{\pi} F(e^{it})e^{-nit}\dd t.$$
	We note that letting $\cH=\C$, we obtain the usual $L_p=L_p(\partial \D, \C)$ space with the usual Fourier transform.

	\begin{definition}
		Let $\mathcal{H}$ be a (separable) Hilbert space and $1\leq p\leq\infty$. The \emph{vector-valud Hardy space} $H_p(\mathcal{H})$ is defined to be the closed subspace of $L_p(\mathcal{H})$ consisting of all functions $F$ for which $\hat{a}_n=0$, for all $n<0$.
	\end{definition}
	
	Similar to the usual complex-valued Hardy space, $H_p(\mathcal{H})$ can also be characterized in terms of holomorphic functions. We say that a vector-valued function $\phi: \D\rightarrow\mathcal{H}$ is holomorphic if for every $v\in \cH$, the function $z\mapsto\langle v,\phi(z)\rangle $ is holomorphic. Then, $H_p(\cH)$ is the space of functions $F\in L_p(\mathcal{H})$ for which there is a holomorphic function $\phi: \D\rightarrow\mathcal{H}$ whose \emph{radial limit} coincides with $F$ almost everywhere, i.e.,
	$$F(e^{it}) = \lim_{r\to 1^+} \phi(re^{it}), \qquad \text{a.e.}$$

	\subsection{Invariant subspaces}
	
	The shift operator $\fS$ can also be defined on the space of vector-valued functions: $(\fS F)(e^{it})=e^{it}F(e^{it})$.

	\begin{definition}
		A closed subspace $\mathcal{M}$ of $L_2(\mathcal{H})$ is said to be \emph{invariant} if $\fS(\mathcal{M})\subseteq\mathcal{M}$, where $\fS(\mathcal{M}):=\{\fS F:\, F\in\mathcal{M}\}$. We say that $\mathcal{M}$ is \emph{doubly invariant} if $\fS(\mathcal{M})\subseteq\mathcal{M}$ and $\fS^{-1}(\mathcal{M})\subseteq\mathcal{M}$. Also, $\mathcal{M}$ is said to be \emph{simply invariant} if it is invariant but not doubly invariant. 
	\end{definition}

	We may also consider operator-valued (matrix-valued) functions. We say that such a function $U(e^{it})$ is measurable if the function $e^{it}\mapsto\langle v,U(e^{it}) w\rangle  $ is measurable for all vectors $v, w$.

	\begin{lemma}\label{lem:char-invariant}
		Let $\mathcal{M}$ be a simply invariant subspace of $L_2(\mathcal{H})$ that contains no nontrivial doubly invariant subspace. Then there exist an auxiliary Hilbert space $\mathcal{K}$ and a measurable operator-valued function $U$ such that
		\begin{itemize}	
			\item[{\rm (i)}] $U(e^{it}):\mathcal{K}\rightarrow\mathcal{H}$ is an isometry for any $e^{it}\in\partial{\D}$.
			
			\item[{\rm (ii)}] $\mathcal{M}$ consists of functions $e^{it}\mapsto U(e^{it})F(e^{it})$ for $F\in H_2(\mathcal{K})$.
		\end{itemize}	
	\end{lemma}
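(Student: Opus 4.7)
\medskip

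\noindent\textbf{Proof proposal.} The plan is to follow the classical Beurling--Lax--Halmos strategy, isolating a \emph{wandering subspace} inside $\mathcal{M}$ whose translates under $\fS$ span $\mathcal{M}$, and then converting that wandering subspace into the operator-valued multiplier $U$. Since $\fS$ is a unitary on $L_2(\mathcal{H})$ and $\mathcal{M}$ is simply invariant, $\fS(\mathcal{M})$ is a proper closed subspace of $\mathcal{M}$, so the \emph{wandering subspace}
\[
\mathcal{W}:=\mathcal{M}\ominus \fS(\mathcal{M})
\]
is non-zero. Because $\fS$ is an isometry, the subspaces $\fS^n\mathcal{W}$, $n\ge 0$, are mutually orthogonal inside $L_2(\mathcal{H})$, and $\bigoplus_{k=0}^{n-1}\fS^k\mathcal{W}=\mathcal{M}\ominus \fS^n(\mathcal{M})$.

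Next, I would exploit the hypothesis that $\mathcal{M}$ contains no nontrivial doubly invariant subspace. A direct check shows that $\mathcal{N}:=\bigcap_{n\ge 0}\fS^n(\mathcal{M})$ is doubly invariant (the $\fS$-invariance is immediate from $\fS(\mathcal{M})\subseteq \mathcal{M}$, and $\fS^{-1}$-invariance follows by writing any element of $\mathcal{N}$ as $\fS m$ with $m\in\mathcal{M}$). Hence $\mathcal{N}=\{0\}$, which yields the telescoping decomposition
\[
\mathcal{M}=\bigoplus_{n=0}^{\infty}\fS^n\mathcal{W}.
\]

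The core computation, which I expect to be the main technical step, is the following identity for $w,w'\in\mathcal{W}$:
\[
\bigl\langle w(e^{it}),\,w'(e^{it})\bigr\rangle_{\mathcal{H}} = \bigl\langle w,\,w'\bigr\rangle_{L_2(\mathcal{H})}\qquad \text{for a.e.\ }e^{it}\in\partial\mathbb{D}.
\]
To see this, note that for $n\ge 1$, $\fS^n w'\in\fS(\mathcal{M})$, so $w\perp \fS^n w'$; the same argument with the roles reversed gives $w\perp\fS^{-n}w'$ for $n\ge 1$. Consequently, the $n$-th Fourier coefficient of the scalar function $\varphi(e^{it}):=\langle w(e^{it}),w'(e^{it})\rangle$ equals $\langle w,\fS^{-n}w'\rangle=0$ for every $n\ne 0$, while the $n=0$ coefficient equals $\langle w,w'\rangle_{L_2}$. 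Since $\varphi\in L_1$, it coincides a.e.\ with this constant. The delicate point is to arrange this identity simultaneously for \emph{all} pairs in a countable orthonormal basis $\{w_j\}_{j}$ of $\mathcal{W}$, discarding a single null set outside of which $\{w_j(e^{it})\}$ is an orthonormal family in $\mathcal{H}$.

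With this in hand, let $\mathcal{K}$ be a Hilbert space with orthonormal basis $\{k_j\}$ of the same cardinality as $\{w_j\}$, and define
\[
U(e^{it})\colon\mathcal{K}\longrightarrow\mathcal{H},\qquad U(e^{it})k_j:=w_j(e^{it}),
\]
extended by linearity and continuity. Outside the null set above, $U(e^{it})$ is an isometry, and its measurability follows from the measurability of each coordinate function $e^{it}\mapsto \langle v,w_j(e^{it})\rangle$. Finally, any $F\in H_2(\mathcal{K})$ has a Fourier expansion $F(e^{it})=\sum_{n\ge 0}\hat a_n e^{nit}$ with $\hat a_n\in\mathcal{K}$; writing $\hat a_n=\sum_j c_{n,j}k_j$, the function $e^{it}\mapsto U(e^{it})F(e^{it})$ becomes $\sum_{n\ge 0}\sum_j c_{n,j}\,e^{nit}w_j(e^{it})=\sum_{n\ge 0}\fS^n\!\Bigl(\sum_j c_{n,j}w_j\Bigr)$, which lies in $\bigoplus_{n\ge 0}\fS^n\mathcal{W}=\mathcal{M}$. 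Conversely, the same expansion shows every element of $\mathcal{M}$ arises in this way, completing the identification in (ii).
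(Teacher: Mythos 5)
Your proposal is correct and takes essentially the same route as the paper: your wandering subspace $\mathcal{W}=\mathcal{M}\ominus\fS(\mathcal{M})$ is exactly the paper's $\mathcal{N}_0$, the telescoping decomposition $\mathcal{M}=\bigoplus_{n\ge 0}\fS^n\mathcal{W}$ (using $\bigcap_n\fS^n\mathcal{M}=0$) matches the paper's $\mathcal{M}=\bigoplus_n\mathcal{N}_n$, and your Fourier-coefficient argument for the pointwise a.e.\ orthonormality of $\{w_j(e^{it})\}$ is the same key computation that yields $\langle E_j(e^{it}),E_k(e^{it})\rangle=\delta_{jk}$ in the paper. The only difference is cosmetic: you state the Beurling--Lax--Halmos framing explicitly and define $U(e^{it})$ directly on basis vectors, while the paper first constructs the ambient operator $\hat U$ on $L_2(\mathcal{K})$ and then restricts.
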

	
	We note that vector-valued Hardy spaces $H_2(\mathcal{K})$ are simply invariant subspaces and do not contain any nontrivial doubly invariant subspace. The above lemma says that all such subspaces are essentially vector-valued Hardy spaces.

	\begin{proof}
		For each $n$, define $\mathcal{M}_n:=\fS^n(\mathcal{M})$. As $\mathcal{M}$ is invariant, we have
		\begin{equation*}
			\mathcal{M}=\mathcal{M}_0\supseteq\mathcal{M}_1\supseteq\mathcal{M}_2\supseteq\cdots.
		\end{equation*}
		We note that 
		$$\mathcal{M}_\infty=\bigcap\limits_{n\geq0}\mathcal{M}_n$$ 
		is doubly invariant, so by assumption $\mathcal{M}_\infty=0$. 
		
		For any $n\geq0$, let $\mathcal{N}_n$ be the orthogonal complement of $\mathcal{M}_{n+1}$ in $\mathcal{M}_n$ (with respect to the inner product of $L_2(\mathcal{H})$). Thus, we have $\mathcal{M}_n=\mathcal{N}_n\oplus\mathcal{M}_{n+1}$. Moreover, since $\mathcal{M}_\infty=0$, we have
		\begin{equation}\label{eq:M-N-direct-sum}
			\mathcal{M}=\mathcal{N}_0\oplus\mathcal{N}_1\oplus\mathcal{N}_2\oplus\cdots.
		\end{equation}
		
		Let $\mathcal B_0=\{E_1,E_2,\cdots\}$ be an orthonormal basis for $\mathcal{N}_0$, and note that $\mathcal B_n:=\fS^n(\mathcal B_0)=\{\fS^n E_k:k\geq1\}$, for $n\geq1$, is a subset of $\mathcal{M}_n\subseteq\mathcal{M}_1$. Thus, $\mathcal B_0$ is orthogonal to $\mathcal B_n$ for all $n\geq1$. This means that for every $j,k$ and $n\geq1$ we have
		\begin{equation*}
			\langle E_j,\fS^n E_k\rangle  =\frac{1}{2\pi}\int_{-\pi}^{\pi} e^{nit}\big\langle E_j(e^{it}),E_k(e^{it})\big\rangle  \dd t=0.
		\end{equation*}
		Moreover, as $\mathcal B_0$ is orthonormal, the same equation holds when $n=0$ and $j\neq k$. More generally, we observe that
		\begin{equation*}
			\langle \fS^n E_j,\fS^m E_k\rangle  =\frac{1}{2\pi}\int_{-\pi}^{\pi} e^{(m-n)it}\big\langle E_j(e^{it}),E_k(e^{it})\big\rangle  dt=\delta_{jk}\cdot\delta_{mn}.
		\end{equation*}
		This equation shows that all the nontrivial Fourier coefficients of the functions $e^{it}\mapsto\big\langle E_j(e^{it}),E_k(e^{it})\big\rangle  $ are zero, and hence it is almost everywhere identical to the constant function $\delta_{jk}$.
		
		We claim that for each $n$, the orthonormal set $\mathcal B_n$ is a basis for $\mathcal{N}_n$. To this end, we need to show that $\mathcal B_n\subseteq\mathcal{N}_n=\spn \mathcal B_n$. Letting $F\in \mathcal{M}_{n+1}$, by definition there is $G\in\mathcal{M}$ such that $F=\fS^{n+1} G$. On the other hand, for $\fS^n E_j\in \mathcal B_n$ we have
		\begin{equation*}
			\langle \fS^n E_j,F\rangle  =\langle \fS^n E_j,\fS^{n+1}G\rangle  =\langle E_j,\fS G\rangle  =0,
		\end{equation*}
		where we used the fact that $\fS G\in\mathcal{M}_1$ and that $E_j\in \mathcal{N}_0$. Thus, $\fS^nE_j\in \mathcal{M}_n$ and is orthogonal to $\mathcal{M}_{n+1}$, which means that $\fS^n E_j\in \mathcal{N}_n$. Therefore, $\mathcal B_n\subseteq \mathcal N_n$.
		
		Next, let $F\in \mathcal{M}_n$. Then, there is $G\in \mathcal{M}$ such that $F=\fS^n G$. Since $G\in \mathcal{M}=\mathcal{N}_0\oplus \mathcal{M}_1$, there are $G_0\in \mathcal{N}_0$ and $G_1\in \mathcal{M}$ such that $G=G_0+\fS G_1$. Therefore
		\begin{equation*}
			F=\fS^n G_0+\fS^{n+1}G_1.
		\end{equation*}
		By definition we have $\fS^{n+1}G_1\in \mathcal{M}_{n+1}$. Also, since $G_0\in \mathcal{N}_0=\spn \mathcal B_0$, we have $\fS^n G_0\in \spn \mathcal B_n$. As a result, $F\in \spn \mathcal B_n\oplus \mathcal{M}_{n+1}$ for any $F\in \mathcal{M}_n$. Thus, using $\mathcal B_n\subseteq \mathcal N_n$ proven above, and the definition of $\mathcal N_n$, we find that $\mathcal N_n=\spn \mathcal B_n$.
		
		Putting~\eqref{eq:M-N-direct-sum} and $\mathcal N_n=\spn \mathcal B_n$ together, we find that every $F\in \mathcal{M}$ can be expanded as
		\begin{equation}\label{eq:F-expansion-fS}
			F=\sum_j \sum_{n=0}^\infty a_{nj}\fS^n E_j, \ \ \  \quad \sum_{n,j} |a_{nj}|^2<\infty.
		\end{equation}
		Now, let $\mathcal{K}$ be a Hilbert space isomorphic to $\mathcal{N}_0$ with orthonormal basis $\{e_1,e_2,\cdots\}$. Every function in $L_2(\mathcal{K})$ can be written in the form $\sum f_j e_j$ with $f_j\in L_2$, $j=1, 2,\dots$, being coordinate functions (see \eqref{eq:coordinate-functions}).
		Define $\hat{U}:L_2(\mathcal{K})\rightarrow L_2(\mathcal{H})$ by
		\begin{equation*}
			\hat{U}\Big(\sum f_j e_j\Big)=\sum f_j E_j,
		\end{equation*}
		meaning that $\hat{U}\Big(\sum f_j e_j\Big)(e^{it})=\sum f_j(e^{it}) E_j(e^{it})$.
		Then, $\hat{U}$ is a well-defined linear operator. Moreover, since $\langle E_j(e^{it}),E_k(e^{it})\rangle  =\delta_{jk}$, almost everywhere, $\hat{U}$ is an isometry.

		For every $e^{it}$ we define a linear map $U(e^{it}):\mathcal{K}\rightarrow\mathcal{H}$ as follows. For every $v\in \cK$ let $\hat v\in L_2(\cK)$ be the constant function $\hat{v}(e^{it})=v$. Then, define 
		\begin{equation*}
			U(e^{it})v:=(\hat{U}\hat{v})(e^{it}).
		\end{equation*}
		It is easy to verify that $e^{it}\mapsto U(e^{it})$ is measurable.
		We claim that $U(e^{it})$ is an isometry for almost every $t$. To show this, fix $v, w\in \mathcal{K}$. Using the fact that $\hat{U}$ is an isometry, for every $n$ we obtain
		\begin{align}
			\big\langle \hat{U}\hat{v},\hat{U}(\fS^n\hat{w})\big\rangle  
			&=\langle \hat{v},\fS^n\hat{w}\rangle \nonumber \\
			&=\frac{1}{2\pi}\int_{-\pi}^{\pi} e^{nit}\langle v, w\rangle  \dd t \nonumber\\
			&=\frac{1}{2\pi}\langle v, w\rangle  \int_{-\pi}^{\pi} e^{nit}\dd t\nonumber\\
			&=\delta_{n0}\langle v, w\rangle  .
			\label{eq:inner-prod-Uv-Uw-0}
		\end{align}
		On the other hand, the definition of $\hat{U}$ implies that $\hat{U}(\fS^n \hat{w})=\fS^n\hat{U}(\hat{w})$. Therefore,
		\begin{align*}
			\langle \hat{U}\hat{v},\hat{U}(\fS^n\hat{w})\rangle  
			&=\langle \hat{U}\hat{v}, \fS^n\hat{U}(\hat{w})\rangle  \\
			&=\frac{1}{2\pi}\int_{-\pi}^{\pi} e^{nit}\langle \hat{U}\hat{v}(e^{it}),\hat{U}\hat{w}(e^{it})\rangle  \dd t.
		\end{align*}
		Consequently, comparing to the previous equation, all the nontrivial Fourier coefficients of the function $e^{it}\mapsto\langle \hat{U}\hat{v}(e^{it}),\hat{U}\hat{w}(e^{it})\rangle =\langle U(e^{it})v, U(e^{it})w\rangle   $ vanish. This means that this function is constant almost everywhere. Thus, we have, almost everywhere,
		\begin{align*}
			\langle U(e^{it})v, U(e^{it})w\rangle  
			&=\frac{1}{2\pi}\int_{-\pi}^{\pi}\langle U(e^{is})v, U(e^{is})w\rangle  \dd s\\
			&=\frac{1}{2\pi}\int_{-\pi}^{\pi}\langle \hat{U}\hat{v}(e^{is}), \hat{U}\hat{w}(e^{is})\rangle  \dd s\\
			&=\langle \hat{U}\hat{v}, \hat{U}\hat{w}\rangle  \\
			&=\langle v, w\rangle,  
		\end{align*}
		where in the last line we use~\eqref{eq:inner-prod-Uv-Uw-0}. This means that $U(e^{it})$ is an isometry almost everywhere (hence one can modify $U(e^{it})$ on a set of measure zero so that $U(e^{it})$ is an isometry for any $e^{it}$).

		We note that $H_2(\cK) = \big\{\sum f_je_j:\, f_j\in H_2, \|f_j\|_2^2<\infty\big\} $, and every $f_j\in H_2$ can be written  as  $f_j(e^{it})=\sum_{n\geq 0} a_{nj}e^{nit}$. Therefore,~\eqref{eq:F-expansion-fS} yields 
		\begin{align*}
			\hat{U}H_2(\mathcal{K})
			&=\Big\{\sum_j f_j E_j:f_j\in H_2(\C),\sum \|f_j\|_2^2<\infty\Big\}\\
			&=\Big\{\sum_j \sum_{n\geq0}a_{nj}\fS^nE_j:\sum |a_{nj}|^2<\infty\Big\}\\
			&=\mathcal{M}.
		\end{align*}
		Finally, from the definition of $U$ it follows that $U(e^{it})e_j=(\hat{U}e_j)(e^{it})=E_j(e^{it})$. Hence, for every $F=\sum_j f_je_j\in H_2(\mathcal{K})$ we obtain
		\begin{align*}
			(\hat{U}F)(e^{it})
			&=\Big(\sum f_j E_j\Big)(e^{it})\\
			&=\sum f_j(e^{it}) E_j(e^{it})\\
			&=\sum f_j(e^{it}) U(e^{it})e_j\\
			&=U(e^{it})\Big(\sum f_j(e^{it})e_j\Big)\\
			&=U(e^{it})F(e^{it}).
		\end{align*}
		That is, $\mathcal{M}$ consists of functions of the form $e^{it}\mapsto U(e^{it})F(e^{it})$ for $F\in H_2(\mathcal{K})$.
	\end{proof}
	
	\begin{remark}
		In the above lemma, if $\mathcal{H}$ is finite-dimensional, then so is $\mathcal{K}$ since $U(e^{it}):\mathcal{K}\rightarrow\mathcal{H}$ is an isometry almost everywhere.
	\end{remark}

	\subsection{Wiener--Masani factorization theorem}
	
	In the following we assume that $\cH$ is finite-dimensional with orthonormal basis $\{e_1, \dots, e_d\}$.  For every $e^{it}\in \partial \D$ we assume that  $T(e^{it}):\mathcal{H}\rightarrow\mathcal{H}$ is a linear map such that $e^{it}\mapsto T(e^{it})$ is an operator-valued measurable function. We further assume that entries of $T$ (in the basis $\{e_1, \dots, e_d\}$) are integrable, i.e., belong to $L_1$.
	
	Suppose that $T(e^{it})$ is positive, and uniformly bounded almost everywhere, i.e., there exists a constant $c>0$ such that 
	$$0\leq T(e^{it})\leq cI, \qquad \text{ almost everywhere}.$$
	Let $Q(e^{it})=\sqrt{T(e^{it})}$. We note that with the above assumptions $Q$ can be written as the limit of measurable functions (consider the truncated Taylor expansion of $x\mapsto\sqrt{x}$ at point $c$), and is measurable itself.
	We also note that $Q$ can be considered as an element of $B(L_2(\cH))$.  
	Indeed, for every $F\in L_2(\mathcal{H})$ we have $(QF)(e^{it})=Q(e^{it})F(e^{it})$ and
	\begin{align*}
		\|QF\|_2^2
		&=\frac{1}{2\pi}\int_{-\pi}^{\pi}\big\langle Q(e^{it})F(e^{it}),Q(e^{it})F(e^{it})\big\rangle \dd t\\
		&=\frac{1}{2\pi}\int_{-\pi}^{\pi}\langle F(e^{it}),T(e^{it})F(e^{it})\rangle  \dd t\\
		&\leq\frac{1}{2\pi}\int_{-\pi}^{\pi}c\langle F(e^{it}),F(e^{it})\rangle  \dd t\\
		&=c\|F\|_2^2.
	\end{align*}
	
	Define
	\begin{equation*}
		\mathcal{M}(T)=\overline{\spn\{\fS^n (Qe_j):n\geq0,\, j=1,\dots, d\}},
	\end{equation*}
	where $Qe_j\in L_2(\cH)$ is given by $(Qe_j)(e^{it}) = Q(e^{it})e_j$. Observe that by definition, $\mathcal{M}(T)$ is a closed and invariant subspace of $L_2(\mathcal{H})$. In fact, $\mathcal{M}(T)$ is the smallest invariant subspace of $L_2(\mathcal{H})$ generated by columns of $Q$.

	\begin{lemma}\label{invariant}
		Let $T(e^{it})$ be a measurable operator-valued function as above, and $\lambda(e^{it})$ be a positive measurable function such that
		\begin{equation}\label{eq:assumption-lambda}
			\log\lambda(e^{it})\in L_1,
		\end{equation}
		and
		\begin{equation*}
			0<\lambda(e^{it})I\leq T(e^{it})\leq cI, \qquad \text{a.e. on $\partial{\D}$},
		\end{equation*}
		for some $c>0$. Then $\mathcal{M}(T)$ does not contain any nontrivial doubly invariant subspace.

	\end{lemma}

	\begin{proof}
		Observe that by~\eqref{eq:assumption-lambda} there is $\psi\in H_2$ such that $|\psi(e^{it})|=\sqrt{\lambda(e^{it})}$. To construct such a function we use the Poisson kernel of $\D$ to define
		$$g(z) := \frac{1}{2\pi} \int_{-\pi}^\pi  \Re \frac{e^{it}+z}{e^{it}-z} \log \sqrt{\lambda(e^{it})} \dd t.$$
		We note that $g(e^{it})=\log \sqrt{\lambda(e^{it})}$ almost everywhere, and that $g(z)$ is harmonic. Then, considering its harmonic conjugate, there is a holomorphic function $\hat g(z)$ with $\Re \hat g(z) = g(z)$.  Now, letting $\psi(z) = e^{\hat g(z)}$, we have almost everywhere
		$$|\psi(e^{it})| = |e^{\hat g(it)}| = e^{\Re \hat g(it)} = e^{g(it)} = \sqrt{\lambda(e^{it})}.$$ 
		
		We say that $P\in H_2(\cH)$ is a \emph{polynomial} if its Fourier expansion~\eqref{eq:F-Fourier-expansion} consists of finitely many $n\geq0$ terms, i.e., $P$ is a polynomial in variable $e^{it}$ with coefficients in $\cH$. For such a polynomial $P\in H_2(\cH)$ we have $\psi P\in H_2(\cH)$ where $(\psi P)(z) = \psi(z)P(z)$. 
		Define
		\begin{equation*}
			\mathcal{N}=\overline{\spn\{\psi P:\text{$P$ polynomial}\}}.
		\end{equation*}
		Observe that $\mathcal{N}\subseteq H_2(\mathcal{H})$ is an invariant subspace. 
		
		Next, consider the subspace $\mathcal W=\spn\{Q P:\, \text{$P$  polynomial}\}\subseteq L_2(\cH)$ where as before $Q=\sqrt T$.  We note that by definition $\overline{\mathcal W}= \mathcal M(T)$.
		Define $U$ from $\mathcal W$ to $\mathcal{N}$ by $U(Q P)=\psi P$. Since $\sqrt{\lambda(e^{it})} I\leq Q(e^{it})$ almost everywhere, we have
		\begin{align*}
			\|\psi P\|^2_2 &  = \frac{1}{2\pi}\int_{-\pi}^\pi \big\|\psi(e^{it}) P(e^{it}) \big\|^2\\
			&  = \frac{1}{2\pi}\int_{-\pi}^\pi  \lambda(e^{it}) \big\| P(e^{it}) \big\|^2\\
			&  \leq \frac{1}{2\pi}\int_{-\pi}^\pi  \big\| (QP)(e^{it}) \big\|^2\\
			& = \|QP\|^2_2.
		\end{align*}
		This means that $\|U\|\leq1$, and $U$ can be extended to $U:\mathcal{M}(T)\rightarrow\mathcal{N}$ while $\|U\|\leq1$.
		
		We claim that $U$ is one-to-one. To prove this, suppose that $F\in\mathcal{M}(T)$ and $UF=0$. Since $F\in\mathcal{M}(T)$, there is a sequence of polynomials $\{P_n:\, n\geq 1\}\subset H_2(\cH)$ such that
		\begin{equation*}
			\lim_{n\rightarrow\infty}Q P_n=F \ \ \ \qquad \text{(in $L^2(\mathcal{H})$)},
		\end{equation*}
		and since $QF=0$, we have
		\begin{equation*}
			\lim_{n\rightarrow\infty}\psi P_n=0 \ \ \ \qquad \text{(in $L^2(\mathcal{H})$)}.
		\end{equation*}
		Define $R_m=\{e^{it}:\lambda(e^{it})\geq1/m\}$, and let $J_m$ be the characteristic function of $R_m$. We compute
		\begin{align*}
			0&\leq\frac{1}{2\pi}\int_{-\pi}^{\pi}\frac{1}{m} J_m(e^{it})\big\|P_n(e^{it})\big\|^2\,\dd t\\
			&\leq\frac{1}{2\pi}\int_{-\pi}^{\pi} \lambda(e^{it})J_m(e^{it})\big\|P_n(e^{it})\big\|^2\,\dd t\\
			&\leq\frac{1}{2\pi}\int_{-\pi}^{\pi} \lambda(e^{it})\big\|P_n(e^{it})\big\|^2\,\dd t\\
			&=\|\psi P_n\|_2^2.
		\end{align*}
		Therefore
		\begin{equation}\label{eq:limit-zero-all-m}
			\lim_{n\rightarrow\infty}\frac{1}{2\pi}\int_{-\pi}^{\pi} J_m(e^{it})\big\|P_n(e^{it})\big\|_2^2\,\dd t=0, \ \ \ \quad \forall m.
		\end{equation}
		On the other hand, we have
		\begin{align*}
			&\big\|F-QP_n\big\|_2^2\\
			&=\frac{1}{2\pi}\int_{-\pi}^{\pi} \big\|F(e^{it})-Q(e^{it})P_n(e^{it})\big\|^2\,\dd t\\
			&\geq\frac{1}{2\pi}\int_{-\pi}^{\pi} J_m(e^{it})\big\|F(e^{it})-Q(e^{it})P_n(e^{it})\big\|^2\,\dd t\\
			&\geq\frac{1}{2\pi}\int_{-\pi}^{\pi} J_m(e^{it})\Big(\|F(e^{it})\big\|-\|Q(e^{it})P_n(e^{it})\big\|\Big)^2\,\dd t\\
			&=\frac{1}{2\pi}\int_{-\pi}^{\pi} J_m(e^{it})\Big(\big\|F(e^{it})\big\|^2+\big\|Q(e^{it})P_n(e^{it})\big\|^2-2\big\|F(e^{it})\big\|\cdot \big\|Q(e^{it})P_n(e^{it})\big\|\Big)\dd t\\
			&\geq\frac{1}{2\pi}\int_{-\pi}^{\pi} J_m(e^{it})\big\|F(e^{it})\big\|^2\,\dd t-\frac{1}{\pi}\left(\int_{-\pi}^{\pi} \big\|F(e^{it})\big\|^2\,\dd t \int_{-\pi}^{\pi} J_m(e^{it})\big\|Q(e^{it})P_n(e^{it})\big\|^2\,\dd t\right)^{\frac{1}{2}}\\
			&\geq\frac{1}{2\pi}\int_{-\pi}^{\pi} J_m(e^{it})\big\|F(e^{it})\big\|^2\,\dd t-2\|F\|_2\cdot\frac{c}{2\pi}\int_{-\pi}^{\pi} J_m(e^{it})\big\|P_n(e^{it})\big\|^2\,\dd t,
		\end{align*}
		where the penultimate inequality follows by ignoring a non-negative term and the Cauchy-Schwarz inequality, and the last inequality follows from $Q(e^{it})\leq \sqrt cI$. Taking the limit of both sides as $n\rightarrow\infty$, using~\eqref{eq:limit-zero-all-m} and the non-negativity of the remaining term, we obtain
		\begin{equation*}
			\|J_m F\|_2^2=\frac{1}{2\pi}\int_{-\pi}^{\pi} J_m(e^{it})\big\|F(e^{it})\big\|^2\,\dd t=0, \ \ \ \quad \forall m.
		\end{equation*}
		Now, the monotone convergence theorem implies $\|F\|_2^2=\lim\limits_{m\rightarrow\infty} \|J_mF\|_2^2=0$.
		This means that $F=0$. Thus, $U:\mathcal{M}(T)\rightarrow\mathcal{N}$ is one-to-one, as claimed.
		
		Finally, suppose that $\mathcal{M}'\subseteq\mathcal{M}(T)$ is a nontrivial doubly invariant subspace. Then $\overline{U\mathcal{M}'}\subseteq\mathcal{N}\subseteq H_2(\mathcal{H})$ is a doubly invariant subspace. But $H_2(\mathcal{H})$ does not contain nontrivial doubly invariant subspaces. This implies that $\overline{U\mathcal{M}'}=0$. Hence, $\mathcal{M}'=0$ since $U$ is one-to-one.
	\end{proof}

	Similar to the scalar case (see Definition~\ref{def:outer-scalar}), an operator-valued function $A(z)$ is called outer if $\{A(z)P(z):\, P(z)\in H_2(\cH) \text{ polynomial}\}$ is dense in $H_2(\cH)$. Equivalently, $A(z)$ is outer if 
	$$\overline{\spn\{\fS^n (Ae_j):\, n\geq0,\, j=1,\dots, d\}} = H_2(\cH).$$

	\begin{lemma}[\cite{Markus}]\label{invert1}
		Let $A(z)$ be an outer operator-valued function in $H_\infty(\mathcal{H})$. If $A(e^{it})$ is invertible almost everywhere on $\partial{\D}$, and
		\begin{equation}\label{invert2}
			\log\|A^{-1}(e^{it})\|\in L_1,
		\end{equation}
		then $A(z)$ is invertible for all $z$ in the interior of $\D$.
	\end{lemma}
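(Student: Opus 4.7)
The plan is to bypass the Markus reference by giving a direct argument that only exploits outer-ness together with the continuity of point evaluation on $H_2(\cH)$; the integrability of $\log\|A^{-1}(e^{it})\|$ is actually not needed for pointwise invertibility in $\D$ and would only enter if one wanted to control $A^{-1}$ globally.

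I would argue by contradiction: suppose $A(z_0)$ fails to be invertible at some $z_0 \in \D$. Since $\cH$ is finite-dimensional (this is the setting in which Lemma~\ref{invert1} is used in the paper, e.g.\ through Corollary~\ref{corr:Wiener--Masani}), the adjoint $A(z_0)^*$ has a nontrivial kernel, so I can pick $w\in\cH$ with $w\neq 0$ and $\langle A(z_0)x, w\rangle = 0$ for all $x \in \cH$. Introduce the subspace
\begin{equation*}
\Mm := \big\{F \in H_2(\cH) : \langle F(z_0), w\rangle = 0\big\}.
\end{equation*}
Point evaluation $F \mapsto F(z_0)$ is a bounded linear map from $H_2(\cH)$ into $\cH$: in the scalar case one has $|f(z_0)| \leq (1-|z_0|^2)^{-1/2}\|f\|_2$ directly from the power series expansion, and this bound extends coordinatewise (with respect to any orthonormal basis of $\cH$) to the vector-valued setting. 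Consequently $\Mm$ is closed.

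For every polynomial $P\in H_2(\cH)$, the product $AP$ belongs to $H_2(\cH)$ (since $A\in H_\infty(\cH)$), and $(AP)(z_0) = A(z_0)P(z_0)$ lies in $\mathrm{range}\,A(z_0)$, so $\langle (AP)(z_0), w\rangle = 0$. Thus $AP \in \Mm$ for every polynomial $P$. By the outer hypothesis on $A$, the set of such products is dense in $H_2(\cH)$, and since $\Mm$ is closed this forces $\Mm = H_2(\cH)$.

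To close the argument, observe that the constant function $F\equiv w$ trivially lies in $H_2(\cH)$ and satisfies $\langle F(z_0), w\rangle = \|w\|^2 > 0$, so $F \notin \Mm$, a contradiction. Hence $A(z_0)$ is invertible for every $z_0 \in \D$. The only delicate point is the continuity of point evaluation at interior points, which is routine; beyond that, the argument is essentially a one-line application of the definition of outer. The hypotheses on $A^{-1}$ at the boundary, in particular $\log\|A^{-1}(e^{it})\|\in L_1$, are not invoked in the proof of pointwise invertibility, but they will be needed by the ambient results (e.g.\ Lemma~\ref{lem:char-invariant} and the Wiener--Masani corollary) that invoke Lemma~\ref{invert1}.
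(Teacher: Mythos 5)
Your proof is correct, and it takes a route that is genuinely different from — and considerably more economical than — the argument in the paper. The paper (following Markus) proceeds constructively: it uses the hypothesis $\log\|A^{-1}(e^{it})\|\in L_1$ to build a scalar outer function $\lambda$ with $|\lambda|=\|A^{-1}\|^{-1}$ on $\partial\D$, shows that the functionals $\Psi_h$ induced by $e^{-it}\overline{\lambda}A^{-1\,*}h$ annihilate $A\cdot H_2(\cH)$ and hence, by outer-ness, all of $H_2(\cH)$, and then tests against Cauchy kernels $(e^{it}-z)^{-1}g$ with $z\notin\overline\D$ to conclude that $\lambda(z)\,A^{-1}(z)$ extends to an $H_\infty$ operator-valued function; dividing by the nonvanishing $\lambda(z)$ finishes. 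In contrast, you run a reproducing-kernel argument by contradiction: if $A(z_0)$ were singular, a vector $w\perp\operatorname{range}A(z_0)$ would place the entire dense set $\{AP\}$ inside the proper closed hyperplane $\{F:\langle F(z_0),w\rangle=0\}$, which the constant function $w$ shows is not all of $H_2(\cH)$. Both arguments are valid for the finite-dimensional $\cH$ relevant to this paper. Your observation that the integrability hypothesis \eqref{invert2} is not needed for the stated conclusion is accurate: that hypothesis is doing real work only in the paper's constructive proof (and in Markus's more quantitative conclusion, which exhibits $\lambda(z)A^{-1}(z)$ in $H_\infty$, thereby controlling the growth of the inverse near $\partial\D$). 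The trade-off is that your argument is intrinsically finite-dimensional — it passes from non-invertibility to non-surjectivity to extract $w$, which fails in infinite dimensions where an injective, dense-range, non-invertible $A(z_0)$ can occur — whereas the paper's construction is phrased so as to survive in the separable infinite-dimensional setting. Within the scope in which the lemma is actually applied in this paper (Corollary~\ref{corr:Wiener--Masani}, finite-dimensional $\cH$), your shorter proof fully suffices.
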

	
	\begin{proof}
		Since $\|A^{-1}(e^{it})\|^{-1} \leq \|A(e^{it})\|$ almost everywhere on $\partial{\D}$ and $A(z) \in H_\infty(\mathcal{H})$, it is evident that the function $\tilde{\lambda}(e^{it}):=\|A^{-1}(e^{it})\|^{-1}$ lies in $L_\infty$. On the other hand, it follows from \eqref{invert2} that $\log\tilde{\lambda}(e^{it})\in L_1$. These facts imply that the outer function $\lambda(z)$, defined by
		$$
		\lambda(z):=\exp\left(\frac{1}{2\pi}\int_{-\pi}^{\pi}\frac{e^{it}+z}{e^{it}-z}\log\tilde{\lambda}(e^{it})\dd t\right),
		$$
		lies in $H_\infty$, and
		\begin{equation}\label{B11}
			\big\|\lambda(e^{it}) A^{-1}(e^{it})\big\|=1, \qquad \text{a.e.\ on $\partial{\D}$}.
		\end{equation}
		Let $h$ be an arbitrary but fixed vector in $\mathcal{H}$, and define $\psi_h(e^{it})=\overline{\lambda(e^{it})} A^{-1}(e^{it})^*h$. Then, $\psi_h \in L_2(\mathcal{H})$ since $\|\psi_h\|_{L_2(\mathcal{H})} \leq \|h\|_\mathcal{H}$ by \eqref{B11}.
		
		Now let $\Psi_h$ be the bounded linear functional on $L_2(\mathcal{H})$ corresponding to $e^{-it}\psi_h(e^{it}) \in L_2(\mathcal{H})$, i.e.,
		\begin{align*}
			\Psi_h(F)
			&=\big\langle e^{-it}\psi_h(e^{it}), F(e^{it}) \big\rangle_{L_2(\mathcal{H})}\\
			&=\frac{1}{2\pi}\int_{-\pi}^{\pi} e^{it}\big\langle \psi_h(e^{it}), F(e^{it}) \big\rangle_{\mathcal{H}}\dd t.
		\end{align*}
		We are going to show that $\Psi_h$ vanishes on $H_2(\mathcal{H})$. To this end, note first that the last expression above is indeed a Fourier coefficient, with negative index, of the function $\big\langle \psi_h(e^{it}), F(e^{it}) \big\rangle_{\mathcal{H}} \in H_2$. As a result, for every function $u(z) \in H_2(\mathcal{H})$ we have
		$$
		\Psi_h(F)=\frac{1}{2\pi}\int_{-\pi}^{\pi} e^{it}\big\langle \overline{\lambda(e^{it})}h , u(e^{it})\big\rangle_{\mathcal{H}}\dd t=0,
		$$
		where $F(z)=A(z)u(z)$, since $\big\langle \overline{\lambda(e^{it})}h, u(e^{it}) \big\rangle_{\mathcal{H}} \in H_2$. Thus, $\Psi_h$ vanishes on $A(z) H_2(\mathcal{H})$, which in turn implies that $\Psi_h$ vanishes on the entire $H_2(\mathcal{H})$ as by assumption $A(z)$ is outer. 
		
		Let $g$ be an arbitrary (but fixed) vector in $\mathcal{H}$, and $z \notin \D$. Then, the vector-valued function $F^{g}_z(e^{it}):=(e^{it}-z)^{-1}g$ is readily seen to be in $H_2(\mathcal{H})$. Hence, by the preceding paragraph  $\Psi_h(F^{g}_z)=0$, or,
		$$
		\frac{1}{2\pi}\int_{-\pi}^{\pi}\frac{\lambda(e^{it})\big\langle h, A^{-1}(e^{it})g \big\rangle_{\mathcal{H}}}{e^{it}-z}e^{it}\dd t=0.
		$$
		Consequently, letting $B(e^{it})=A^{-1}(e^{it})$, the function $\lambda(e^{it})\big\langle h, A^{-1}(e^{it})g \big\rangle_{\mathcal{H}}$ extends to a function $\lambda(z)\langle h, B(z)g \rangle_{\mathcal{H}}$ in $H_\infty$ via Poisson integral. Finally, since $\lambda(z)$ is outer, we have $\lambda(z)\neq 0$ for all $z$ in the interior of $\D$, and hence the function $\langle h, B(z)g \rangle_{\mathcal{H}}$ is holomorphic as well. It is easily verified that indeed $B(z)=A^{-1}(z)$ for all $z$ in the interior of $\D$.
	\end{proof}

	We can now state and prove the Wiener--Masani theorem. 
	
	\begin{theorem}[Wiener--Masani]
		Let $T(e^{it})$ be a measurable operator-valued function as above, and $\lambda(e^{it})$ be a positive measurable function such that
		\begin{equation}\label{hypo1}
			\log\lambda(e^{it})\in L_1,
		\end{equation}
		and
		\begin{equation}\label{hypo2}
			0<\lambda(e^{it})I\leq T(e^{it})\leq cI, \qquad \text{a.e. on $\partial{\D}$},
		\end{equation}
		for some $c>0$. Then there exists an outer operator-valued function $A(z)\in H_\infty(\mathcal{H})$ such that $A(z)$ is invertible for all $z$ in the interior of $\mathbb{D}$, and
		$$
		T(e^{it})=A^*(e^{it})A(e^{it}), \qquad \text{a.e. on $\partial{\D}$}.
		$$

	\end{theorem}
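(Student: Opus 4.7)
The plan is to use the invariant-subspace theory just developed. Let $Q(e^{it}) := \sqrt{T(e^{it})}$ and form the closed invariant subspace $\mathcal{M}(T) \subseteq L_2(\mathcal{H})$ generated under nonnegative shifts by the columns $Qe_1,\dots,Qe_d$. Since $T \geq \lambda I > 0$, we have $\mathcal{M}(T) \neq 0$, and the preceding lemma asserts that $\mathcal{M}(T)$ contains no nontrivial doubly invariant subspace; in particular, $\mathcal{M}(T)$ is simply invariant.

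Applying Lemma~\ref{lem:char-invariant} to $\mathcal{M}(T)$ produces a Hilbert space $\mathcal{K}$ (automatically finite-dimensional, since the isometry bound gives $\dim\mathcal{K} \leq \dim\mathcal{H}$) together with a measurable $U(e^{it}):\mathcal{K}\to\mathcal{H}$ that is an isometry a.e., such that $\mathcal{M}(T) = \{e^{it}\mapsto U(e^{it})F(e^{it}) : F\in H_2(\mathcal{K})\}$. For each $j$ we would write $Qe_j = UF_j$ with $F_j\in H_2(\mathcal{K})$, and define $A(e^{it}):\mathcal{H}\to\mathcal{K}$ by letting its $j$-th column be $F_j(e^{it})$. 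Pointwise a.e.\ this yields $U(e^{it})A(e^{it}) = Q(e^{it})$, and hence
$$A(e^{it})^* A(e^{it}) = A(e^{it})^* U(e^{it})^* U(e^{it}) A(e^{it}) = Q(e^{it})^* Q(e^{it}) = T(e^{it}),$$
a.e., using $U^* U = I_{\mathcal{K}}$.

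The core step is verifying that $A$ is outer. Since multiplication by $U$ commutes with the shift $\fS$ and is an isometric embedding of $L_2(\mathcal{K})$ with range $\mathcal{M}(T)$, the closed span $\mathcal{N} := \overline{\spn\{\fS^n F_j : n\geq 0,\, 1\leq j\leq d\}}$ satisfies $U\mathcal{N} = \mathcal{M}(T) = U H_2(\mathcal{K})$, forcing $\mathcal{N} = H_2(\mathcal{K})$, which is exactly outerness of $A$. The positivity $T \geq \lambda I$ makes $A(e^{it})$ injective a.e., so $\dim\mathcal{H} \leq \dim\mathcal{K}$, which combined with $\dim\mathcal{K} \leq \dim\mathcal{H}$ lets us identify $\mathcal{K}$ with $\mathcal{H}$. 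The upper bound $T \leq cI$ forces $\|A(e^{it})\|^2 \leq c$ a.e., and since $A\in H_2$ is bounded it lies in $H_\infty(\mathcal{H})$.

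To finish, I would apply Lemma~\ref{invert1}. From $A^*A = T \geq \lambda I > 0$ we get that $A(e^{it})$ is invertible a.e.\ with $\|A^{-1}(e^{it})\|^2 \leq 1/\lambda(e^{it})$, hence $\log\|A^{-1}(e^{it})\|$ is bounded above by $-\tfrac{1}{2}\log\lambda(e^{it})$, integrable by \eqref{hypo1}, and below by the constant $-\tfrac{1}{2}\log c$; thus $\log\|A^{-1}\| \in L_1$, and Lemma~\ref{invert1} yields invertibility of $A(z)$ for every $z\in\mathbb{D}$. The main obstacle is the outerness verification in the third paragraph, because it is the one step that requires transferring the density of the generators of $\mathcal{M}(T)$ through the isometry $U$; the rest of the proof is bookkeeping once Lemma~\ref{lem:char-invariant} is in hand.
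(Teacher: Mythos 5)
Your proposal is correct and follows essentially the same route as the paper's proof: form $\mathcal{M}(T)$ from $Q = \sqrt{T}$, apply the lemma ruling out nontrivial doubly invariant subspaces, use Lemma~\ref{lem:char-invariant} to get the isometry $U$ and write $Qe_j = UA_j$, verify $A^*A = T$ via $U^*U = I$, show $A$ is outer by transferring density through the isometry $U$ (the paper phrases this with an $\epsilon$-approximation by polynomials $P_n$ while you phrase it as $U\mathcal{N} = U H_2(\mathcal{K})$ together with injectivity of $U$ -- the same idea), and finish with Lemma~\ref{invert1} after checking $\log\|A^{-1}\|\in L_1$. Your write-up is in fact slightly more careful in two places: you explicitly note the two-sided bound on $\log\|A^{-1}\|$ needed for integrability (the paper only records the upper bound, leaving the lower bound from $T\le cI$ implicit), and you explain why $\mathcal{K}$ can be identified with $\mathcal{H}$ via the injectivity of $A$ a.e., which the paper glosses over.
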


	\begin{proof}
		By Lemma \ref{invariant}, $\mathcal{M}(T)$ does not contain any doubly invariant subspace. Therefore, by Lemma~\ref{lem:char-invariant}, there exist a Hilbert space $\mathcal{K}$ and a function $U(e^{it}):\mathcal{K}\rightarrow\mathcal{H}$, that is an isometry almost everywhere, such that $\mathcal{M}(T)=UH_2(\mathcal{K})$.
		
		Recall that $Qe_j\in\mathcal{M}(T)$ for every basis element $e_j\in \cH$. Then, for each $j$, there is $A_j\in H_2(\mathcal{K})$ such that $Qe_j=UA_j$. Define $A(e^{it}): \cH\to \cK$ by $A(e^{it})e_j=A_j(e^{it})$. Observe that $A$ is a holomorphic operator-valued function since $A_j\in H_2(\mathcal{K})$. Moreover, using the fact that $U$ is an isometry almost everywhere, we have
		\begin{align}\label{B12}
			A^*(e^{it}) A(e^{it})
			&=A^*(e^{it}) U^*(e^{it}) U(e^{it})A(e^{it}) \notag\\
			&=Q^*(e^{it}) Q(e^{it}) \notag\\
			&=T(e^{it}), \qquad \text{a.e.\ on $\partial{\D}$}.
		\end{align}
		
		Now we show that $A(e^{it})$ is outer. Let $F\in H_2(\cK)$ be arbitrary. Then, $UF\in UH_2(\cK)= \mathcal M(T)$, so for every $n\geq 1$ there is a polynomial $P_n$ such that $\|QP_n- UF\|_2\leq 1/n$. Next, using the fact that $U$ is an isometry almost everywhere we have
		$$\|AP_n- F\|_2= \|UAP_n- UF\|_2=\|QP_n- UF\|_2\leq 1/n.$$
		This shows that $\{AP: P \text{ polynomial}\}$ is dense in $\cH_2(\cK)$ and $A$ is outer.

		It follows from \eqref{B12} and our assumptions on $T(e^{it})$ that $A(e^{it})$ is invertible almost everywhere on $\partial{\D}$, and $\|A^{-1}(e^{it})\|^2=\|T^{-1}(e^{it})\|$. On the other hand, by \eqref{hypo2} we have
		\begin{align*}
			2\log \|A^{-1}(e^{it})\|
			&=\log \|T^{-1}(e^{it})\|\\
			&\leq \log \lambda(e^{it})^{-1}, \qquad \text{a.e. on $\partial{\D}$}.
		\end{align*}
		This, together with \eqref{hypo1}, yields that $\log \|A^{-1}(e^{it})\| \in L_1$. Finally, Lemma \ref{invert1} is applied to conclude that $A(z)$ is invertible for all $z$ in the interior of $\mathbb{D}$.
	\end{proof}
	
	The Wiener--Masani theorem can also be stated for operator-valued functions defined on the boundary of the strip $\mathbb S$. To this end, having measurable operator-valued functions $T_b(s)\in B(\cH)$, for $b=0,1$,
	we may use the conformal map~\eqref{eq:conformal-map} to transform them into measurable operator-valued function on $\partial \D$. Then, using the Wiener--Masani theorem, finding an appropriate holomorphic operator-valued function $A$, and pulling it back as a function on $\mathbb S$ through the conformal map, the desired factorization of $T_b$, $b=0,1$, is derived. 
	
	\begin{cor}\label{corr:Wiener--Masani}
		Let $T_b(b+is)$, $b=0,1$, be two measurable operator-valued functions, and $\lambda_b(b+is)$ be two positive measurable functions such that\footnote{The density $\frac{1}{\cosh(\pi s)}$ on the boundary of the strip appears here by carrying the uniform measure on the unit circle via the conformal map as explained above.}
		$$
		\log\lambda_b(b+is)\in L_1\left(\frac{\dd s}{\cosh(\pi s)}\right),
		$$
		and
		\begin{equation*}
			0< \lambda_b(b+is)I\leq T_b(b+is)\leq cI,  \qquad \text{a.e.\ on $\partial{\mathbb{S}}$},
		\end{equation*}
		for some $c>0$. Then there exists a bounded holomorphic operator-valued function $A(z)$ on $\mathbb{S}$ such that $A(z)$ is invertible for all $z$ in the interior of $\mathbb{S}$, and
		$$
		T_b(b+is)=A^*(b+is)A(b+is),  \qquad \text{a.e.\ on $\partial{\mathbb{S}}$}.
		$$

	\end{cor}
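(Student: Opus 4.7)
The plan is to reduce to the Wiener--Masani theorem already proved on the disk by transporting the problem to $\partial\mathbb{D}$ via the conformal map $\phi:\mathbb{D}\to\mathbb{S}$ from~\eqref{eq:conformal-map}, applying the disk version of the theorem, and pulling the resulting factor back to $\mathbb{S}$. This is precisely the program suggested in the paragraph preceding the corollary.

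First I would assemble the two boundary data into a single pair of functions on $\partial\mathbb{D}$. Since $\phi$ is a bijective homeomorphism between $\overline{\mathbb{D}}\setminus\{\pm 1\}$ and $\overline{\mathbb{S}}$ which sends the two open arcs of $\partial\mathbb{D}$ onto the two boundary lines of $\mathbb{S}$, defining
\[
\tilde T(e^{it}) := T_b(b+is), \qquad \tilde\lambda(e^{it}) := \lambda_b(b+is)
\]
whenever $\phi(e^{it})=b+is$ gives well-defined measurable operator- and scalar-valued functions, and the pointwise sandwich $0<\tilde\lambda I\le\tilde T\le cI$ transfers with no work.

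The step I expect to spend the most care on is verifying $\log\tilde\lambda\in L_1(\partial\mathbb{D})$. This is where the weight $\tfrac{ds}{\cosh(\pi s)}$ enters crucially: the Jacobian of the change of variable $e^{it}\leftrightarrow b+is$ produces exactly that factor, as one reads off from the derivation of the Poisson kernels $P_b$ of $\mathbb{S}$ in Section~\ref{sec:interpolation}. I expect this measure-theoretic bookkeeping to be the main (though routine) obstacle; once it is settled, the hypothesis $\log\lambda_b\in L_1(ds/\cosh(\pi s))$ is precisely what one needs to conclude $\int_{-\pi}^{\pi}|\log\tilde\lambda(e^{it})|\,dt<\infty$.

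With the hypotheses of the disk Wiener--Masani theorem verified, one obtains an outer $\tilde A\in H_\infty(\mathcal{H})$ that is invertible at every point of $\mathbb{D}$ with $\tilde T=\tilde A^*\tilde A$ almost everywhere on $\partial\mathbb{D}$. I would then set $A(z):=\tilde A(\phi^{-1}(z))$ for $z\in\mathbb{S}$: boundedness and holomorphy on $\mathbb{S}$ are automatic since $\phi^{-1}$ is holomorphic and $\tilde A\in H_\infty(\mathcal{H})$, and invertibility throughout $\mathbb{S}$ follows because $\phi^{-1}:\mathbb{S}\to\mathbb{D}$ is a bijection and $\tilde A$ is invertible on all of $\mathbb{D}$. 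The boundary identity $T_b(b+is)=A^*(b+is)A(b+is)$ then follows by carrying the radial boundary limits on $\partial\mathbb{D}$ over to $\partial\mathbb{S}$ through $\phi$, with nothing essentially new to verify beyond the same change of variable used above.
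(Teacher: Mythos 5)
Your proposal matches the paper's approach exactly: the paragraph preceding the corollary in the paper sketches precisely this reduction, namely transporting the boundary data to $\partial\mathbb{D}$ via the conformal map $\phi$, applying the disk version of the Wiener--Masani theorem, and pulling the outer factor back to $\mathbb{S}$. Your additional detail about the Jacobian of the change of variable producing the weight $\dd s/\cosh(\pi s)$ correctly identifies why that weight appears in the integrability hypothesis.
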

	

\end{document}